\documentclass{llncs}

\usepackage{amsthm}[2004/08/06]
\usepackage[pdftex]{graphicx}
\usepackage[english]{babel}
\usepackage{graphics}
\usepackage{ucs}
\usepackage[utf8x]{inputenc}
\usepackage{url}
\usepackage{ae,aecompl}
\usepackage{centernot}
\usepackage{cite}
\usepackage{amsmath}
\usepackage{amssymb}
\usepackage{fullpage}
\usepackage{datetime}
\usepackage{listings}
\usepackage{graphics,color}
\usepackage[dvipsnames]{xcolor}
\usepackage{multirow}
\usepackage{todonotes}

    \setcounter{topnumber}{2}
    \setcounter{bottomnumber}{2}
    \setcounter{totalnumber}{4}     
    \setcounter{dbltopnumber}{2}    

\graphicspath {{figures/}}

\newcommand{\sm}[1]{{\texttt{#1}}}
\newcommand{\spublic}{\sm{public}}
\newcommand{\sprivate}{\sm{private}}

\newcommand{\tbool}{\sm{bool}}
\newcommand{\tint}{\sm{int}}
\newcommand{\tfloat}{\sm{float}}
\newcommand{\tstring}{\sm{string}}
\newcommand{\pn}[1]{\mathit{#1}}
\newcommand{\pp}{\mathrel{::=}}
\newcommand{\oo}{\mathrel{\mid}}

\newcommand{\sempar}[1]{[\![{{#1}}]\!]}

\newcommand{\sempare}[1]{[\![{{#1}}]\!]^E}
\newcommand{\semparf}[1]{[\![{{#1}}]\!]^F}
\newcommand{\sempars}[1]{[\![{{#1}}]\!]^S}
\newcommand{\semparp}[1]{[\![{{#1}}]\!]^P}

\newcommand{\scln}{\sm{;}\ }

\newcommand{\screturn}[1]{{\sm{return}\ {#1}}}
\newcommand{\scdecl}[2]{\sm{\{} {#1} \scln {#2} \sm{\}}}
\newcommand{\sccall}[2]{{{#1}\sm{(}{#2}\sm{)}}}
\newcommand{\scassign}[2]{{#1}\ \sm{=}\ {#2}}

\newcommand{\scskip}{{\sm{skip}}}

\newcommand{\dle}[1]{{{#1}}^{PL}}
\newcommand{\sce}[1]{{{#1}}^{SC}}

\newcommand{\sempardl}[1]{\dle{\sempar{{#1}}}}
\newcommand{\semparsc}[1]{\sce{\sempar{{#1}}}}

\newcommand{\NN}{\mathbb{N}}

\newcommand{\set}[1]{\{{#1}\}}

\newcommand{\tm}[1]{\text{\texttt{{#1}}}}
\newcommand{\nt}[1]{\mathit{{#1}}}
\newcommand{\kleene}[1]{{#1}^{*}}
\newcommand{\kleenepos}[1]{{#1}^{+}}

\newcommand{\argument}{\tm{argument}}
\newcommand{\getTable}{\tm{getTable}}
\newcommand{\declassify}{\tm{declassify}}
\newcommand{\publish}{\tm{publish}}
\newcommand{\shuffle}{\tm{shuffle}}
\newcommand{\concat}{\tm{concat}}
\newcommand{\first}{\tm{markFirst}}
\newcommand{\unique}{\tm{unique}}
\newcommand{\filter}{\tm{filter}}
\newcommand{\join}{\tm{crossPr}}

\newcommand{\lbl}{\tm{label}}
\newcommand{\B}{\mathcal{B}}
\newcommand{\LL}{\mathcal{C}}

\newcommand{\transfun}{T}
\newcommand{\transfunp}{\transfun^{P}}
\newcommand{\transfunf}{\transfun^{F}}
\newcommand{\transfunc}{\transfun^{C}}

\newcommand{\transp}[1]{\transfunp({#1})}
\newcommand{\transf}[1]{\transfunf({#1})}
\newcommand{\transc}[1]{\transfunc({#1})}

\newcommand{\stransfun}{S}
\newcommand{\stransfunp}{\stransfun^{P}}
\newcommand{\stransfunf}{\stransfun^{F}}
\newcommand{\stransfunc}{\stransfun^{C}}
\newcommand{\stransfung}{\stransfun^{G}}
\newcommand{\stransfune}{\stransfun^{T}}

\newcommand{\stransp}[1]{\stransfunp({#1})}
\newcommand{\stransf}[1]{\stransfunf({#1})}
\newcommand{\stransc}[1]{\stransfunc({#1})}
\newcommand{\stransg}[1]{\stransfung({#1})}
\newcommand{\stranse}[1]{\stransfune({#1})}
\newcommand{\stransop}[1]{{#1}'}

\newcommand{\adornfun}{A}
\newcommand{\adornfunp}{\adornfun^{P}}
\newcommand{\adornfunf}{\adornfun^{F}}
\newcommand{\adornfunc}{\adornfun^{C}}
\newcommand{\adornfung}{\adornfun^{G}}

\newcommand{\adornp}[1]{\adornfunp({#1})}
\newcommand{\adornf}[2]{\adornfunf_{{#1}}({#2})}
\newcommand{\adornc}[2]{\adornfunc_{{#1}}({#2})}
\newcommand{\adorng}[2]{\adornfung_{{#1}}({#2})}

\newcommand{\ico}{\mathcal{I}}
\newcommand{\sico}{\hat{\mathcal{I}}}

\newcommand{\lpneg}{\tm{\textbackslash+}}
\newcommand{\lpand}{\tm{,}}
\newcommand{\lpor}{\tm{;}}
\newcommand{\imp}{\tm{:-}}

\newcommand{\lpend}{\tm{.}}

\newcommand{\goal}{\tm{goal}}

\newcommand{\lppow}{\tm{\^{}}}

\newcommand{\pred}{\mathsf{pred}}

\newcommand{\indnt}{\tm{\ \ \ \ }}

\newcommand{\choice}[1]{\mathsf{choose}_{{#1}}}
\newcommand{\declared}{\mathsf{decl}}

\newcommand{\vars}{\mathsf{vars}}
\newcommand{\fvars}{\mathsf{fvars}}
\newcommand{\schema}{\mathsf{schema}}

\newcommand{\size}{\mathsf{size}}
\newcommand{\arity}{\mathsf{arity}}

\newcommand{\infrule}[3][{}]{$\begin{array}{c} {#3} \\ \hline {#2}  \end{array}{#1}$}

\newcommand{\clauseset}{Set({\nt{clause}})}
\newcommand{\emptystate}{\emptyset}

\newcommand{\dom}{d}
\newcommand{\type}{t}

\newcommand{\R}{\mathcal{R}}
\newcommand{\RB}{\mathcal{RB}}

\newcommand{\univ}{\mathcal{U}}

\newcommand{\rel}{\mathbf{Rel}}

\newcommand{\vecset}[1]{\vals[{#1}]}

\newcommand{\derwrt}[1]{\stackrel{{#1}}{\sim}}
\newcommand{\tables}{\mathcal{T}}
\newcommand{\funset}[1]{\mathcal{F}_{{#1}}}

\newcommand{\vals}{\mathbf{Val}}
\newcommand{\vnames}{\mathbf{VName}}

\newcommand{\funs}{\rho}
\newcommand{\eqvs}[1]{\sim_{{#1}}}

\newcommand{\share}[1]{\langle\!\langle{#1}\rangle\!\rangle}

\renewcommand{\vec}[1]{\mathbf{#1}}
\newcommand{\mat}[1]{\mathbf{#1}}
\newcommand{\prot}[1]{\mathsf{#1}}

\newcommand{\relations}[1]{\rel_{{#1}}}

\title{PrivaLog: a privacy-aware logic programming language}

\author{Joosep J\"{a}\"{a}ger\inst{1,2} \and Alisa Pankova\inst{1}}
\institute{Cybernetica, Estonia\\
\email{\{joosep.jaager,alisa.pankova\}@cyber.ee}
}

\begin{document}

\maketitle

\begin{abstract}
Logic Programming (LP) is a subcategory of declarative programming that is considered to be relatively simple for non-programmers. LP developers focus on describing facts and rules of a logical derivation, and do not need to think about the algorithms actually implementing the derivation.

Secure multiparty computation (MPC) is a cryptographic technology that allows to perform computation on private data without actually seeing the data. In this paper, we bring together the notions of MPC and LP, allowing users to write privacy-preserving applications in logic programming language.
\end{abstract}

\section{Introduction}
\label{sec:intro}

Logic Programming (LP) is a simple, yet powerful formalism suitable for programming and knowledge representation~\cite{DBLP:books/daglib/0096993}. A typical logic program is represented as a set of facts and rules that derive new facts from the existing ones. Logic programming engine will use given facts and rules to perform logical reasoning, using internal mechanisms that the programmer does not need to describe. Program output can be either a single boolean value denoting whether a certain predicate (called a \emph{goal}) can be satisfied, or a set of variable valuations satisfying the goal.

Secure multiparty computation (MPC) is a cryptographic technology that allows to perform computation on data without actually seeing the data~\cite{UT:Kamm15}. A standard use case of MPC is when several distinct parties (e.g. companies) want to compute a function that includes all their data, but they do not want to disclose that data to each other. Secure MPC allows to perform the computation in such a way that the parties do not learn anything besides the final output, e.g. some aggregated statistics.

MPC applications are typically written in imperative languages~\cite{DBLP:conf/ecoop/BogdanovLR14, liu2015oblivm, fairplaymp, oblivc, picco, aby, DBLP:conf/pldi/NielsenS07} or functional languages~\cite{DBLP:conf/csfw/Mitchell0SZ12, wysteria, DBLP:conf/sp/LiuHSKH14}. A privacy-preserving programming language based on Datalog has been proposed in~\cite{lpsec1}. That short paper describes a general framework for transforming a Datalog program (with certain constraints) to a privacy-preserving application. In particular, it outputs a program that can be run on Sharemind~\cite{bogdanov2008sharemind} framework.

Sharemind is one of numerous frameworks that support deployments of MPC using special API call interfaces~\cite{archer2018keys}. The main protocol set~\cite{DBLP:journals/ijisec/BogdanovNTW12} of Sharemind is based on secret-sharing among three parties. The client who makes a query to a private database and receives the output can be a separate entity. Sharemind has numerous subprotocols available for use, e.g. oblivious shuffle~\cite{Laur:2011:ROD:2051002.2051027}. In this work, we do not invent yet another MPC platform, but provide a translator from a logic program to a Sharemind application. Our work is based on~\cite{lpsec1}, where the source logic program is translated to SecreC language~\cite{DBLP:conf/ecoop/BogdanovLR14}. SecreC is an imperative language with privacy type inference, which does not allow accidental private data leakage.

\paragraph{Contribution of this paper}

In this work, we extend and deepen the framework proposed in~\cite{lpsec1}. In particular, we do the following.
\begin{itemize}
\item Describe full end-to-end transformation from the source Datalog program to SecreC program.
\item Introduce some optimizations.
\item Implement the translator and evaluate its performance.
\end{itemize}
The translator is available for public use in a GitHub repository, and it is compatible with both the licensed Sharemind, as well as freely available Sharemind Emulator that does not require a license.

\section{Preliminaries}\label{sec:preliminaries}

\subsection{Notation}


Syntax of logic programs may depend on a particular language. Throughout this paper, we will be using the notation described in Table~\ref{tbl:syntax}. The notation for arithmetic operations and inequalities is quite standard and is not given in the table.

\begin{table*}[t]
\centering
\caption{Basic elements of a logic programming language}\label{tbl:syntax}
\begin{tabular}{| l | l | l |}
\hline
element & details & example \\
\hline
\hline
variable & name that starts with an upper case letter & X \\
\hline
unnamed  & underscore & \texttt{\_} \\
variable &  & \\
\hline
constant & name that starts with a lower case letter & ship \\
         & a number & 12 \\
\hline
logical  & AND & \texttt{p \lpand\ q} \\
operator & OR & \texttt{p \lpor\ q} \\
         & NOT & \texttt{\lpneg p} \\
\hline
comparison  & unification & \texttt{X = Y} \\
            & arithmetic equality & \texttt{X =:= Y} \\
            & arithmetic inequality & \texttt{X =/= Y} \\
\hline
fact & a predicate declaration & \texttt{p(a1,\ldots,an) \lpend} \\
\hline
rule & \texttt{p} is true if \texttt{q} is true          & \texttt{p \imp\ q \lpend} \\
\hline
goal & the query that produces the final output & \texttt{?- p(x1,\ldots,xn)\lpend} \\
\hline
\end{tabular}
\end{table*}


\subsection{Running example}\label{sec:example}

A logic program is a list of facts, and rules describing how new facts can be derived from the existing ones. The facts can be viewed as logical predicates, or as database relations. Let us start from a small example, inspired from~\cite{DBLP:conf/fase/TootsTYDGLMPPPT19}, which will be used as a running example throughout this paper.

Datalog distinguishes between \emph{extensional} predicate symbols (defined by facts) and \emph{intensional} predicate symbols (defined by rules). The corresponding database relations are denoted EDB and IDB respectively. Our example program has two EDB relations:
\begin{itemize}
\item \texttt{ship(name,latitude,longitude,speed,cargo,amount)} tells that there is a ship with a certain \texttt{name}, located at (\texttt{latitude},\texttt{longitude}), which moves with certain \texttt{speed} and carries a certain \texttt{amount} of \texttt{cargo}.
\item \texttt{port(name,latitude,longitude,capacity)} tells that a port with a certain \texttt{name} is located at (\texttt{latitude},\texttt{longitude}), which can accommodate as much cargo as fits into its \texttt{capacity}.
\end{itemize}
The logic program starts with facts corresponding to these two relations. Each fact ends with a period.
\begin{small}
\begin{verbatim}
ship(alfa, 270, 290, 40, onions, 10).
ship(beta, 180, 280, 30, garlic, 5).
...
port(alma, 0, 0, 10).
port(milka, 10, 10, 10).
...
\end{verbatim}
\end{small}

We now need to list the rules. The rules describe how IDB relations are derived from EDB relations, as well as the other IDB relations.

The IDB relation \texttt{reachability\_time(Ship,Port,Time)} is true if the \texttt{Ship} can reach the \texttt{Port} in certain \texttt{Time}.

\begin{small}
\begin{verbatim}
reachability_time(Ship,Port,Time) :-
    ship(Ship,X1,Y1,Speed,_,_),
    port(Port,X2,Y2,_),
    Time = sqrt((X1 - X2)^2 + (Y1 - Y2)^2) / Speed.
\end{verbatim}
\end{small}

The IDB relation \texttt{suitable\_port(Ship,Port)} is true if the \texttt{Port} can accommodate cargo of the \texttt{Ship}.

\begin{small}
\begin{verbatim}
suitable_port(Ship,Port) :-
    port(Port,_,_,Capacity),
    ship(Ship,_,_,_,_,CargoAmount),
    Capacity >= CargoAmount.
\end{verbatim}
\end{small}

The IDB relation \texttt{arrival(Ship,Port,CargoType,Time)} depends on the previous two relations, and it is true if the \texttt{Ship} carrying \texttt{CargoType} can reach a suitable \texttt{Port} within certain \texttt{Time}.
\begin{small}
\begin{verbatim}
arrival(Ship,Port,CargoType,Time) :-
    ship(Ship,_,_,_,CargoType,_),
    suitable_port(Ship,Port),
    reachability_time(Ship,Port,Time).
\end{verbatim}
\end{small}

Finally, there is a goal querying the relation \texttt{arrival} with certain \texttt{portname} and \texttt{cargotype}.
\begin{small}
\begin{verbatim}
?-arrival(Ship,portname,cargotype,Time).
\end{verbatim}
\end{small}
The program will output all pairs (\texttt{Ship},\texttt{Time}) where \texttt{Ship} carries \texttt{cargotype} and can reach the \texttt{port} within the \texttt{Time}. The names \texttt{cargotype} and \texttt{port} serve as placeholders for actual values that are treated as program input.

\subsection{Relational algebra}\label{sec:prelim:relalg}

In this paper, we will formalize logic programs treating logic predicates as relations. We follow the description of relational algebra from~\cite{DBLP:books/sp/CeriGT90}, where it is defined in logic programming context. In the relational model, data are organized using relations, which are defined as follows.
\begin{definition}[relation]
Let $D_1,\ldots, D_n$ be sets, called domains (not necessarily distinct). Let $D = D_1 \times \ldots \times D_n$ be the Cartesian product. A \emph{relation} defined on $D$ is any subset of $D$.
\end{definition}

We suppose domains to be finite sets of data. Unless stated otherwise, we will therefore assume finite relations.
\begin{definition}[tuple of a relation over domain]
A \emph{tuple of a relation} $\R$ over $D = D_1 \times \cdots \times D_n$ is a tuple $(d_1,\ldots,d_n)$, with $d_1 \in D_1,\ldots, d_n \in D_n$.
\end{definition}
The number of tuples in $\R$ is called the cardinality of $\R$. The empty set is a particular relation, the null (or empty) relation. Note that a relation is a set, therefore tuples in a relation are distinct, and the order of tuples is irrelevant. However, the order of elements in each tuple \emph{is} relevant, and each position has a special name called an attribute of relation. Let $\vnames$ be the set of variable names.
\begin{definition}[schema]
A \emph{schema} of a relation $\R$ is an ordered set of all its attributes $\set{a_1,\ldots,a_n} \subseteq \vnames^n$.
\end{definition}
We may refer to attributes of relations either by their name (denoted $\R[x]$, where $x \in \vnames$) or by the position (denoted $\R[i]$, where $i \in \NN$). Let $\univ_{\Gamma}$ denote the a universal relation (i.e. a relation that contains all possible tuples) with schema $\Gamma$. Let $\relations{\Gamma}$ be the set of all possible relations over schema $\Gamma$, i.e. the set of all subsets of $\univ_{\Gamma}$. Let $\emptyset_{\Gamma}$ denote an empty relation over schema $\Gamma$.

Relational Algebra is defined through several operators that apply to relations, and produce other relations. A relational algebra operator can be a selection (filter), projection, a cartesian product, union, intersection, difference, and renaming of columns. The semantics of these operators w.r.t. relation sets is given in Figure~\ref{fig:relops}. A filter is using a boolean expression $q: \R \to \set{true,false}$, which formally maps each row $r \in \R$ of a relation $\R$ to a boolean value. its semantics are given in Figure~\ref{fig:boolsemantics}, where function $\phi$ defines semantics of predicates not covered by propositional logic, e.g. inequalities.

\begin{figure}
\begin{align*}
\sempar{false}(r) &= false\\
\sempar{true}(r) &= true\\
\sempar{f(s_1,\ldots,s_k)}(r) &= \phi(f)(\sempar{s_1}\ r, \ldots, \sempar{s_k}\ r)\\
\sempar{q_1\wedge q_2}(r) &= \sempar{q_1}(r) \wedge \sempar{q_2}(r)\\
\sempar{q_1\vee q_2}(r) &= \sempar{q_1}(r) \vee \sempar{q_2}(r)\\
\sempar{\neg q}(r) &= \neg\sempar{q}(r)
\end{align*}
\caption{Semantics of boolean expressions}\label{fig:boolsemantics}
\end{figure}

\begin{figure}
\begin{align*}
\pi_{\Gamma}\ \R & = \bigcup_{\substack{\R' \in \univ_{\Gamma}\\ \forall x\in \Gamma: \R[x]=\R'[x]}} \R' & & \text{undefined, if } \Gamma\not\subseteq\schema(\R)\\
\sigma_{q}\ \R & = \set{r \in \R\ |\ \sempar{q}\ r = true} & & \text{undefined, if } \vars(q)\not\subseteq\schema(\R)\\
\R_1 \times \R_2 & =\set{r_1 \| r_2\ | r_1 \in \R_1, r_2 \in \R_2}  & & \text{undefined, if } \schema(\R_1)\cap\schema(\R_2)\not=\emptyset\\
\R_1 \cap \R_2 & = \R_1 \cap \R_2 \text{ for }\schema(\R_1) = \schema(\R_2) & & \text{undefined, if } \schema(\R_1)\not=\schema(\R_2)\\
\R_1 \cup \R_2 & = \R_1 \cup \R_2 \text{ for }\schema(\R_1) = \schema(\R_2) & & \text{undefined, if } \schema(\R_1)\not=\schema(\R_2)\\
\R_1 \setminus \R_2 & = \R_1 \setminus \R_2 \text{ for }\schema(\R_1) = \schema(\R_2) & & \text{undefined, if } \schema(\R_1)\not=\schema(\R_2)\\
\end{align*}
\caption{Semantics of relational operators w.r.t. relation sets}\label{fig:relops}
\end{figure}

When applying operators, we treat relations as set, and it may be unclear how the resulting schema looks like. We define schemas of relational expressions in~\ref{fig:relschema}.

\begin{figure}
\begin{align*}
\schema(\pi_{\Gamma}(\R)) & = \Gamma & & \text{undefined, if } \Gamma\not\subseteq\schema(\R)\\
\schema(\sigma_q(\R)) & = \schema(\R) & & \text{undefined, if } \vars(q)\not\subseteq\schema(\R)\\
\schema(\R_1\times \R_2) &= \schema(\R_1)\|\schema(\R_2) & & \text{undefined, if } \schema(\R_1)\cap\schema(\R_2)\not=\emptyset\\
\schema(\R_1\cup \R_2) & = \schema(\R_1) & & \text{undefined, if } \schema(\R_1)\not=\schema(\R_2)\\
\schema(\R_1\cap \R_2) & = \schema(\R_1) & & \text{undefined, if } \schema(\R_1)\not=\schema(\R_2)\\
\schema(\R_1 \setminus \R_2) & = \schema(\R_1) & & \text{undefined, if } \schema(\R_1)\not=\schema(\R_2)\\
\end{align*}
\caption{Semantics of relational operators w.r.t. relation schemas}\label{fig:relschema}
\end{figure}

In this work, we will use the following properties of relational algebra (without proofs):
\begin{proposition}[properties of relational algebra]\label{prop:relalgebra}
Let $\R$ be a relation. The following holds:
\begin{itemize}
\item $\sigma_{\neg q}(\R) = \R\ \setminus\ \sigma_{q}(\R)$;
\item $\sigma_{q_1 \wedge q_2}(\R) = \sigma_{q_1}(\R) \cap \sigma_{q_2}(\R)$;
\item $\sigma_{q_1 \vee q_2}(\R) = \sigma_{q_1}(\R) \cup \sigma_{q_2}(\R)$;
\item $\sigma_{q}(\R) = \sigma_{q}(\univ_{\Gamma}) \cap \R$ for $\schema(\R) \subseteq \Gamma$.
\end{itemize}
\end{proposition}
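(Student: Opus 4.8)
The plan is to prove all four identities by extensionality: since relations are sets of tuples, it suffices to show for an arbitrary tuple $r$ that $r$ lies in the left-hand side if and only if it lies in the right-hand side. Every membership test is then unfolded through the definition $\sigma_q(\R) = \set{r \in \R \mid \sempar{q}\ r = true}$ from Figure~\ref{fig:relops} and the clauses of Figure~\ref{fig:boolsemantics}, so that each set-theoretic identity is reduced to a purely propositional equivalence about the meta-level guard $r \in \R$ and the truth value $\sempar{q}(r)$. No induction on the structure of $q$ is needed, since each identity concerns a single top-level connective.

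For the conjunction and disjunction cases I would unfold directly. For the second item, $r \in \sigma_{q_1 \wedge q_2}(\R)$ means $r \in \R$ and $\sempar{q_1 \wedge q_2}(r) = true$, which by Figure~\ref{fig:boolsemantics} amounts to $r \in \R$ together with $\sempar{q_1}(r) = true$ and $\sempar{q_2}(r) = true$; duplicating the idempotent guard $r \in \R$ rewrites this as $(r \in \sigma_{q_1}(\R)) \wedge (r \in \sigma_{q_2}(\R))$, i.e. $r \in \sigma_{q_1}(\R) \cap \sigma_{q_2}(\R)$. The third item is identical with $\vee$ in place of $\wedge$, using that $r \in \R$ distributes over the meta-level disjunction. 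The first item is where two-valuedness enters: $r \in \sigma_{\neg q}(\R)$ iff $r \in \R$ and $\sempar{\neg q}(r) = \neg \sempar{q}(r) = true$, i.e. $\sempar{q}(r) = false$; since $r$ is already assumed to be in $\R$, this is exactly $r \in \R$ and $r \notin \sigma_q(\R)$, namely $r \in \R \setminus \sigma_q(\R)$.

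For the last item I would use that every relation is contained in the universal relation over its schema. Under the hypothesis $\schema(\R) \subseteq \Gamma$ every tuple of $\R$ is also a tuple of $\univ_\Gamma$, and $\sigma_q$ inspects only the attribute values read by $q$; hence $\sigma_q(\univ_\Gamma) \cap \R = \set{r \in \R \mid \sempar{q}(r) = true} = \sigma_q(\R)$, the universal relation contributing only tuples outside $\R$ that are immediately discarded by the intersection.

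The main obstacle is not the algebra but the well-definedness that underlies it. The negation case relies on $\sempar{q}(r)$ being total and strictly two-valued, so that ``fails $q$'' coincides with ``satisfies $\neg q$''; this in turn depends on the definedness side-conditions of Figure~\ref{fig:relops} (for the first three items $\vars(q) \subseteq \schema(\R)$, and for the last $\vars(q) \subseteq \Gamma$, so that $\sempar{q}$ is defined on the wider tuples of $\univ_\Gamma$). The other subtle point, to be recorded explicitly before the computation, is that the intersection in the final identity is only defined when the schemas are comparable, so one must first note that the schema condition makes $\sigma_q(\univ_\Gamma)$ and $\R$ intersectable; once this is in place the pointwise argument is routine.
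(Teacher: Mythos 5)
The paper states Proposition~\ref{prop:relalgebra} explicitly \emph{without} proof (``we will use the following properties of relational algebra (without proofs)''), so there is no in-paper argument to compare yours against; your proposal supplies the omitted justification, and it is correct. The extensionality strategy is the right one: each item reduces, via $\sigma_q(\R) = \set{r \in \R \mid \sempar{q}\,r = true}$ and Figure~\ref{fig:boolsemantics}, to a propositional tautology about the guard $r \in \R$ and the truth value $\sempar{q}(r)$, and your handling of the negation case through two-valuedness of $\sempar{q}$ is exactly the point that needs saying.

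One remark on the fourth item: your phrase ``every tuple of $\R$ is also a tuple of $\univ_\Gamma$'' is literally true only when $\schema(\R) = \Gamma$; under strict inclusion the tuples of $\R$ are shorter, and by Figure~\ref{fig:relops} the intersection $\sigma_q(\univ_\Gamma) \cap \R$ is then undefined. You do flag that the schemas must be made ``intersectable'', but it is worth stating the resolution explicitly: either restrict to $\schema(\R) = \Gamma$ (which is how the identity is actually invoked in Lemma~\ref{lm:formulasem}, where all operands already have schema $\Gamma$), or read $\R$ as implicitly padded to $\R \times \univ_{\Gamma \setminus \schema(\R)}$, the convention the paper uses elsewhere (e.g.\ in the clause for $\sempar{p_j(x_1,\ldots,x_n)}_{\Gamma}$ in Figure~\ref{fig:semantics}). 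With either reading, $\R \subseteq \univ_\Gamma$ holds and your pointwise computation goes through; this is a presentational gap in the statement of the proposition rather than in your argument.
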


\subsection{Deployment Model and the SecreC Language}\label{sec:secrec}

Sharemind platform performs computation on secret-shared data. Technically, the platform is installed on three separate servers. All data is shared among these servers in such a way that every server sees its part of the data as a random bitstring and cannot infer what the data actually is. This property is preserved during the computation, where specially designed protocols convert secret-shared inputs to secret-shared outputs. For this paper, it is not relevant how exactly the data is shared and how the underlying protocols work.

In this paper, we consider the following deployment model of Sharemind system:
\begin{itemize}
\item First of all, the are \emph{input parties} who provide the data. The data is stored in Sharemind platform in a secret-shared manner and is not visible to anyone.
\item There are three designated \emph{computing parties} (usually different from the input parties) who will perform the computation on these data, without actually seeing the data.
\item Finally, there is a \emph{client} (who can be one of the input parties, or a separate entity) who receives the output of the computation.
\end{itemize}
More information about deployment models of Sharemind can be found in~\cite[Table 2.2]{UT:Kamm15}.

The computation performed by computing parties is described in SecreC language. SecreC is a C-like programming language designed for writing privacy-preserving applications for the Sharemind platform~\cite{UT:Randmets17,secrec, KLR16}. SecreC distinguishes between public data (visible to all computing parties) and private data (not visible to any computing party) on type system level. The only way to convert private data to public (which may be needed in some applications) is to use an explicit {\declassify} command, which works as decryption. The type system of SecreC ensures non-interference between public and private variables if either:
\begin{itemize}
\item there is no {\declassify} operator in the SecreC program;
\item the {\declassify} operator is used as a part of some semantic function (like sorting) for which non-interference has been proven.
\end{itemize}

In this work, we translate LP programs to SecreC language. SecreC compiler can then be used to perform the type check (and thus ensure that private data is not leaked during the computation), and transform a SecreC program to a sequence of Sharemind API calls.

When writing a program in SecreC, one should take into account that SecreC does not allow control flow to depend on private data. Let us bring a simple example, taken from ~\cite{lpsec1}. Suppose that there is a condition \texttt{S < 2500} where \texttt{S} has private type. SecreC does not allow to check directly whether the inequality is true, as it would leak some information about the secret \texttt{S}. Instead, SecreC allows to compute a \emph{private} bit $b \in \set{0,1}$, denoting whether the inequality is true. The programmer needs to restructure the subsequent computation, which in general requires simulation of computation in \emph{both} branches (the cases \texttt{S < 2500} and \texttt{S >= 2500}), and using the bit $b$ to choose the final output for each variable affected by computation in these branches. For example, if a variable $y$ gets a value $y_0$ in the case $b = 0$, and a value $y_1$ in the case $b = 1$, we need to compute both $y_0$ and $y_1$, and eventually take $y = (1 - b) \cdot y_0 + b \cdot y_1$. Since $b$ has private type, the variable $y$ is also assigned private type, so the computing parties will not know whether $y = y_0$ or $y = y_1$. In context of LP, this means that we cannot immediately backtrack when reaching a false statement that depends on private data. Instead, for each potential solution, we get a private satisfiability bit $b$, and in the end the client will receive only those solutions for which $b = 1$.

Since even computation of simple operations like multiplication or comparison on private data requires network communication between computing parties, SecreC programs are much more efficient if the data are processed in larger batches using SIMD (Single Instruction Multiple Data) operations. For example, in~\cite{secrec,fim}, an iterative SecreC program that constructs frequently occurring itemsets of certain size, and the same privacy-preserving operations are applied at once in all search branches for the entire depth level of all constructed itemsets of size $k$. Hence, the programmer should use as many vectorized computations as possible. For that reason, in this work, we will process different possible non-deterministic computational branches simultaneously in parallel.

\paragraph{Existing SecreC functions.} In this work, we will not invent new secure MPC protocols, but instead build our solution of top of numerous existing Sharemind protocols already available. In particular, SecreC already has the following functionalities (which we denote $\funset{}$):
\begin{itemize}
\item \textbf{Arithmetic Blackbox Functions:} SecreC has numerous built-in functions for privacy-preserving arithmetic and boolean operations.
\item \textbf{Existing Protocols:} SecreC has some protocols for frequently used tasks (like shuffle and sorting) already implemented.
\item \textbf{Database operations:} SecreC allows to store data in a private database and access it during the computation.
\end{itemize}
We list some existing SecreC functions that are relevant for this paper in Table~\ref{tbl:bb}. We use notation $\share{x}$ to denote a secret-shared value $x$.
\begin{table*}[t]
\begin{center}
\begin{tabular}{| c | c |}
\hline
Operation Call & Effect \\ 
\hline
\hline
$\getTable(t)$ & return columns of a table $t$ from a private database\\
\hline
$\join(\share{\mat{X_1}},\ldots,\share{\mat{X_n}})$ & return a cross product $\share{\mat{X_1} \times \cdots \times \mat{X_n}}$ \\
\hline
$\declassify(\share{\vec{x}})$ & decrypt $\share{\vec{x}}$ and return $\vec{x}$ \\
\hline
$\argument(i)$ & receive the $i$-th input from the client \\
\hline
$\publish(\share{\vec{x}})$ & send output $\share{\vec{x}}$ to the client (who decrypts it locally) \\
\hline
$\first(\share{\vec{x}})$ & return a boolean vector $\share{\vec{b}}$ s.t. $b_i=1$ iff it is the first occurrence of $x_i$ in $\vec{x}$\\
\hline
$\shuffle(\share{\vec{x}})$ & return $\share{\vec{y}}$, where $\vec{y}$ a random reordering of $\vec{x}$ \\
\hline
$\concat(\share{\vec{x_1}},\ldots,\share{\vec{x_n}})$ & return $\share{\vec{x_1}} \| \cdots \| \share{\vec{x_n}}$\\
\hline
$\filter(\share{\vec{x}}, \vec{b})$ & return $\share{[x_i \in \vec{x}\ |\ b_i = true]}$\\
\hline
\end{tabular}
\caption{SecreC built-in functions $\funset{}$}\label{tbl:bb}
\end{center}
\end{table*}

\paragraph{Syntax of SecreC} The syntax of SecreC core language, taken from~\cite{UT:Randmets17}, is given in Figure~\ref{fig:secrec:syntax}. Compared to original core syntax, we remove conditional statements and while-loops, since we will not be using these constructions in our SecreC programs. We also leave out declarations of \emph{protection domain}, which is not relevant for this work. The protection domain defines used secret-sharing scheme (assumed attacker strength, number of computing parties etc), and we just assume that this scheme has been fixed in advance and remains implicit. We also assume that return statement is always the last line of function body. 

\begin{figure}
    \centering
    \begin{minipage}[t]{0.8\linewidth}
    \centering
\[
    \begin{array}{rcl}
        \pn{prog} & \pp & \kleene{\pn{fun}}\ \ \tm{main()}\ \pn{stmt} \\
        \pn{fun} & \pp & f(x_1 : d_1\ t_1, \ldots, x_n : d_n\ t_n) : (d'_1\ t'_1,\ldots,d'_m\ t'_m)\ \\ & & (stmt ; \screturn{e_1,\ldots,e_m})
    \end{array}
\]
    \end{minipage}\\

    \begin{minipage}[t]{0.45\linewidth}
        \centering
\[
    \begin{array}{rcl}
     stmt & \pp & \scskip \\
          & \oo & stmt_1\ \sm{;}\ stmt_2 \\
          & \oo & \scassign{x}{e} \\
          & \oo & \scdecl{x : d\ t}{stmt} \\
    \end{array}
\]
    \end{minipage}
    \begin{minipage}[t]{0.45\linewidth}
        \centering
\[
    \begin{array}{rcl}
       e & \pp & x
          \oo c^t
          \oo !e \\
          & \oo & e_1 \oplus e_2 \text{ for }\oplus \in\set{\tm{+},\tm{-},\tm{*},\tm{/},\lppow,\tm{\&},\tm{|}}\\
          & \oo & e_1 \odot e_2 \text{ for }\odot \in\set{\tm{<},\tm{<=},\tm{==},\tm{!=},\tm{>=},\tm{>}}\\
          & \oo & \sccall{f}{e_1, e_2, \ldots, e_n} \\
    \end{array}
\]
    \end{minipage}
    \begin{minipage}[t]{0.45\linewidth}
        \centering
\[
    \begin{array}{rcl}
       f & \pp & \text{function name}\\
       x & \pp & \text{variable name}\\
       c^t & \pp & \text{constant of type }t
    \end{array}
\]
    \end{minipage}
    \begin{minipage}[t]{0.45\linewidth}
        \centering
\[
    \begin{array}{rcl}
        d & \pp & \spublic \oo \sprivate\\
        t & \pp & \tint \oo \tbool \oo \tfloat \oo \tstring \\ & & \tint\sm{[]} \oo \tbool\sm{[]}\oo \tfloat\sm{[]} \oo \tstring\sm{[]}

    \end{array}
\]
    \end{minipage}

    \caption{Abstract syntax of the SecreC core language. Here $\kleene{z}$ denotes repetition of $z$ zero or more times.}
    \label{fig:secrec:syntax}
\end{figure}

\paragraph{Semantics of SecreC} 
Semantics of SecreC is similar to one of a statically typed imperative programming language. We give simplified denotational semantics of a SecreC program in Figure~\ref{fig:secrec:semantics}, which is sufficient for the programs that will be constructed in this paper.

Generated SecreC programs will work on vectors, where a vector corresponds to a certain program variable, and each element of the vector corresponds to one potential valuation of that variable, i.e. one non-deterministic solution. Hence, a program state is a map $s: \vnames \to \vecset{m}$, where $\vnames$ is the set of variable names, and $\vecset{m}$ is a set of vectors of length $m$, where $m$ is the same for all variables. The value $m$ depends on the sizes of database tables, as well as the number of potential solutions, and it may change throughout execution of a SecreC program. We treat database $\tables : \vnames \to \vals^{n_1 \times n_2}$ as a special input of a SecreC program.


\begin{figure}
    \centering
    \begin{eqnarray*}
    \sempare{x}\ \funs\ s & = & s[x]\\
    \sempare{c^t}\ \funs\ s & = & c^t\\
    \sempare{!e}\ \funs\ s & = & \neg\sempare{e}\ \funs\ s\\
    \sempare{e_1 \oplus e_2}\ \funs\ s & = & (\sempare{e_1}\ \funs\ s)\ \oplus\ (\sempare{e_2}\ \funs\ s)\\
    & & \text{ for }\oplus \in\set{\tm{+},\tm{-},\tm{*},\tm{/},\lppow,\tm{\&},\tm{|}}\\
    \sempare{e_1 \odot e_2}\ \funs\ s & = & (\sempare{e_1}\ \funs\ s)\ \odot\ (\sempare{e_2}\ \funs\ s)\\
    & & \text{ for }\odot \in\set{\tm{<},\tm{<=},\tm{==},\tm{!=},\tm{>=},\tm{>}}\\
    \sempare{f(e_1,\ldots,e_n)}\ \funs\ s & = & (\funs\,f) (\sempare{e_1}\ \funs\ s,\ldots, \sempare{e_n}\ \funs\ s)
    \end{eqnarray*}
    \begin{eqnarray*}
    \sempars{skip}\ \funs\ s & = & s \\
    \sempars{stmt_1\ ;\ stmt_2}\ \funs\ s & = & (\sempars{stmt_1}\ \funs \circ \sempars{stmt_2}\ \funs)\ s \\
    \sempars{x = e}\ \funs\ s & = & s[x \mapsto \sempare{e}\ \funs\ s]\\
    \sempars{\scdecl{x : d\ t}{stmt}}\ \funs\ s & = & \sempars{stmt}\ s
    \end{eqnarray*}
    \begin{equation*}
    \semparf{f(x_1, \ldots, x_n)\ (stmt\ ;\ \screturn{y_1,\ldots,y_m})}\ \funs\ [v_1,\ldots,v_n] = (\sempars{stmt}\ \funs\ \set{x_1 \mapsto v_1,\ldots,x_k \mapsto v_k})\ [y_1,\ldots,y_m]
    \end{equation*}\\
    \begin{eqnarray*}
    \semparp{f_1\ \ldots\ f_m\ stmt}\ (v_1, \ldots, v_n)\ \tables &=& \sempars{stmt}\ \funs_m\ \emptystate\\
    \text{where }\funs_0 & = & \bigcup_{p \in Domain(\tables)} \set{{\getTable}(p) \mapsto \tables(p)} \cup \bigcup_{i=1}^n \set{\prot{argument}(i) \mapsto v_i} \cup \funset{}\\
                 \funs_i & = & \set{f_i \mapsto \semparf{f_i}\ \funs_{i-1}}\text{ for }i \in \set{1,\ldots,m}
    \end{eqnarray*}
    \caption{Denotational semantics of the SecreC core language. Here $s: \vnames \to \vecset{m}$ is a program state, $\emptystate$ denotes an empty state, $\tables : \vnames \to \vals^{n_1 \times n_2}$ the private database, and $\funs : f \mapsto (\vals \to \vals)$ defines semantics of functional symbols, which includes build-in functions and functions declared inside the program.}
    \label{fig:secrec:semantics}
\end{figure}

\section{The PrivaLog language}

In this work, we are dealing with logic programs like the one described in Sec.~\ref{sec:example}. We enhance the input data (including the database) with privacy labels (\tm{public} and \tm{private}) to define which data need to remain private throughout the computation. Putting aside privacy labels (which are a new component), our input language is a subset of Datalog. Due to new constructions and differences in syntax, we treat it as a different language, which we call PrivaLog.

In Datalog, it would be natural to list the database relations as facts inside the logic program itself. However, private data cannot occur in plaintext as a part of the program. Instead, we can only list the structure of database predicates and the types of their arguments. The data will be uploaded to MPC platform by input parties separately, using special secure data upload mechanisms. Facts that contain only public data can still be defined inside program code.

To convert the running example of Sec.~\ref{sec:example} to PrivaLog, we need to label all inputs (including attributes EDB predicates) with privacy labels. Since SecreC is a typed language, we need to introduce data types into PrivaLog program as well. Instead of writing \texttt{ship(alma,270,290,onions,10)} etc, we just write schema declarations. We add types to the inputs of the goal. Let us show how it changes our running example.
\begin{small}
\begin{verbatim}
:-type(ship(name        : public string,
            latitude    : private float,
            longitude   : private float,
            speed       : public int,
            cargotype   : private string,
            cargoamount : private int)).

:-type(port(name            : public string,
            latitude        : public float,
            longitude       : public float,
            offloadcapacity : public int,
            available       : public bool)).
....
?-arrival(Ship, portname : private string,
          cargotype : private string, Time).
\end{verbatim}
\end{small}

\subsection{Syntax}\label{sec:syntax}

The syntax of PrivaLog is given in Figure~\ref{fig:syntax}. The schema of extensional database is defined in the beginning of a program. The attributes of EDB relations are typed. We need to specify whether an attribute is public or private, as well as its data type, which determines how exactly private data will be handled technically inside Sharemind (sharing is different e.g. for integers and strings). W.l.o.g. we assume that the goal consists of a single atomic predicate, as we can fold a more complex goal into a separate rule if needed. The names $a_1,\ldots,a_n$ in the goal statement serve as placeholders for program inputs, and will be replaced by actual values when the program is executed. While syntax shows that $a_i$ are listed before $x_i$ in the goal predicate, this is only for simplicity of representation, and in practice their order may be arbitrary.
\begin{figure}
\begin{small}
\begin{eqnarray*}
\nt{program} & ::= & \kleenepos{\nt{schema}}\ \kleenepos{\nt{clause}}\ \nt{goal}\\
\nt{schema} & ::= & \imp\ \tm{type}(p(a_1 : d_1\ t_1,\ldots,a_n\ :\ d_n\ t_n)) \lpend\\
\nt{d} & ::= &  \tm{public}\ |\ \tm{private}\\
\nt{t} & ::= &  \tm{int}\ |\ \tm{bool}\ |\ \tm{float}\ | \tm{string}\\
\nt{goal} & ::= & \tm{?-}\ p(a_1\ :\ d_1\ t_1,\ldots,a_n\ :\ d_n\ t_n,x_1,\ldots,x_m)\ \lpend\\
\nt{clause} & ::= & \nt{fact}\ |\ \nt{rule}\\
\\
\nt{fact} & ::= &  p(c^{t_1}_1,\ldots,c^{t_n}_n) \lpend\\
\nt{rule} & ::= &  p(x_1,\ldots,x_n) \imp\ \nt{formula}\ \lpend\\
\nt{formula} & ::= & \nt{atom}\ |\ \lpneg \nt{atom}\ |\ \nt{atom}_1\ \lpand\ \nt{atom}_2\ |\ \nt{atom}_1\ \lpor\ \nt{atom}_2\\
\nt{atom} & ::= & \tm{true}\ |\ \tm{false}\ |\ p(term_1,\ldots,term_n)\ |\ \nt{term}_1\ \odot\ \nt{term}_2 \\
 & &\text{ for $\odot \in \set{\tm{<},\tm{=<},\tm{=/=}, \tm{=:=}, \tm{>=},\tm{>},\tm{=}}$}\\ 
\nt{term} & ::= & x |\ c^t\ |\ \tm{sqrt(} \nt{term} \tm{)} |\ \nt{term}_1\ \oplus\ \nt{term}_2\\
 & &\text{ for $\oplus \in \set{\tm{+},\tm{-},\tm{*},\tm{/},\tm{\lppow}}$}\\
\\
p & ::= & \text{predicate name}\\
x & ::= & \text{variable name}\\
c^t & ::= & \text{constant of type }t\\
a & ::= & \text{name (attribute)}
\end{eqnarray*}
\end{small}
\caption{Abstract syntax of PrivaLog. Here $\kleenepos{z}$ denotes one or more repetitions of $z$.}\label{fig:syntax}
\end{figure}

\paragraph{Limitations} Similarly to Datalog, there are certain additional constraints set on an input program.

\begin{enumerate}
\item\label{prop:definite} \textit{The program is stratified}, i.e. IDB relations can be ordered in such a way that the body of the rules corresponding to the $i$-th relation may only contain $j$-th relations for $j \leq i$. This is a quite standard constraint on Datalog programs~\cite{SEKI1991107}, which allows to apply certain logic derivation strategies.
\item\label{prop:groundneg} \textit{Safety.} Every variable that appears in the rule head must also appear in the rule body. Every variable that appears in a negated atom must also appear in a positive atom. This is a quite standard constraint on Datalog programs~\cite{Groiss93aformal}.
\item\label{prop:semipos} \textit{Negations are not allowed in front of formulae containing IDB predicates.} This is a more strict constraint than the standard Datalog assumption which requires that the body of the rules corresponding to the $i$-th relation may only contain negations of $j$-th relations for $j < i$. In particular, we only allow $j = 0$, i.e. a negated relation can only be an EDB relation. 
\end{enumerate}
\subsection{Semantics}

Intuitively, semantics of a logic program defines the set of sentences that are true/false w.r.t. the program. Since we will be transforming a declarative logic program to an imperative program, we need to be able to compare them semantically. As the basis, we take Datalog relational semantics, described e.g. in~\cite{semantics}. This seems well suitable for comparing a Datalog and a SecreC program, since in our case the generated SecreC program will also output a certain relation.



In Datalog, a fact is considered false if it does not follow from the given set of fact and rules. This approach is called \emph{negation by failure}. In the world of databases, a relation is considered false if it does not belong to the set of true relations.

\begin{definition}[negation~\cite{semantics}] For any schema $\Gamma$, there exists a universal relation $\univ_{\Gamma}$. Negation
of a relation $R$ can then be defined as $\neg R := \univ_{\Gamma} \setminus R$. 
\end{definition}

Let $\rel_{\Gamma}$ be the set of all subsets of $\univ_{\Gamma}$. A Datalog formula can be treated as a query operating on $\rel_{\Gamma}$.
\begin{definition}[formula semantics~\cite{semantics}]\label{def:formulasem} A formula $q$ whose free term variables are contained in $\Gamma$ denotes a function from $\rel^n_{\Gamma}$ to $\rel_{\Gamma}$.
\[\sempar{\cdot}_{\Gamma} : \nt{formula} \to \rel^n_{\Gamma} \to \rel_{\Gamma}\enspace.\]
If $\R = (\R_1,\ldots,\R_n)$ is a choice of a relation $\R_i$ for each of the literals $A_i$, $\sempar{q}_{\Gamma}(\R)$ is inductively defined according to the rules in Figure~\ref{fig:formsemantics}.
\end{definition}
\begin{figure}
\begin{minipage}[c]{0.45\textwidth}
\begin{eqnarray*}
\sempar{\top}_{\Gamma}(\R) & := & \univ_{\Gamma}\\
\sempar{\bot}_{\Gamma}(\R) & := & \emptyset_{\Gamma}\\
\sempar{A_j}_{\Gamma}(\R) & := & \R_j\\
\end{eqnarray*}
\end{minipage}
\hfil
\begin{minipage}[c]{0.45\textwidth}
\begin{eqnarray*}
\sempar{q_1 \wedge q_2}_{\Gamma}(\R) & := & \sempar{q_1}_{\Gamma}(\R) \cap \sempar{q_2}_{\Gamma}(\R)\\
\sempar{q_1 \vee q_2}_{\Gamma}(\R) & := & \sempar{q_1}_{\Gamma}(\R) \cup \sempar{q_2}_{\Gamma}(\R)\\
\sempar{\neg q}_{\Gamma}(\R) & := & \univ_{\Gamma} \setminus \sempar{q}_{\Gamma}(\R)
\end{eqnarray*}
\end{minipage}

\[\sempar{\exists x.q}_{\Gamma}(\R) := \pi_{\Gamma}(\sempar{q}_{\Gamma \cup \set{x}}(\R))\]
\caption{Semantics of a formula}\label{fig:formsemantics}
\end{figure}

A Datalog rule can be viewed as a formula that generates new relations from already existing relations. A common way (albeit not the only one) to execute a Datalog program is to start from the initial set of facts, and applying the rules to the facts generated so far to obtain more facts until the goal is achieved (bottom-up approach). So-called \emph{immediate consequence operator} computes one such step.

\begin{definition}[immediate consequence operator~\cite{semantics}]\label{def:ico}
Given a program $\mathbb{P} = (P_1,\ldots,P_n)$, where $P_i$ is a predicate with schema $\Gamma_i$, the immediate consequence operator $\ico : \rel^n \to \rel^n$ is defined as
\[\ico(\R_1,\ldots, \R_n) = (\sempar{P_1}_{\Gamma_1} (\R_1,\ldots, \R_n),\ldots, \sempar{P_n}_{\Gamma_n} (\R_1,\ldots, \R_n))\enspace.\]
\end{definition}

Applying the operator repeatedly until no more new facts are generated, we get the program semantics. Let $\mathsf{lfp}_{S}$ denote the least fixpoint operator on set $S$.
\begin{definition}[program semantics]\label{def:progsemantics}
The semantics of a program $\mathbb{P}$ is defined to be
\[\sempar{\mathbb{P}} := \mathsf{lfp}_{\rel^n}(\ico)\]
and may be calculated by iterative application of $\ico$ to $\bot$ until fixpoint is reached.
\end{definition}

Definition~\ref{def:progsemantics} covers all relations that would be true w.r.t. a program $\mathbb{P}$. However, if we have fixed a particular goal, then it is sufficient to consider only those relations that correspond to a certain goal predicate, i.e. one component of $\mathsf{lfp}_{\rel^n}(\ico)$. Let $\pi_{\Gamma}$ denote projection to a schema $\Gamma$.

\begin{definition}[Datalog semantics w.r.t. a predicate]\label{def:progsemanticsgoal}
Let $\mathbb{P} = (P_0,\ldots,P_n)$ be a Datalog program. The semantics of $\mathbb{P}$  w.r.t. predicate $P_j$ is defined to be
\[\sempar{\mathbb{P}} := \pi_{\Gamma_j}(\mathsf{lfp}_{\rel^n}(\ico)[j])\enspace,\]
where $\Gamma_j$ is the schema of $P_j$.
\end{definition}

Existence of the fixpoint is typically ensured by constraining the program such that $\ico$ is monotone. In a privacy-preserving application, the step on which the fixpoint is reached may depend on private data, so ideally it should not be leaked whether a fixpoint has been reached. For that reason, we compute $\sempar{\mathbb{P}}(m) := \ico^m(\bot)$ for some fixed $m \geq 0$. If we underestimate $m$, we can only guarantee $\sempar{\mathbb{P}}(m) \subseteq \sempar{\mathbb{P}}$, so it is possible that we do not find all the answers to our goal query. 

In Figure~\ref{fig:semantics} we show how the definitions above are adapted to a PrivaLog program, defined according to the syntax of Figure~\ref{fig:syntax}. A PrivaLog program $D_1\ldots D_m\ C_1\ldots C_n\ G$ consists of schema declarations $D_i$ followed by rule/fact clauses $C_i$ and a goal $G$. Let us list the key points of the semantics definition:
\begin{itemize}
\item W.l.o.g. we assume that all IDB/EDB predicate names are of the form $p_j$, where $j \in \set{1,\ldots,m+n}$, and there is exactly one clause $C_j$ for each predicate $p_{m+j}$ for $j \in \set{1,\ldots,n}$. This makes the definitions easier to track, and several clauses for the same $p_j$ can anyway be combined into one by introducing disjunctions.
\item A PrivaLog program is executed on particular inputs $v_1,\ldots,v_k$ that valuate arguments $a_1,\ldots,a_k$ of the goal predicate. We introduce a new predicate name $p_0$ which is the goal predicate instantiated with inputs $v_1,\ldots,v_k$. We are interested in the set of satisfiable solutions for $p_0$, which will be the program output.
\item Since table data is not a part of the program syntax, there is an additional input $\tables : \vnames \to \vals^{n_1 \times n_2}$ that contains data. The program is extended with facts $p_j(\tables(p_j)[i,1],\ldots,\tables(p_j)[i,n_2])$, for $i \in \set{1,\ldots,n_1}$, $j \in \set{1,\ldots,m}$.
\item The relation $\R_j$ is the set of satisfiable solutions for $p_j(x_1,\ldots,x_n)$ for $j \in \set{1,\ldots,m+n}$. We use notation $\R_j[i]$ to refer to the $i$-th argument of relation $\R_j$.
\item Semantics of (in)equality predicates is defined as a \emph{filter} over relations. It in turn depends on semantics of arithmetic expressions. 
\end{itemize}


\begin{figure*}
\begin{minipage}[c]{0.45\textwidth}
\begin{eqnarray*}
\sempar{\tm{true}}_{\Gamma}(\R) & := & \univ_{\Gamma}\\
\sempar{\tm{false}}_{\Gamma}(\R) & := & \emptyset_{\Gamma}\\ \\
\end{eqnarray*}
\end{minipage}
\hfil
\begin{minipage}[c]{0.45\textwidth}
\begin{eqnarray*}
\sempar{q_1 \lpand q_2}_{\Gamma}(\R) & := & \sempar{q_1}_{\Gamma}(\R) \cap \sempar{q_2}_{\Gamma}(\R)\\
\sempar{q_1 \lpor q_2}_{\Gamma}(\R) & := & \sempar{q_1}_{\Gamma}(\R) \cup \sempar{q_2}_{\Gamma}(\R)\\
\sempar{\lpneg q}_{\Gamma}(\R) & := & \univ_{\Gamma} \setminus \sempar{q}_{\Gamma}(\R)\\
\end{eqnarray*}
\end{minipage}
\begin{eqnarray*}
\sempar{p_j(t_1,\ldots,t_n)}_{\Gamma}(\R) & := & \pi_{\Gamma}(\sigma_{\sempar{t_1 =:= \R_{j}[1] \lpand \ldots \lpand t_n =:= \R_{j}[n]}}( \R_j \times \univ_\Gamma))\\
\sempar{t_1\ \odot\ t_2}_{\Gamma}(\R) & := & \sigma_{\sempar{t_1\ \odot\ t_2}} \univ_{\Gamma}\text{ for }\odot\in\set{\tm{<},\tm{=<},\tm{=/=}, \tm{=:=}, \tm{>=},\tm{>}}\\
\sempar{t_1\ \tm{=}\ t_2}_{\Gamma}(\R) & := & \sigma_{\sempar{t_1\ \tm{=:=}\ t_2}} \univ_{\Gamma}  
\end{eqnarray*}

\begin{eqnarray*}
\sempar{c^t}\ r & := & c^t\text{ for a constant }c^t\\
\sempar{x}\ r & := & r[x]\text{ for a variable }x\\
\sempar{\lpneg\ t}\ r & := & \neg\sempar{t}\ r\\
\sempar{t_1\ \lpand\ t_2}\ r & := & \sempar{t_1}\ r \wedge \sempar{t_2}\ r\\
\sempar{t_1\ \lpor\ t_2}\ r & := & \sempar{t_1}\ r \vee \sempar{t_2}\ r\\
\sempar{t_1\ \oplus\ t_2}\ r & := & \sempar{t_1}\ r\ \oplus\ \sempar{t_2}\ r\text{ for }\oplus \in \set{\tm{+},\tm{-},\tm{*},\tm{/}, \lppow}\\
\sempar{t_1\ \odot\ t_2}\ r & := & \sempar{t_1}\ r\ \odot\ \sempar{t_2}\ r\text{ for }\odot\in\set{\tm{<},\tm{=<},\tm{=:=},\tm{=/=},\tm{>=},\tm{>}}
\end{eqnarray*}

\begin{eqnarray*}
\sempar{p_j(x_1,\ldots,x_n) \imp\ q} & := & \pi_{\set{x_1,\ldots,x_n}}\sempar{q}_{\vars(q)}
\end{eqnarray*}

\begin{eqnarray*}
\sempar{D_1\,\ldots\,D_m\ C_1\,\ldots\,C_n\ G}\ [v_1,\ldots,v_k]\ \tables & := & \mathsf{lfp}_{\rel^{m+n+1}}(\ico)[0]\\
where & & \\
\ico(\R) & := & (F_0 (\R),\ldots, F_{m+n} (\R))\\
F_0 & := & \sempar{p_0(y_1,\ldots,y_{\ell}) \imp p(v_1,\ldots,v_k,y_1,\ldots,y_\ell)}\text{ for }G=\tm{?-}p(a_1,\ldots,a_k,y_1,\ldots,y_\ell)\\
F_i & := & \sempar{\declared(D_j)\ \tables}\text{ for }j \in \set{1,\ldots,m}\\
F_{m+j} & := & \sempar{C_j}\text{ for }j \in \set{1,\ldots,n}\\
\declared(\imp \tm{type}\ \tm{(} p_j(x_1,\ldots,x_n) \tm{)} \lpend)\ \tables & := & p_j(x_1,\ldots,x_n)\ \imp\ (x_1\ \tm{=}\ \tables(p_j)[1,1]) \lpand \ldots \lpand\ (x_n\ \tm{=}\ \tables(p_j)[1,n])\ \lpor \\
& & \indnt\indnt\indnt\cdots\lpor \\
& & \indnt\indnt\indnt(x_1\ \tm{=}\ \tables(p_j)[\size(p_j),1]) \lpand \ldots \lpand\ (x_n\ \tm{=}\ \tables(p_j)[\size(p_j),n]) \lpend
\end{eqnarray*}
\caption{Relational semantics of a PrivaLog program with table declarations $D_1,\ldots,D_m$, clauses $C_1,\ldots,C_n$, and a goal $G$, executed on inputs $v_1,\ldots,v_k$. Here $\size(p)$ is the number of rows in the table $p$, and $\vars(q)$ is the set of all variables of $q$.}\label{fig:semantics}
\end{figure*}

\paragraph{Handling exceptions.} So far, we have defined semantics only for well-defined programs. Execution of real programs may result in an error if we try to evaluate an expression that contains free variables (e.g. $\tm{X < 5}$ for a free variable \texttt{X}), try to combine non-matching types ($\tm{X \tm{=} 2 \tm{+} a}$), or if some arithmetic function is undefined for given input ($\tm{X = Y / 0}$). While the first two kinds of error depend on program structure and can be detected in the preprocessing phase, it is more difficult with arithmetic errors which depend on private data. We cannot output immediate errors in such case since it would require observing private values. Instead, we just let the resulting SecreC program produce a garbage result, without announcing that it is actually garbage. In general, we require correctness only for valid PrivaLog programs executed on valid inputs.

\section{Preprocessing of a PrivaLog program}\label{sec:trans}

The goal of preprocessing is to rewrite the program in such a way that it would be easily convertible to a SecreC program. There are the certain properties, summarized in Definition~\ref{def:inlined}, that the program should satisfy after preprocessing.

\begin{definition}\label{def:inlined}
A PrivaLog program defined according to the syntax of Fig.~\ref{fig:syntax} is called \emph{preprocessed} if its every rule satisfies the following properties.
\begin{enumerate}
\item\label{it:wellformed:disj} In every disjunction \texttt{X \lpor\ Y} occurring in the rule, both \texttt{X} and \texttt{Y} are ground terms, i.e. do not contain free variables.
\item\label{it:wellformed:asgns} In every unification \texttt{X = Y} occurring in the rule, the term \texttt{X} is a free variable, and \texttt{Y} does not contain any free variables.
\item\label{it:wellformed:unfold} The rule does not contain any IDB predicates.
\end{enumerate}
\end{definition}
These conditions are achieved by applying a sequence of certain transformations, which we now describe one by one.

\subsection{Elimination of non-deterministic choices}\label{sec:nondet}

In logic programs, if a disjunction \texttt{X \lpor\ Y} contains a fresh variable \texttt{Z} that has not used anywhere before, then we are having a choice, where the value of \texttt{Z} is being assigned non-deterministically. We want to avoid such constructions in SecreC. A simple way to resolve this problem is to bring the RHS $q$ of each rule $p \imp q$ to \emph{disjunctive normal form} (DNF), i.e. $\bigvee_i \bigwedge_j A_{ij}$ where $A_{ij}$ are literals of $q$. Putting $q_i = \bigwedge_j A_{ij}$, we split the rule $p \imp q$ to several rules $p \imp q_1 \lpend\ \ldots\ p \imp q_n \lpend$, where each new rule does not contain any disjunctions


We need to be careful that the literals of $q_i$ keep the same order which they have had in $q$, so that each $q_i$ corresponds to a single search path. To keep the initial order of literals, we bring a formula to DNF using basic laws of logic. First, we bring the formula to \emph{negation normal form} using De Morgan's laws to push negation inwards, and eliminating double negations. After that, distributivity laws can be used to push conjunctions inward. All applied rewrite rules do not change the initial order of literals in the formula. The resulting program satisfies property~(\ref{it:wellformed:disj}) of Def.~\ref{def:inlined}.

\subsection{Adornment of rules}\label{sec:adorn}

During the transformation, we will need to know which variables are free and which are bounded. In terms of information flow, bounded variables are treated as inputs, and free variables as outputs. Assigning bounded (\texttt{b}) and free (\texttt{f}) labels to variables and predicate arguments is called \emph{adornment}. We use a slightly modified version of the Basic Sideways Capture Rule (BSCR)~\cite{rulegoal} to add boundness annotations to the variables in rules. The adornment algorithm is given in Figure~\ref{fig:adorn}.

\begin{figure*}
\begin{eqnarray*}
\ell_{\B}(x) & = & \begin{cases}\tm{b}\text{ if }x \in \B\\ \tm{f}\text{ otherwise }\end{cases}\\
\lbl_{\B}(q) & = & q\text{ with every variable $x \in \vars(q) \cap \B$ extended with a label $\ell_{\B}(x)$}\\
\pred(p(\ldots) \imp q\lpend) & = & p
\end{eqnarray*}
\begin{eqnarray*}
\adornf{\mathbb{P}}{p(x_1,\ldots,x_n)}\ \B & = & (p(x_1\ \tm{:}\ \ell_{\B}(x_1),\ldots,x_n\ \tm{:}\ \ell_{\B}(x_n)),\ \B \cup \set{x_1,\ldots,x_n},\ \LL)\\
where & & \LL = \bigcup_{i \in I}\adornc{\mathbb{P}}{C_i, \ell_{\B}(x_1)\ldots\ell_{\B}(x_n)}\\
& & I = \set{i\ |\ C_i \in \mathbb{P}, \pred(C_i) = p}\\
\\
\adornf{\mathbb{P}}{q_1 \lpand q_2}\ \B & = &  (q'_1 \lpand q'_2,\ \B'_1 \cup \B'_2,\ \LL_1 \cup \LL_2)\\
where & & (q'_1,\B'_1,\LL_1) = \adornf{\mathbb{P}}{q_1}\ \B\\
      & & (q'_2,\B'_2,\LL_2) = \adornf{\mathbb{P}}{q_2}\ \B'_1\\
\\
\adornf{\mathbb{P}}{x_1\tm{ = }x_2}\ \B & = & \begin{cases}(x_1\tm{ : b =:= }x_2\tm{ : b},\ \B,\ \emptyset) \text{ if }x_1 \in \B, x_2 \in \B\\
(x_1\tm{ : f = }x_2\tm{ : b},\ \B\cup\set{x_1},\ \emptyset) \text{ if }x_1 \notin \B, x_2 \in \B\\
(x_2\tm{ : f = }x_1\tm{ : b},\ \B\cup\set{x_2},\ \emptyset) \text{ if }x_1 \in \B, x_2 \notin \B\\
\bot\text{ otherwise }\end{cases}\\
\adornf{\mathbb{P}}{x\tm{ = }e}\ \B & = & \begin{cases}(x\tm{ : b =:= }\lbl_{\B}(e),\ \B,\ \emptyset) \text{ if }x \in \B\text{ and }\vars(e) \subseteq \B\\
(x\tm{ : f = }\lbl_{\B}(e),\ \B \cup \set{x},\ \emptyset) \text{ if }x \notin \B\text{ and }\vars(e) \subseteq \B\\ \bot\text{ otherwise }\end{cases}\\
\adornf{\mathbb{P}}{e\tm{ = }x}\ \B & = & \adornf{\mathbb{P}}{x = e}\ \B\\
\adornf{\mathbb{P}}{q}\ \B & = & \begin{cases}(\lbl_{\B}(q),\ \B,\ \emptyset) \text{ if }\vars(q) \subseteq \B\\
\bot\text{ otherwise }\end{cases}\\
& &\text{ for any other }q
\end{eqnarray*}
\begin{eqnarray*}
\adornc{\mathbb{P}}{p(x_1,\ldots,x_n) \imp q,\ \ell_1\ldots\ell_n} & = & \set{p_{\ell_1\ldots\ell_n}(x_1\tm{ : }\ell_1,\ldots,x_n\tm{ : }\ell_n) \imp q'} \cup \LL\\
where & & (q',\_,\LL) = \adornf{\mathbb{P}}{q}\ \set{x_i\ |\ \ell_i = \tm{b}}\\ \\
\adorng{\mathbb{P}}{\tm{?-}p(a_1,\ldots,a_k,x_1,\ldots,x_\ell)} & = & (\tm{?-}p_{\ell_1\ldots\ell_n}(a_1\tm{ : b},\ldots,a_k\tm{ : b},x_1\tm{ : f},\ldots,x_\ell\tm{ : f}),\ \bigcup_{i \in I}\adornc{\mathbb{P}}{C_i, \ell_1\ldots\ell_n})\\
where & & I = \set{i\ |\ C_i \in \mathbb{P}, \pred(C_i) = p}\\
& & \ell_i = \begin{cases}\tm{b}\text{ for }i \leq k \\\tm{f}\text{ otherwise }\end{cases}\\
\\
\adornp{D_1\ldots D_m\ C_1\ \ldots C_n\ G} & = & D_1\ldots D_m\ C'_1\ \ldots C'_{n'}\ G'\\
where & & (G',\set{C'_1,\ldots,C'_{n'}}) = \adorng{\mathbb{P}}{G}
\end{eqnarray*}
\caption{Adornment of a PrivaLog program $\mathbb{P} = D_1\ldots D_m\ C_1\ \ldots C_n\ G$.}\label{fig:adorn}
\end{figure*}

Let us describe the intuition behind adornment. We start from a rule $p \imp q_1 \lpand \ldots \lpand q_k$ that corresponds to the goal predicate (there may be several such rules). W.l.o.g. we assume that all arguments $z_1,\ldots,z_n$ of $p$ are variables (if any of them is an expression $e_i$, we can introduce a fresh variable $z_i$ and add $z_i = e_i$ to the rule body).

We create a new adorned rule $\mathtt{p}_{l_1,\ldots,l_n}$, where $l_i \in \set{\mathtt{b},\mathtt{f}}$ is a label which tells whether the $i$-th argument of the rule head is a free variable (i.e. will be treated as an output) or a bounded variable (i.e. will be treated as an input). The labels $l_i$ are determined by the goal predicate. We then traverse through the rule body $q_1 \lpand \ldots \lpand q_k$. Assuming that disjunctions have been eliminated as described in Sec.~\ref{sec:nondet}, every $q_i$ is either a single EDB/IDB predicate, a comparison, or a negation. Depending on the case, the algorithm acts as follows:
\begin{itemize}
\item If $q_i$ is a comparison or a negation, then we require that all its variables are bounded (since e.g. we cannot evaluate $X > 5$ if $X$ is a free variable, and Datalog would raise runtime exception in such a case). An exception is the unification $X = Y$, which is also treated as an assignment. If $X$ is a free variable, it gets the label of $Y$. If $Y$ is a free variable, it gets the label of $X$. We assume that $X$ and $Y$ cannot be free variables both at once since then $Y$ could as well be renamed to $X$ in the entire clause.
\item If $q_i$ is an EDB predicate, then all variables of $q_i$ that have been free so far become bounded, since they get their values from the database.
\item If $q_i$ is an IDB predicate, then we in addition repeat the procedure recursively with all rules that match the head of $q_i$. It is possible that the corresponding rules have already been adorned before, and we only need to repeat the process if we call these rules with a different adornment pattern.
\end{itemize}
It is possible that we get different adornments for the same IDB predicate. E.g. a rule \texttt{p(X,Y) :- X = Y} may be transformed to three rules \texttt{p\_bf(X,Y)}, \texttt{p\_fb(X,Y)}, \texttt{p\_bb(X,Y)}, where the first one assigns $Y := X$, the second one $X := Y$, and the last one performs a comparison $X == Y$.

Let us demonstrate adornment on our running example. For shortness, in the following, \texttt{*} denotes a bounded variable. Note that the same variable can be free in one part of the rule and bounded in the other part. For example, the variable \texttt{Ship} in the rule \texttt{arrival\_fbbf} is initially free, but it gets valuated after the EDB predicate \texttt{ship} is consulted.

\begin{small}
\begin{verbatim}
reachability_time_bbf(Ship*,Port*,Time) :-
    ship(Ship*,X1,Y1,Speed,_,_),
    port(Port*,X2,Y2,_),
    Time = sqrt((X1* - X2*)^2 + (Y1* - Y2*)^2) / Speed*.

suitable_port_bb(Ship*,Port*) :-
    port(Port*,_,_,Capacity),
    ship(Ship*,_,_,_,_,CargoAmount),
    Capacity* >= CargoAmount*.

arrival_fbbf(Ship,Port*,CargoType*,Time) :-
    ship(Ship,_,_,_,CargoType*,_),
    suitable_port(Ship*,Port*),
    reachability_time(Ship*,Port*,Time).

?-arrival_fbbf(Ship,portname* : private string,
               cargotype* : private string,Time).
\end{verbatim}
\end{small}

\subsection{Elimination of IDB predicates}\label{sec:unfold}

While internal function calls and recursion are not a technical limitation of SecreC language, it performs much better if SIMD operations are used, as discussed in Sec.~\ref{sec:secrec}. For that reason, we unfold the IDB predicates as much as possible to achieve parallel evaluation of rules on all possible non-deterministic choices of EDB inputs. Eliminating IDB predicates will also simplify the subsequent steps. The unfolding transformation is close to the one described in~\cite{SEKI1991107}.

The particular transforming algorithm that we are using is based on a naive bottom-up execution of a Datalog program, i.e. the rules are providing the knowledge base with more and more facts until a fixpoint (or the bound on the number of iterations) is reached. Differently from standard bottom-up evaluation, we treat private data as symbolic variables, which means that we cannot evaluate all literals (we can only do partial evaluation like constant propagation and constant folding), and in general we  still get \emph{rules} instead of \emph{facts}. However, these rules do not contain IDB predicates anymore. In the end, we leave only those rules that are needed by the goal (there may be several such rules, but all of them have the same head predicate). After this procedure, the program satisfies property~(\ref{it:wellformed:unfold}) of Def.~\ref{def:inlined}.

Let us demonstrate unfolding on our running example. We assume that adornments have been computed as in Sec.~\ref{sec:adorn}, and \texttt{*} denotes a bounded variable. We note that the same rule can be expressed by a least number of calls of EDB predicates \texttt{ship} and \texttt{port}, but that rewriting is delegated to a separate optimization which is not a part of formal transformation.

\begin{small}
\begin{verbatim}
arrival_fbbf(Ship,Port*,CargoType*,Time) :-
    ship(Ship,_,_,_,CargoType*,_),
    port(Port*,_,_,Capacity),
    ship(Ship*,_,_,_,_,CargoAmount),
    Capacity* >= CargoAmount*,
    ship(Ship*,X1,Y1,Speed,_,_),
    port(Port*,X2,Y2,_),
    Time = sqrt((X1* - X2*)^2 + (Y1* - Y2*)^2) / Speed*.

?-arrival_fbbf(Ship, portname* : private string,
               cargotype* : private string, Time).
\end{verbatim}
\end{small}

 The declarative definition of the transformation is given in Figure~\ref{fig:trans}. We use notation $\RB \in \clauseset$ for the \emph{rule base}, i.e. the set of ground rules (without IDB predicates) that have already been generated. Formally, we define transformation functions $\transfunp : program \to program$, $\transfunc : clause \to \clauseset$, and $\transfunf : formula \to Set(formula)$.

\begin{figure*}
\begin{eqnarray*}
\transf{\tm{true}}(\RB) & := & \set{\tm{true}}\\
\transf{\tm{false}}(\RB) & := & \emptyset\\
\transf{p(x_1,\ldots,x_n)}(\RB) & := & \set{p(x_1,\ldots,x_n)}\text{ for an EDB predicate $p$}\\
\transf{p_j(x_1,\ldots,x_n)}(\RB) & := & \{q\theta\ \lpand\ (x_1 = y_{i1}\theta)\ \lpand \ldots \lpand\ (x_n = y_{in}\theta)\ |\ p_j(y_{i1},\ldots,y_{in})\ \imp\ q \in \RB_j\}\text{ for an IDB predicate $p_j$}\\
& & \text{where }\theta\text{ is a substitution that gives fresh names to all variables}\\
\transf{t_1\ \odot\ t_2}(\RB) & := & \set{t_1 \odot\ t_2}\text{ for }\odot\in\set{\tm{<},\tm{=<},\tm{=:=},\tm{=/=},\tm{>=},\tm{>},\tm{=}}\\ 
\transf{q_1\ \lpand\ q_2}(\RB) & := & \set{q'_1\ \lpand\ q'_2\ |\ q'_1 \in \transf{q_1}(\RB), q'_2 \in \transf{q_2}(\RB)}\\
\transf{q_1\ \lpor\ q_2}(\RB) & := & \set{q_1\ \lpor\ q_2}\\
\transf{\lpneg q}(\RB) & := & \set{\lpneg q}
\end{eqnarray*}
\begin{eqnarray*}
\transc{p_j(x_1,\ldots,x_n)\ \imp\ q\ \tm.}(\RB) & := & \set{p_j(x_1,\ldots,x_n)\ \imp\ q'\ \lpend\ |\ q' \in \transf{q}(\RB)}
\end{eqnarray*}
\begin{eqnarray*}
\transp{D_1\,\ldots\,D_m\ C_1\,\ldots\,C_n\ G} & := & D_1 \ldots D_m\ \mathsf{lfp}_{\clauseset^n}(\sico)\ G\\
where\hfill & & \\
\sico(\RB) & = & (F_1(\RB), \ldots, F_{n}(\RB))\\
F_i & := & \transc{C_i}\text{ for }i \in \set{1,\ldots,n}\\
\end{eqnarray*}

\caption{Unfolding a PrivaLog program.}\label{fig:trans}
\end{figure*}

The nature of transformation of Figure~\ref{fig:trans} is very similar to the semantics described in Figure~\ref{fig:semantics}. The main difference is that $\transfunp$ does not utilize the actual table data, but postpones valuation of table relations. Intuitively, on each step of the iteration of $\ico$ and $\sico$, the sets $\R$ and $\RB$ are tightly related to each other, so that we would always be able to get $\R$ out of $\RB$ as soon as we get the data.

\begin{definition}[relations derived w.r.t. data]\label{def:derivedwrt}
Let $\R$ be a set of relations over a predicate $p$, and let $\RB$ be a set of rules for $p$. Let $\tables$ be a set of extensional database relations. We say that $\R$ is derived from $\RB$ w.r.t. $\tables$, denoting $\R \derwrt{\tables} \RB$, iff
\begin{eqnarray*}
\R = \{(a_{i1},\ldots,a_{in}) & | & p(x_{i1},\ldots,x_{in})\ \imp\ q_i\ \in \RB,\\
& & (a_{i1},\ldots,a_{in}) \in \sempar{q_i}_{\set{x_{i1},\ldots,x_{in}}}(\tables)\}\enspace.
\end{eqnarray*}
\end{definition}

We want to show that the property of Definition~\ref{def:derivedwrt} is maintained after applying the transformation. First, we show that it holds on formula level.

\begin{lemma}\label{lm:transf}
Let $\R \derwrt{\tables} \RB$, $\tables \subseteq \R$. Let $\Gamma = \vars(q)$. Then,
\[\sempar{q}_{\Gamma}(\R) = \bigcup_{q' \in \transf{q}(\RB)} \sempar{q'}_{\Gamma}(\tables)\enspace.\]
\end{lemma}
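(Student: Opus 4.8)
The plan is to argue by structural induction on the formula $q$, following the common case split of $\sempar{\cdot}_{\Gamma}$ in Figure~\ref{fig:semantics} and of $\transfunf$ in Figure~\ref{fig:trans}. Because the conjunction case must intersect two sub-results over one common schema, I would prove the slightly stronger statement that the identity holds for every $\Gamma \supseteq \vars(q)$ (the lemma being the instance $\Gamma = \vars(q)$), reading $\sempar{q'}_{\Gamma}(\tables)$ for a transformed $q'$ with the convention that the variables of $q'$ lying outside $\Gamma$ — in particular the fresh names introduced by the renaming substitution $\theta$ — are existentially projected away, exactly as the body-local variables are treated in Definition~\ref{def:derivedwrt}. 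The base cases $q \in \set{\tm{true}, \tm{false}}$ and $q = t_1 \odot t_2$ are immediate: $\transf{q}(\RB)$ equals $\set{q}$ (or $\emptyset$ for $\tm{false}$), and the right-hand sides $\univ_{\Gamma}$, $\emptyset_{\Gamma}$, and $\sigma_{\sempar{t_1 \odot t_2}}\univ_{\Gamma}$ do not read the relation argument at all, so both sides coincide without invoking either hypothesis.

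The predicate case $q = p_j(x_1,\ldots,x_n)$ is the core of the proof and the step I expect to be hardest. Here $\transf{q}(\RB)$ inlines every rule $p_j(y_{i1},\ldots,y_{in}) \imp q_i$ of $\RB_j$ into $q'_i = q_i\theta_i \lpand (x_1 = y_{i1}\theta_i) \lpand \cdots \lpand (x_n = y_{in}\theta_i)$. I would compute $\sempar{q'_i}_{\Gamma}(\tables)$ by noting that the renaming $\theta_i$ does not change the relation it denotes, that the equality conjuncts pin each $x_k$ to the renamed head variable $y_{ik}\theta_i$, and that projecting out the fresh (body-local) variables leaves exactly $\sempar{q_i}_{\set{y_{i1},\ldots,y_{in}}}(\tables)$ renamed to the attributes $x_1,\ldots,x_n$ and padded with $\univ_{\Gamma \setminus \set{x_1,\ldots,x_n}}$ on the remaining columns of $\Gamma$. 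Unioning over all rules $i$ and invoking $\R_j \derwrt{\tables} \RB_j$ — which by Definition~\ref{def:derivedwrt} defines $\R_j$ as precisely that union — reproduces $\rho_{x_1/\R_j[1],\ldots,x_n/\R_j[n]}\R_j \times \univ_{\Gamma \setminus \set{x_1,\ldots,x_n}}$, i.e. $\sempar{p_j(x_1,\ldots,x_n)}_{\Gamma}(\R)$. The delicate part is the bookkeeping of the renaming, the generated equalities, and the projection of the existentially quantified local variables so that they align with the head schema.

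For $q = q_1 \lpand q_2$ I would apply the inductive hypothesis to $q_1$ and $q_2$ over the same schema $\Gamma$, turning each into a union, and then use distributivity of $\cap$ over $\cup$ to obtain $\bigcup_{q'_1, q'_2}\left(\sempar{q'_1}_{\Gamma}(\tables) \cap \sempar{q'_2}_{\Gamma}(\tables)\right)$. Rewriting each summand as $\sempar{q'_1 \lpand q'_2}_{\Gamma}(\tables)$ and comparing with the definition $\transf{q_1 \lpand q_2}(\RB) = \set{q'_1 \lpand q'_2 \mid q'_1 \in \transf{q_1}(\RB),\ q'_2 \in \transf{q_2}(\RB)}$ closes this case. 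The one point needing care is that $\sempar{q'_1}_{\Gamma} \cap \sempar{q'_2}_{\Gamma} = \sempar{q'_1 \lpand q'_2}_{\Gamma}$, i.e. that projection and intersection commute here; this holds because the fresh variables introduced on the two sides are disjoint and $\Gamma$ already contains every variable shared by $q_1$ and $q_2$.

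Finally, for $q = q_1 \lpor q_2$ and $q = \lpneg q_0$ the transformation leaves the formula untouched ($\transf{q}(\RB) = \set{q}$), so the claim reduces to showing that such a $q$ evaluates identically against $\R$ and against $\tables$. By the preprocessing invariants — surviving disjunctions are ground and negations occur only in front of EDB atoms (Limitation~\ref{prop:semipos}) — the only relations these formulas read are EDB tables, and the hypothesis $\tables \subseteq \R$ forces $\R$ and $\tables$ to agree on them. This is the single place where $\tables \subseteq \R$ is used, whereas $\R \derwrt{\tables} \RB$ drives only the predicate case; together they complete the induction.
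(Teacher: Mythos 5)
Your proposal is correct and follows essentially the same route as the paper's proof: induction on the structure of $q$, with the identity-transformation cases ($\tm{true}$, $\tm{false}$, comparisons, negations, ground disjunctions) handled by noting that $\R$ and $\tables$ agree there, the predicate case resolved by unfolding Definition~\ref{def:derivedwrt} and absorbing the renaming $\theta$ and the generated equalities, and the conjunction case closed by distributivity of $\cap$ over $\cup$. You are in fact somewhat more explicit than the paper on two points it leaves implicit — why $\tables \subseteq \R$ justifies evaluating untouched negations and disjunctions against $\tables$, and why intersection commutes with the projection of fresh variables in the conjunction case — but these are refinements of the same argument, not a different one.
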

\begin{proof}
We prove the statement by induction on size of $q$. We will refer to the left hand side of the proved equality as $LHS$, and the right hand side as $RHS$, so we need to show that $LHS = RHS$.
\begin{itemize}
\item Let $q \in \set{\mathsf{true}, \mathsf{false}, \lpneg q', q_1\ \lpor\ q_2, \mathsf{t_1 \odot t_2}\text{ for }\odot\in\set{\tm{<},\tm{=<},\tm{=:=},\tm{>=},\tm{>},\tm{=}}}$. In these cases, the transformation does not change $q$, and we have $\sempar{q}_{\Gamma}(\tables) = \sempar{q}_{\Gamma}(\R)$, and $\transf{q}(\RB) = \set{q}$. Hence, 
\[RHS = \bigcup_{q' \in \transf{q}(\RB)} \sempar{q'}_{\Gamma}(\tables) = \sempar{q}_{\Gamma}(\tables) = \sempar{q}_{\Gamma}(\R) = LHS\enspace.\]

\item Let $q = p_j(x_1,\ldots,x_n)$. We have $\Gamma = Var(q) = (x_1,\ldots,x_n)$ By definition,
\[\transf{p_j(x_1,\ldots,x_n)}(\RB) = \set{q_i\theta\ \lpand\ (x_1 = y_{i1}\theta)\ \lpand \ldots \lpand\ (x_n = y_{in}\theta)\ |\ p_j(y_{i1},\ldots,y_{in})\ \imp\ q_i \in \RB}\enspace,\]
where $\theta$ issues fresh names to all variables. We need to show
\begin{eqnarray*}
\sempar{q}_{x_1,\ldots,x_n}(\R) & = & \bigcup_{p_j(y_{i1},\ldots,y_{in})\ \imp\ q_i \in \RB}\sempar{q_i\theta\ \lpand\ (x_1 = y_{i1}\theta)\ \lpand \ldots \lpand\ (x_n = y_{in}\theta)}_{x_1,\ldots,x_n}(\tables)\enspace.
\end{eqnarray*}
By assumption of the lemma, $\R \derwrt{\tables} \RB$, so we have
\[\R_j = \set{(a_{i1},\ldots,a_{in})\ |\ p_j(y_{i1},\ldots,y_{in})\ \imp\ q_i\ \in \RB,\ (a_{i1},\ldots,a_{in}) \in \sempar{q_i}_{\set{y_{i1},\ldots,y_{in}}}(\tables)}\enspace.\]
We have
\begin{eqnarray*}
LHS & = & \sempar{q}_{x_1,\ldots,x_n}(\R) = \sempar{p_j(x_1,\ldots,x_n)}_{x_1,\ldots,x_n}(\R) = \R_j\\
& = & \set{(a_{i1},\ldots,a_{in})\ |\ p_j(y_{i1},\ldots,y_{in})\ \imp\ q_i\ \in \RB,\ (a_{i1},\ldots,a_{in}) \in \sempar{q_i}_{\set{y_{i1},\ldots,y_{in}}}(\tables)}\\
& = & \bigcup_{p_j(y_{i1},\ldots,y_{in})\ \imp\ q_i\ \in \RB} \sempar{q_i}_{\set{y_{i1},\ldots,y_{in}}}(\tables)\enspace;
\end{eqnarray*}
\begin{eqnarray*}
RHS & = & \bigcup_{p_j(y_{i1},\ldots,y_{in})\ \imp\ q_i \in \RB}\sempar{q_i \theta\ \lpand\ (x_1 = y_{i1}\theta)\ \lpand \ldots \lpand\ (x_n = y_{in}\theta)}_{x_1,\ldots,x_n}(\tables)\\
& = & \bigcup_{p_j(y_{i1},\ldots,y_{in})\ \imp\ q_i \in \RB}\sempar{q_i \theta\ \lpand\ (x_1 = y_{i1}\theta)\ \lpand \ldots \lpand\ (x_n = y_{in}\theta)}_{y_{i1}\theta,\ldots,y_{in}\theta}(\tables)\\
& = & \bigcup_{p_j(y_{i1},\ldots,y_{in})\ \imp\ q_i \in \RB}\sempar{q_i \theta}_{y_{i1}\theta,\ldots,y_{in}\theta}(\tables)\enspace.
\end{eqnarray*}
In the last equality, since $x_i$ are not present in $q_i$, the conditions $x_j = y_{ij}$ are always true and do not affect the set of relations generated by $q_i$, so they only affect $\Gamma$. Since $\theta$ only renames variables, without changing the semantics, we have LHS = RHS.

\item Let $q = q_1\ \lpand q_2$. By induction hypothesis,
    \begin{itemize}
    \item $\sempar{q_1}_{\Gamma}(\R) = \bigcup_{q' \in \transf{q_1}(\RB)} \sempar{q'}_{\Gamma}(\tables)$;
    \item $\sempar{q_2}_{\Gamma}(\R) = \bigcup_{q' \in \transf{q_2}(\RB)} \sempar{q'}_{\Gamma}(\tables)$.
    \end{itemize}
    Hence,
    \begin{eqnarray*}
          LHS & = & \sempar{q_1\ \lpand q_2}_{\Gamma}(\R) = \sempar{q_1}_{\Gamma}(\R) \cap \sempar{q_2}_{\Gamma}(\R)\enspace;\\
          RHS & = & \bigcup_{q' \in \transf{q_1 \lpand q_2}(\RB)} \sempar{q'}_{\Gamma}(\tables)\\
              & = & \bigcup_{q_1 \in \transf{q_1}(\RB), q_2 \in \transf{q_2}(\RB)} \sempar{q_1}_{\Gamma}(\tables) \cap \sempar{q_2}_{\Gamma}(\tables)\\
              & = & \bigcup_{q_1 \in \transf{q_1}(\RB)} \sempar{q_1}_{\Gamma}(\tables) \cap \bigcup_{q_2 \in \transf{q_2}(\RB)} \sempar{q_2}_{\Gamma}(\tables)\\
              & = & \sempar{q_1}_{\Gamma}(\R) \cap \sempar{q_2}_{\Gamma}(\R)\enspace.\qedhere
    \end{eqnarray*}


\end{itemize}
\end{proof}

We now show that Definition~\ref{def:derivedwrt} is maintained after applying the transformation on clause level.
\begin{lemma}\label{lm:transc}
Let $\R \derwrt{\tables} \RB$, $\tables \subseteq \R$. Then, $\sempar{q}_{\set{x_1,\ldots,x_n}}(\R) \derwrt{\tables} \transc{p(x_1,\ldots,x_n)\ \imp\ q\ \lpend}(\RB)$.
\end{lemma}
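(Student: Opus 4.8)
The plan is to unfold the two definitions that occur in the statement and thereby reduce the claim to Lemma~\ref{lm:transf}. First I would expand the clause transformation: by the definition of $\transfunc$ in Figure~\ref{fig:trans}, we have $\transc{p(x_1,\ldots,x_n)\ \imp\ q\ \lpend}(\RB) = \set{p(x_1,\ldots,x_n)\ \imp\ q'\ \lpend\ |\ q' \in \transf{q}(\RB)}$. Every clause in this set carries the same head $p(x_1,\ldots,x_n)$, and hence the same head schema $\set{x_1,\ldots,x_n}$.

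Next I would unfold Definition~\ref{def:derivedwrt} on this clause set. Since each clause has head variables $\set{x_1,\ldots,x_n}$, the relation derived from $\transc{p(x_1,\ldots,x_n)\ \imp\ q\ \lpend}(\RB)$ with respect to $\tables$ is exactly $\bigcup_{q' \in \transf{q}(\RB)} \sempar{q'}_{\set{x_1,\ldots,x_n}}(\tables)$. Consequently the target statement $\sempar{q}_{\set{x_1,\ldots,x_n}}(\R) \derwrt{\tables} \transc{p(x_1,\ldots,x_n)\ \imp\ q\ \lpend}(\RB)$ is equivalent to the plain equality $\sempar{q}_{\set{x_1,\ldots,x_n}}(\R) = \bigcup_{q' \in \transf{q}(\RB)} \sempar{q'}_{\set{x_1,\ldots,x_n}}(\tables)$.

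This equality is essentially the content of Lemma~\ref{lm:transf}, whose hypotheses $\R \derwrt{\tables} \RB$ and $\tables \subseteq \R$ are exactly the ones we are given. The one genuine gap is that Lemma~\ref{lm:transf} is stated at schema $\Gamma = \vars(q)$, whereas here the schema is the set of head variables $\set{x_1,\ldots,x_n}$, which in general is a proper subset of $\vars(q)$ (the body-only variables of a rule, such as the join attributes of the EDB calls, are implicitly existentially quantified). I would bridge this by instantiating Lemma~\ref{lm:transf} at $\Gamma = \vars(q)$ and then applying $\pi_{\set{x_1,\ldots,x_n}}$ to both sides, using that projection distributes over union and that, under the existential-projection convention for $\sempar{\cdot}_{\Gamma}$, one has $\sempar{\cdot}_{\set{x_1,\ldots,x_n}} = \pi_{\set{x_1,\ldots,x_n}}(\sempar{\cdot}_{\vars(\cdot)})$.

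I expect the main obstacle to be precisely this schema bookkeeping rather than any new conceptual difficulty. Concretely, one must check that the head variables $x_1,\ldots,x_n$ survive in every $q' \in \transf{q}(\RB)$ (they persist in the equalities $x_j = y_{ij}\theta$ produced by $\transfunf$), so that $\sempar{q'}_{\set{x_1,\ldots,x_n}}$ is well-formed despite the fresh renamings introduced by the substitution $\theta$, and that the projection onto $\set{x_1,\ldots,x_n}$ commutes with the union indexed by $\transf{q}(\RB)$. Once the projection step is justified, the claim follows directly from Lemma~\ref{lm:transf}, and the remaining reasoning is routine unfolding of definitions.
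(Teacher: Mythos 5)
Your proposal is correct and follows essentially the same route as the paper: unfold the clause transformation and Definition~\ref{def:derivedwrt} to rewrite the claim as an equality of unions, then reduce it to Lemma~\ref{lm:transf}. If anything, you are more careful than the paper about the schema mismatch --- the paper simply invokes Lemma~\ref{lm:transf} at the head schema $\set{x_1,\ldots,x_n}$ and closes with the remark that the head variables $x_j$ and their renamings $x_{ij}$ denote the same relation attributes, whereas you make the projection step $\pi_{\set{x_1,\ldots,x_n}}$ and its commutation with the union explicit.
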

\begin{proof}
We need to show that
\[LHS := \sempar{q}_{\set{x_1,\ldots,x_n}}(\R) = \bigcup_{p(x_{i1},\ldots,x_{in})\ \imp\ q_i \in \transc{p(x_1,\ldots,x_n)\ \imp\ q}(\RB)} \sempar{q_i}_{\set{x_{i1},\ldots,x_{in}}}(\tables) =: RHS\enspace.\]

\begin{itemize}
\item By Lemma~\ref{lm:transf}, since $\R \derwrt{\tables} \RB$, we have $LHS = \sempar{q}_{\set{x_1,\ldots,x_n}}(\R) = \bigcup_{q_i \in \transf{q}(\RB)}\sempar{q_i}_{\set{x_1,\ldots,x_n}}(\tables)$.
\item By definition of clause transformation,
\[\transc{p_j(x_1,\ldots,x_n)\ \imp\ q\ \tm.}(\RB) := \set{p_j(x_1,\ldots,x_n)\ \imp\ q_i\ \lpend\ |\ q_i \in \transf{q}(\RB)}\enspace.\]
Hence,
\[RHS = \bigcup_{q_i \in \transf{q}(\RB)} \sempar{q_i}_{\set{x_{i1},\ldots,x_{in}}}(\tables)\enspace.\]
\end{itemize}
Since $x_j$ and $x_{ij}$ for all $i,j$ correspond to the same relation attribute, we get $LHS=RHS$.
\end{proof}

We now show that the unfolding transformation preserves program semantics. The idea is that, since the property $\R \derwrt{\tables} \RB$ is maintained for each clause on each transformation step, we can show that $\R \derwrt{\tables} \RB$ still holds after the $n$-th steps, which means that the set of relations generated by the initial and the transformed PrivaLog programs are the same. We note that we do not explicitly the assumption that a program is stratified, since according to our semantics we would not obtain new relations from looping constructions for the initial program as well.
\begin{theorem}
Let $\mathbb{P}$ be a program defined according to the syntax of Figure~\ref{fig:syntax}. We have $\sempar{\mathbb{P}} = \sempar{\transp{\mathbb{P}}}$.
\end{theorem}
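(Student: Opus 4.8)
The plan is to run the two fixpoint computations of Definition~\ref{def:progsemantics} and Figure~\ref{fig:trans} side by side and to show that the relationship of Definition~\ref{def:derivedwrt} is an invariant of the iteration. Fix an arbitrary table assignment $\tables$ and goal inputs $[v_1,\ldots,v_k]$, which are common to both programs since $\transp{\mathbb{P}}$ keeps the declarations $D_1,\ldots,D_m$ and the goal $G$. Writing $\R^{(k)} := \ico^k(\bot)$ for the relational iteration of $\mathbb{P}$ and $\RB^{(k)} := \sico^k(\bot)$ for the syntactic iteration producing $\transp{\mathbb{P}}$ (with $\bot$ the all-empty tuple in each case), I would prove by induction on $k$ the componentwise claim $\R^{(k)}_i \derwrt{\tables} \RB^{(k)}_i$ for every clause predicate $P_i$.

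The base case is immediate: $\R^{(0)}_i = \emptyset_{\Gamma_i}$ and $\RB^{(0)}_i = \emptyset$, and by Definition~\ref{def:derivedwrt} an empty rule set derives the empty relation, so $\emptyset_{\Gamma_i} \derwrt{\tables} \emptyset$. For the inductive step I would unfold the two operators: by definition $\R^{(k+1)}_i = \sempar{P_i}_{\Gamma_i}(\R^{(k)})$ while $\RB^{(k+1)}_i = \transc{P_i}(\RB^{(k)})$. The induction hypothesis supplies $\R^{(k)} \derwrt{\tables} \RB^{(k)}$, so applying Lemma~\ref{lm:transc} to each clause $P_i = (p_i(x_1,\ldots,x_{n_i}) \imp q)$ yields exactly $\sempar{q}_{\set{x_1,\ldots,x_{n_i}}}(\R^{(k)}) \derwrt{\tables} \transc{P_i}(\RB^{(k)})$, i.e.\ $\R^{(k+1)}_i \derwrt{\tables} \RB^{(k+1)}_i$.

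The main obstacle will be the side condition $\tables \subseteq \R^{(k)}$ demanded by Lemma~\ref{lm:transc}, together with the special status of the extensional predicates. The declarations $D_1,\ldots,D_m$ are turned by $\ico$ into the data-pinning rules $\declared(D_i)\,\tables$, whose bodies are disjunctions of equalities and hence independent of the argument $\R$; I would therefore check directly that the extensional components of $\R^{(k)}$ equal $\tables(p_i)$ already after the first step, and carry $\tables \subseteq \R^{(k)}$ for $k \geq 1$ as an auxiliary invariant, treating the passage $k=0 \to k=1$ by hand. On the syntactic side the extensional predicates keep the abstract rule $p_i \imp \mathsf{true}$ throughout $\sico$ and are never unfolded away, so that evaluating the unfolded intensional bodies against $\tables$ reproduces the relational answer. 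Making this bookkeeping precise --- verifying that $\derwrt{\tables}$ is maintained for the extensional components under the convention that they are kept as literals and instantiated only through $\tables$ --- is the step I expect to require the most care.

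It then remains to pass to the fixpoints and read off $\sempar{\transp{\mathbb{P}}}$. Since the domains are finite, $\ico$ reaches its least fixpoint after finitely many steps, and $\sico$ stabilises once no further unfolded rules are produced (I rely here, as the paper does, on the standing assumption that these fixpoints exist); choosing $N$ beyond both stabilisation points gives $\mathsf{lfp}(\ico) = \R^{(N)} \derwrt{\tables} \RB^{(N)} = \mathsf{lfp}(\sico)$. Finally I would observe that the rule base $\mathsf{lfp}(\sico)$ of $\transp{\mathbb{P}}$ contains no intensional predicates, so when the semantics of Figure~\ref{fig:semantics} is instantiated on $\transp{\mathbb{P}}$ its immediate consequence operator collapses: once the extensional components are fixed from $\tables$, a single further application computes each intensional predicate's relation as precisely the relation derived from its unfolded rules w.r.t.\ $\tables$, and the retained goal clause is then evaluated over these now IDB-free predicates. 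By the invariant these relations equal $\mathsf{lfp}(\ico)$ componentwise, and applying the common goal operator $\sigma_{\Gamma_0}$ to both sides yields $\sempar{\mathbb{P}} = \sempar{\transp{\mathbb{P}}}$.
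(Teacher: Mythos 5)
Your proposal is correct and follows essentially the same route as the paper: both proofs establish the invariant $\R \derwrt{\tables} \RB$ by induction on the fixpoint iteration, using Lemma~\ref{lm:transc} for the inductive step, and then read off the equality of the two program semantics from the invariant at the stabilised stage. If anything, you are more careful than the paper's own proof, which silently elides the side condition $\tables \subseteq \R$ of Lemmas~\ref{lm:transf} and~\ref{lm:transc} and the special treatment of the extensional components that you rightly flag as the delicate bookkeeping step.
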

\begin{proof}
Let $\mathbb{P} = (D_1\ \ldots\ D_m\ C_1\ \ldots\ C_n\ G)$. The idea is to show that $\R \derwrt{\tables} \RB$ is preserved on each step of the $\mathsf{lfp}$ computation.
\begin{itemize}
\item On the first step, the fixpoint operator is applied to $\bot$. We trivially have $\bot \derwrt{\tables} \bot$.
\item Let $\R \derwrt{\tables} \RB$. Let $P_i = p_i(x_{i1},\ldots,x_{ik_i})\ \imp q_i$ be the $i$-th rule. We have:
\begin{eqnarray*}
\ico(\R) & = & (\sempar{q_1}_{\set{x_{11},\ldots,x_{1k_1}}}(\R),\ldots,\sempar{q_n}_{\set{x_{n1},\ldots,x_{nk_n}}}(\R));\\
\sico(\RB) & = & (\transc{p_1(x_{11},\ldots,x_{1k_1})\ \imp q_i}(\RB), \ldots, \transc{p_n(x_{n1},\ldots,x_{nk_n})\ \imp q_n}(\RB)).\\
\end{eqnarray*}
Since $\R \derwrt{\tables} \RB$, by Lemma~\ref{lm:transc}, we have $\sempar{q_i}_{\set{x_{i1},\ldots,x_{ik_i}}}(\R) \derwrt{\tables} \transc{p_i(x_{i1},\ldots,x_{in})\ \imp\ q_i\ \lpend}(\RB)$ for all $i$.
\end{itemize}
It remains to show that the property $\R \derwrt{\tables} \RB$ gives us what is needed. Let $n$ be any iteration step. Consider the set of rules $\RB^n$ obtained for $\transp{\mathbb{P}}$ after the $n$-th step of transformation. We have shown above that $\R^n \derwrt{\tables} \RB^n$. By definition of $\derwrt{\tables}$, we have 
\[\R^n = \set{(a_{i1},\ldots,a_{in})\ |\ p(x_{i1},\ldots,x_{in})\ \imp\ q_i\ \in \RB^n,\ (a_{i1},\ldots,a_{in}) \in \sempar{q_i}_{\set{x_{i1},\ldots,x_{in}}}(\tables)}\enspace.\]

\begin{itemize}
\item Let us prove that $\sempar{\mathbb{P}} \subseteq \sempar{\transp{\mathbb{P}}}$. Let $r \in \R^n = \ico_{\mathbb{P}}^n(\bot)$ be any relation considered to be true after the $n$-th step in the initial program $\mathbb{P}$. We show that $r \in \ico_{\transp{\mathbb{P}}}(\tables)$. Since $\R^n \derwrt{\tables} \RB^n$, there exists a rule $p(x_{i1},\ldots,x_{in})\ \imp\ q_i \in \RB^n$, such that $r = (a_{i1},\ldots,a_{in}) \in \sempar{q_i}_{\set{x_{i1},\ldots,x_{in}}}(\tables)$. Hence, the relation $r$ can be obtained from rules $\RB^n$ that have been generated on the $n$-th transformation step

\item Let us prove that $\sempar{\transp{\mathbb{P}}} \subseteq \sempar{\mathbb{P}}$. Let $r$ be any relation considered to be true in the transformed program $\transp{\mathbb{P}}$. Assume that $r$ has been obtained using any rule from the set $\RB^n$ for some $n$. By construction, the rules of $\RB^n$ do not contain any recursive calls, so their truthness depends only on the extensional database $\tables$. Therefore, $r = (a_1,\ldots,a_n)$ where $p(x_{i1},\ldots,x_{in})\ \imp\ q_i\ \in \RB^n$, and $(a_{i1},\ldots,a_{in}) \in \sempar{q_i}_{\set{x_{i1},\ldots,x_{in}}}(\tables)$. Since $\R^n \derwrt{\tables} \RB^n$, by definition of $\derwrt{\tables}$, we have $r \in \R^n$.
\end{itemize}
\end{proof}

The efficiency of transformation of Figure~\ref{fig:trans} highly depends on the exact order in which the rules are unfolded. Suppose that we have a rule $r \imp q$ which contains IDB predicates $p_1,\ldots,p_k$ as a part of $q$. In our translator, we are following three possible strategies:
\begin{itemize}
\item \textbf{Depth-first-search} (\texttt{dfs}). On one iteration, we only unfold the first IDB predicate $p_1$. If the first predicate is recursive, the algorithm will never converge on a fixpoint.

\item \textbf{Breadth-first-search} (\texttt{bfs}). On one iteration, we unfold each IDB predicate $p_1,\ldots,p_k$ once. This approach requires more memory, and if $p_i$ appears in the head of $j_i\in\mathbb{N}$ different rules, we may end up with $\prod_{i=0}^k j_i$ new rules.

\item \textbf{Ground-first search} (\texttt{gr}). On one iteration, we only unfold the first IDB predicate $p_1$. However, we only allow to substitute $p_1$ with a formula that does not contain any IDB predicates. This approach is the closest to the initial idea of bottom-up execution.

\end{itemize}

Currently, we let the user to choose the strategy. Ideally, we could execute all three in parallel and wait until at least one of them converges.

After elimination of disjunctions, adornment, and unfolding, the program satisfies Definition~\ref{def:inlined}. We can treat the language of the resulting program as a subset of PrivaLog, which we call $\mu$-PrivaLog. The syntax of $\mu$-PrivaLog is given in Fig.~\ref{fig:muplsyntax}. Since unfolding procure ensures that there is exactly one IDB predicate left in the program, we can w.l.o.g. rename that only predicate to \goal. The new construction $bterm$ is a term that does not contain free variables. The semantics of $\mu$-PrivaLog is the same as of semantics PrivaLog, but due to its constrains, it can be simplified. The main difference is that suffices to apply the immediate consequence operator $\ico$ just three times. The first step generates EDB relations. The second step uses these relations to generate all tuples for the only IDB relation \goal. The third step evaluates the goal. In Figure~\ref{fig:muplsemantics}, we define semantics of a $\mu$-PrivaLog program by putting these three steps together.

\begin{figure}
\begin{small}
\begin{eqnarray*}
\nt{program} & ::= & \kleenepos{\nt{schema}}\ \kleenepos{\nt{clause}}\ \nt{goal}\\
\nt{schema} & ::= & \imp\ \tm{type}(p(a_1 : d_1\ t_1,\ldots,a_n\ :\ d_n\ t_n)) \lpend\\
\nt{d} & ::= &  \tm{public}\ |\ \tm{private}\\
\nt{t} & ::= &  \tm{int}\ |\ \tm{bool}\ |\ \tm{float}\ | \tm{string}\\
\nt{goal} & ::= & \tm{?-}\ \goal((a_1\ :\ \tm{b})\ :\ d_1\ t_1,\ldots,(a_n\ : \tm{b}) :\ d_n\ t_n,\\
& & \indnt\indnt x_1\ :\ \tm{f},\ldots,x_m\ :\ \tm{f})\ \lpend\\
\nt{clause} & ::= & \nt{fact}\ |\ \nt{rule}\\
\\
\nt{fact} & ::= &  \goal(c^{t_1}_1,\ldots,c^{t_n}_n) \lpend\\
\nt{rule} & ::= &  \goal(x_1,\ldots,x_n) \imp\ \nt{formula}\ \lpend\\
\nt{formula} & ::= & \nt{atom}\ |\ \lpneg \nt{atom}\ |\ \nt{atom}_1\ \lpand\ \nt{atom}_2\\ 
\nt{atom} & ::= & \tm{true}\ |\ \tm{false}\ |\ p(term_1,\ldots,term_n)\ |\ x\ :\ \tm{f}\ \tm{=}\ bterm\\
& & |\ \nt{bterm}_1\ \odot\ \nt{bterm}_2 \\
 & &\text{ for $\odot \in \set{\tm{<},\tm{=<},\tm{=/=}, \tm{=:=}, \tm{>=},\tm{>}}$}\\ 
\nt{bterm} & ::= & x\ : \tm{b} |\ c^t\ |\ \tm{sqrt(} \nt{bterm} \tm{)} |\ \nt{bterm}_1\ \oplus\ \nt{bterm}_2\\
 & &\text{ for $\oplus \in \set{\tm{+},\tm{-},\tm{*},\tm{/},\tm{\lppow}}$}\\
\nt{term} & ::= & x\ :\ \tm{f}\ |\ bterm
\\
p & ::= & \text{predicate name, }p\neq\goal\\
x & ::= & \text{variable name}\\
c^t & ::= & \text{constant of type }t\\
a & ::= & \text{name (attribute)}
\end{eqnarray*}
\end{small}
\caption{Abstract syntax of $\mu$-PrivaLog. Here $\kleenepos{z}$ denotes one or more repetitions of $z$.}\label{fig:muplsyntax}
\end{figure}

\begin{figure*}
\begin{eqnarray*}
\sempar{D_1\,\ldots\,D_m\ C_1\,\ldots\,C_n\ G}\ [v_1,\ldots,v_k]\ \tables & := & \sempar{F}(\R)[0]\\
where & & \\
\R   & := & (\bot,\R_1,\ldots,\R_m)\\ 
\R_k & := & \set{p_k(\tables(p_k)[i,1],\ldots,\tables(p_k)[i,\arity(p_k)])\ |\ i \in \size(p_k)}\text{ for }k \in \set{1,\ldots,m}\\
F    & := & \bigcup_{j \in \set{1,\ldots,n}}\sempar{C_j\set{x_{j1} \mapsto v_1,\ldots,x_{jk} \mapsto v_k}}\text{ for }C_j = p_j(x_{j1},\ldots,x_{jn}) \imp q_j\lpend
\end{eqnarray*}
\caption{Relational semantics of a $\mu$-PrivaLog program with table declarations $D_1,\ldots,D_m$, clauses $C_1,\ldots,C_n$, and a goal $G$ executed on inputs $v_1,\ldots,v_k$ and a database $\tables$. The semantics of a clause is the same as in Figure~\ref{fig:semantics}, $\size(p)$ is the number of relations in the table $\tables(p)$, and $\arity(p)$ is the number of arguments in the predicate $p$.} \label{fig:muplsemantics}
\end{figure*}

\section{Type system and type inference}\label{sec:typeinf}

Since SecreC is a statically typed language, it is important to ensure the consistency of data types in the PrivaLog program and to warn the user of type errors early. For this, the translator implements a type system with type inference. We cannot introduce intermediate type annotations directly into a PrivaLog program, since the same rule may be called in different contexts, resulting in very different information flows. However, we can do it for a $\mu$-PrivaLog program.

Each expression in a $\mu$-PrivaLog program is assigned a privacy domain and a data type. In addition, each rule head has a return type of \texttt{bool} and a privacy domain. The privacy domain can be either \texttt{public} or \texttt{private}. All possible types are shown on Figure \ref{fig:type-lattice}.

\begin{figure}
  \centering
  \begin{tikzpicture}
    \node (top) at (0,0) {$\top$};
    \node (float32) at (-3,-1) {$\texttt{float}$};
    \node (string) at (0,-2) {$\texttt{string}$};
    \node (int32) at (-3,-2) {$\texttt{int}$};
    \node (bool) at (-3,-3) {$\texttt{bool}$};
    \node (bottom) at (0,-4) {$\bot$};
    \draw (top) -- (float32) -- (int32) -- (bool) -- (bottom);
    \draw (top) -- (string) -- (bottom);
  \end{tikzpicture}
  \hfil
  \begin{tikzpicture}
    \node (top) at (0,0) {$\top$};
    \node (private) at (0,-1) {$\texttt{private}$};
    \node (public) at (0,-2) {$\texttt{public}$};
    \node (bottom) at (0,-4) {$\bot$};
    \draw (top) -- (private) -- (public) -- (bottom);
  \end{tikzpicture}
  \caption{$\mu$-PrivaLog domain and type lattices}
  \label{fig:type-lattice}
\end{figure}

%

The rules for type inference are given in Figure~\ref{fig:typeinfrules}. Let us briefly discuss what these rules are doing.

The rule $DB$ extracts database schema, which is defined by the user, into global context $\Gamma$.

The rule $Goal$ assigns types to input variables of the $\mu$-PrivaLog program as initially specified by the programmer. The rule $Inputs$ propagates these input types to clauses $C_1,\ldots,C_m$ of the program, and each clause $C_j$ stores types in a distinct local context $\Gamma_j$ since variable names may repeat in different clauses.

For clause $C_j$, the types are propagated w.r.t. local context $\Gamma_j$. The rule $Lit$ assigns public domain and type $t$ to a constant of type $t$, and this type is determined by the syntax. The rule $Var$ assigns to bounded variables the domain and type that have been inferred for them before (e.g. by the rule $Asgn$).  The rules $UnOp$, $BinOp$, $Cmp$ are defined in a standard way, and the result of an operator application is assigned private type iff at least one of its arguments has private type. The rule $Asgn$ not only determines the type of the assigned variable, but also the type of assignment itself (as a boolean value), which always evaluates to true and hence has type \tm{public bool}. The most interesting is the rule $Pred$. In a $\mu$-PrivaLog program, all predicates of the form $p(e_1,\ldots,e_n)$ are EDB predicates. Each EDB call may read some data from the database (for a free argument), or make a comparison (for a bounded argument). Assignment of a value to a free variable formally always returns \texttt{true} and hence is a public constant. A comparison inherits its type from the argument and the corresponding EDB attribute.

Finally, the rule $Outputs$ collects the types of outputs from all clauses $C_1,\ldots,C_m$, and determines the types of program outputs. The clauses may potentially assign different types to the same output, hence we need to take the join of types to match all of them (e.g. private and public values together comprise a private output).


\begin{figure*}

\begin{minipage}{0.39\linewidth}

\infrule[Lit]{\Gamma_j\vdash c_t :\ \tm{public}\ t}{}
\vspace{0.2cm}

\infrule[Var]{\Gamma_j\vdash (x :\ \tm{b})\ d\ t}{\Gamma_j \vdash x\ :\ d\ t}
\vspace{0.2cm}

\infrule[UnOp(\oplus)]{\Gamma_j\vdash\oplus(e) :\ d\ t}{\Gamma_j\vdash e\ :\ d\ t}
\vspace{0.2cm}

\infrule[BinOp(\oplus)]{\Gamma_j\vdash(e_1 \oplus e_2) :\ (d_1 \vee d_2)\ (t_1 \vee t_2)}{\Gamma_j\vdash e_1\ :\ d_1\ t_1 \quad \Gamma_j\vdash\ e_2\ :\ d_2\ t_2}

\end{minipage}
\begin{minipage}{0.59\linewidth}

\infrule[DB]{\Gamma\vdash p : [d_1,\ldots,d_n]\ [t_1,\ldots,t_n]}{\tm{:-type(}p(a_1\ :\ d_1\ t_1,\ldots,a_n\ :\ d_n\ t_n)\tm{)} \in \set{D_1,\ldots,D_m}}
\vspace{0.2cm}

\infrule[Goal]{\Gamma \vdash x_1 :\ d_1\ t_1 \ldots \Gamma \vdash x_k :\ d_k\ t_k}{G = \tm{?-}p((x_1\ :\ \tm{b} ) :\ d_1\ t_1,\ldots,(x_k\ :\ \tm{b})\ :\ d_k\ t_k,y_1\ :\ \tm{f},\ldots,y_{\ell}\ :\ \tm{f})}
\vspace{0.2cm}

\infrule[Cmp(\odot)]{\Gamma_j\vdash(e_1 \odot e_2)\ :\ (d_1 \vee d_2)\ \tbool}{\Gamma_j\vdash e_1\ :\ d_1\ t \quad \Gamma_j\vdash\ e_2\ :\ d_2\ t}
\vspace{0.2cm}

\infrule[Asgn]{\Gamma_j\vdash x\ :\ d\ t\quad\Gamma_j\vdash(x\ :\ \tm{f}\ =\ e)\ :\ \tm{public}\ \tbool}{\Gamma_j\vdash e\ :\ d\ t}

\end{minipage}
\vspace{0.2cm}

\infrule[Pred]{\Gamma_j\vdash p(e_1,\ldots,e_n) : \bigvee_{i \in I_b}(d_i \vee d'_i)\ \tbool\quad \forall i \in I_f:\ \Gamma_j\vdash x_i\ :\ d_i\ t_i}{\Gamma\vdash  p : [d_1,\ldots,d_n]\ [t_1,\ldots,t_n]\quad I_f=\set{i \in [n]\ |\ e_i = (x_i\ :\ \tm{f})}\quad I_b = [n]\setminus I_f\quad\forall i \in I_b:\ \Gamma_j\vdash e_i\ :\ d'_i\ t_i}
\vspace{0.2cm}

\infrule[Inputs]{\forall j \in [m]:\ \Gamma_j\vdash x_{j1}\ :\ d_1\ t_1\ldots\Gamma_j\vdash x_{jk}\ :\ d_k\ t_k}{\Gamma\vdash x_1\ :\ d_1\ t_1\ldots\Gamma\vdash x_k\ :\ d_k\ t_k\quad\forall j \in [m]:\ C_j = p(x_{j1}\ : \tm{b},\ldots,x_{jk}\ :\ \tm{b},\ldots) \imp q_j\lpend}
\vspace{0.2cm}

\infrule[Outputs]{\Gamma\vdash x_1\ :\ \bigvee_{j \in [m]} d_{j1}\ \bigvee_{j \in [m]} t_{j1}\ldots\Gamma\vdash x_k\ :\ \bigvee_{j \in [m]} d_{jk}\ \bigvee_{j \in [m]} t_{jk}}{\forall j \in [m]:\ \Gamma_j\vdash x_{j1}\ :\ d_{j1}\ t_{j1}\ldots\Gamma_j\vdash x_{jk}\ :\ d_{jk}\ t_{jk}\quad C_j = p(x_{j1}\ : \tm{b},\ldots,x_{jk}\ :\ \tm{b},\ldots) \imp q_j\lpend}
\caption{Type inference rules for a $\mu$-PrivaLog program $D_1\ \ldots\ D_m\ C_1\ \ldots\ C_n\ G$.}\label{fig:typeinfrules}
\end{figure*}




\section{Generation of a SecreC program}\label{sec:strans}

In this section, we describe how a typed $\mu$-PrivaLog program is transformed to a SecreC program. We describe the resulting program in terms of \emph{SecreC core language} described in Section~\ref{sec:secrec}.

The definition of transformation is given in Figure~\ref{fig:strans}. For easier readability, we define different transformation functions $\stransfunp : program \to program$, $\stransfunc : clause \to fun$, $\stransfunf : formula \to stmt$, $\stransfune : term \to e$.

\begin{figure*}
\begin{eqnarray*}
\stranse{\lpneg e} & := & \tm{!}\stranse{e}\\
\stranse{e_1 \oplus_i e_2} & := & \stranse{e_1}\ \stransop{\oplus}_i\ \stranse{e_2}\indnt\text{ for }\oplus=[\tm{+},\tm{-},\tm{*},\tm{/}, \lppow, \lpand, \lpor],\ \stransop{\oplus}=[\tm{+},\tm{-},\tm{*},\tm{/}, \lppow, \&,\ |\ ]\\
\stranse{e_1 \odot_i e_2} & := & \stranse{e_1}\ \stransop{\odot}_i\ \stranse{e_2}\indnt\text{ for }\odot=[\tm{<},\tm{=<},\tm{=:=},\tm{=/=},\tm{>=},\tm{=}],\ \stransop{\odot}=[\tm{<},\tm{<=},\tm{==},\tm{!==},\tm{>=},\tm{==}]\\
\stranse{c^t} & := & c^t\text{ for a constant }c^t\\
\stranse{x} & := & x\text{ for a variable }x
\end{eqnarray*}

\begin{eqnarray*}
\stransf{k,\ q@((x\ :\ \tm{f})\ \tm{=}\ e)} & := & x\tm{ : }\dom(x)\ \type(x)\ \tm{[]}\tm{ = }\stranse{e}\tm{;}\\
& & \tm{b}_{k}\tm{ : }\dom(q)\tm{ bool [] = true;}\\
\stransf{k,\ q@p(e_1,\ldots,e_n)} & := & \stransf{k_1, e_1\ \tm{=}\ \tm{T}_k[1]\ :\ \tm{b}}\ \tm{;} \ldots\ \tm{;}\ \stransf{k_n, e_n\ \tm{=}\ \tm{T}_k[n]\ :\ \tm{b}} \tm{;}\\
& & \tm{b}_k\tm{ : }\dom(q)\tm{ bool [] = b}_{k_1}\tm{ \& }\cdots\tm{ \& b}_{k_n}\tm{;}\\
\stransf{k,\ \lpneg q} & := & \stransf{k', q}\ \tm{;}\\
& & \tm{b}_k\tm{ : }\dom(q)\tm{ bool [] = !b}_{k'}\tm{;}\\
\stransf{k,\ q} & := & \tm{b}_{k}\tm{ : }\dom(q)\tm{ bool [] = }\stranse{q}\tm{;}\indnt\indnt\indnt\indnt\indnt\indnt\text{for any other }q
\end{eqnarray*}\\

$\stransc{j,\ p(x_1\ :\ \tm{b},\ldots,x_k\ :\ \tm{b},\ \ y_1\ :\ \tm{f},\ldots,y_{\ell}\ :\ \tm{f})\ \imp\ q@(q_1,\ldots,q_m)\ \lpend}$
\begin{eqnarray*}
& :=  & p_{j}(x_1\tm{ : } \dom(x_1)\ \type(x_1),\ldots,x_k\tm{ : } \dom(x_k)\ \type(x_k))\tm{ : } (\dom(q)\ \tm{bool}\ \tm{[]},\ \ \dom(y_1)\ \type(y_1)\ \tm{[]},\ldots,\dom(y_{\ell})\ \type(y_{\ell})\ \tm{[]})\\
& & \indnt\indnt(\tm{T}_1\ :\ \dom(t_1)\ \type(t_1),\ldots,\tm{T}_m :\ \dom(t_m)\ \type(t_m))\tm{ = \join(\getTable(}t_1\tm{)},\ldots,\tm{\getTable(}t_m\tm{)));}\\
& & \indnt\indnt\stransf{1, q_1} \ldots \stransf{m, q_m}\\
& & \indnt\indnt\tm{b : }\dom(q)\tm{ bool [] = b}_1\tm{ \& }\cdots\tm{ \& b}_m\tm{;}\\ 
& & \indnt\indnt\tm{return}\ \tm{(b, }y_1,\ldots,y_{\ell}\tm{);}\\
where & & t_i = p\indnt\indnt\text{ for }q_i = p(\ldots)\\
      & & t_i = \tm{DUAL}\indnt\text{ for non-predicative }q_i
\end{eqnarray*}

\begin{eqnarray*}
\stransg{\tm{?-}\ p((x_1\ :\ \tm{b})\ d_1\ t_1,\ldots,(x_k\ :\ \tm{b})\ d_k\ t_k,\ \ y_1\ :\ \tm{f},\ldots,y_{\ell}\ :\ \tm{f})\ \lpend}
\end{eqnarray*}
\begin{eqnarray*}
 & := & \tm{main()}\\
 & & \indnt x_1\tm{ : }d_1\ t_1\tm{ = \argument(1);}\ \ldots\ x_k\tm{ : }d_k\ t_k\tm{ = \argument(k);}\\
 & & \indnt\tm{(b : private bool [], }\ y_1\tm{ : private }\type(y_1)\ \tm{[]}\tm{,}\ldots\tm{,}y_\ell\tm{ : private }\type(y_{\ell})\ \tm{[]} \tm{)}\\
 & & \indnt\indnt \tm{ = \shuffle(\unique(\concat(}\ p_1(x_1,\ldots,x_k),\ldots,p_m(x_1,\ldots,x_k)\ \tm{)));}\\
 & & \indnt\tm{pb : public bool []  = \declassify(b);}\\
 & & \indnt\tm{\publish(\filter(pb, }y_1\tm{))}\tm{;}\ \ldots\ \tm{\publish(\filter(pb, }y_\ell\tm{))}\tm{;}\\ \\
\stransp{C_1\,\ldots\,C_m\ G} & := & \stransc{1,C_1}\ \ldots\ \stransc{m,C_m}\ \stransg{G}\\
\end{eqnarray*}
\caption{Transformation of a $\mu$-PrivaLog program to SecreC. The maps $\dom$ and $\type$ return the domain and the type of an expression as specified in the context $\Gamma$ generated by type inference rules of Fig.~\ref{fig:typeinfrules}. Here \tm{DUAL} is a dummy one-row table, which is basically ignored in a cross product, and $\unique(\vec{b},\vec{y}_1,\ldots,\vec{y}_{\ell}) = (\first(\vec{y}_1,\ldots,\vec{y}_{\ell})\ \tm{\&}\ \vec{b}, \vec{y}_1,\ldots,\vec{y}_{\ell})$.} \label{fig:strans} 
\end{figure*}

$\stransfunc$ converts each clause $C_j$ to a SecreC function. The bounded arguments of $C_j$ will be function inputs (denoted $x_i$), and free arguments will be function outputs (denoted $y_i$). The function first collects all extensional predicates called inside the rule body, and takes the cross product of corresponding database tables. This results in a number $m$ of potential non-deterministic inputs. It uses $\stransfunf$ and $\stransfune$ to convert the rule body to a sequence of SecreC statements. These statements compute two kinds of outputs. Firstly, there is a boolean vector $b$ of length $m$, which for every non-deterministic input says whether it satisfies the rule body. Secondly, there are vectors $y_1,\ldots,y_\ell$, each of length $m$, that contain valuations of variables $y_i$ for all $m$ potential solutions, including those that actually do not satisfy the rule. In general, $b$ is a private vector, so we do not see whether a particular solution is satisfiable or not.

$\stransfunp$ constructs the main function of SecreC program which computes solutions $(b_j, y_{j1},\ldots,y_{j\ell})$ using the SecreC functions that $\stransfunc$ has generated for all $j \in \set{1,\ldots,n}$, and concatenates $y_i = y_{1i} \| \ldots \| y_{ni}$ for all $i \in \set{1,\ldots,\ell}$ to get the bag of all possible solutions (some solutions may repeat). It also concatenates $b = b_1 \| \ldots \| b_n$ to keep track of satisfiability of these solutions. To convert a bag to a set, the solutions need to be deduplicated. This is done without revealing locations of duplicate solutions (due to privacy reasons), so a privacy-preserving deduplication does not remove any solutions, but just updates the private vector $b$ and sets satisfiability bit to $0$ for duplicate solutions. All solutions are then shuffled (together with $b$), and the shuffled bits $b$ are declassified, i.e. become visible to the computing parties. Shuffling conceals the initial order of $b$. Only those solutions for which $b = 1$ are published to the client. Semantics of the SecreC program is defined as the set of these solutions.


The SecreC program obtained from our running example (without additional optimizations) would look as follows.
\begin{small}
\begin{verbatim}
goal(Port : private str, CargoType : private str)
      : (private bool [], public str [], private float []) {

    (T1, T2, T3, T5, T6)
    = crossPR(
        getTable("ship"),
        getTable("port"),
        getTable("ship"),
        getTable("ship"),
        getTable("port"));

    //line 1
    Ship : public str [] = T1.name;
    b1 : private bool [] = (CargoType == T1.cargotype);

    //line 2
    b2 : private bool []     = (Port == T2.name);
    Capacity : public int [] = T2.capacity;

    //line 3
    b3   : private bool [] = (Ship == T3.name);
    CargoAmount : private int [] = T3.cargoamount;

    //line 4
    b4   : private bool [] = (Capacity >= CargoAmount);

    //line 5
    b5 : private bool []  = (Ship == T5.name);
    X1 : private float [] = T5.latitude;
    Y1 : private float [] = T5.longitude;
    Speed : public int [] = T5.speed;

    //line 6
    b6 : private bool  [] = (Port == T6.name);
    X1 : private float [] = T6.latitude;
    Y1 : private float [] = T6.longitude;

    //line 7
    b7 : public bool [] = true;
    Time = sqrt((X1* - X2*)^2 + (Y1* - Y2*)^2) / Speed;

    b = b1 & b2 & b3 & b4 & b5 & b6 7 b7;
    return(b, Ship, Time);
}

main () {
    port : private str      = argument(1);
    cargotype : private str = argument(2);
    (b : private bool [], Ship : public string [], Time : private float [])
       = shuffle(unique(goal(port, cargotype)));
    pb = declassify(b);
    publish(filter(bp, Ship));
    publish(filter(bp, Time));
}
\end{verbatim}
\end{small}

\subsection{Correctness}

\subsubsection{Validity of SecreC code.}

First of all, we need to prove that the syntax of generated SecreC program is valid.

\begin{lemma}[syntax validity]\label{lm:secrec:syntax}
Let $\mathbb{P}$ be a $\mu$-PrivaLog program. Then, the program $\stransp{\mathbb{P}}$ follows the syntax of SecreC (Figure~\ref{fig:secrec:syntax}).
\end{lemma}
\begin{proof}
We assume that the maps $\dom$ and $\type$ contain valid type labels that correspond to SecreC domain and type syntax respectively. It suffices to show that each transformation function $\stransfune$, $\stransfunf$, $\stransfunc$, $\stransfung$, $\stransfunp$ generate valid SecreC code.
\begin{itemize}
\item $\stranse{t}$ converts a term $t$ to a SecreC expression that corresponds to the syntax $e$ of SecreC. Validity is proven by induction on the structure of $t$. We assume that the variable name space $\vnames$ and constant value space $\vals$ are the same for PrivaLog and SecreC. Boolean and arithmetic operators of PrivaLog are adapted to the syntax of SecreC.
\item $\stransf{q}$ converts a formula $q$ into an assignment, which corresponds to the syntax of $stmt$.  For shortness, a declaration and an assignment are represented by a single line. Validity is proven by induction on the structure of $q$, and assuming that $\stransfune$ corresponds to the syntax $e$ of SecreC. Declarations are formally different from syntax of SecreC since we do not introduce the scope parentheses. Implicitly, a declaration covers all the subsequent statements, which we can treat as a $cons$ of several statements.
\item $\stransc{C}$ converts a clause $C$ to a SecreC function that satisfies syntax $fun$ of SecreC. The function header matches the syntax of header of $fun$. The first line incorporates several function calls and $m$ assignments, where for all $i \in [m]$ each assignment in turn corresponds to $\arity(t_i)$ assignments for each column of $t_i$. The second line corresponds to a sequence of SecreC $stmt$ statements by validity of $\stransfunf$. The remaining two lines are an assignment (which is also a $stmt$) and a return statement.
\item $\stransg{G}$ converts a goal $G$ to a main-function. The body of this function consists of declarations, assignments, and function calls, all of which are valid instances of  $stmt$.
\item $\stransp{\mathbb{P}}$ for $\mathbb{P} = D_1\ldots D_m\ C_1 \ldots C_n\ G$ generates functions $\stransc{1,C_1}\ \ldots\ \stransc{n,C_n}$ which correspond to syntax of $fun$, and the main function $\stransg{G}$ which corresponds to syntax of $\tm{main ()}\ stmt$. Altogether, this corresponds to the syntax of $prog$.
\end{itemize}
\end{proof}

In addition to syntax check, the SecreC program needs to pass a type check. While data type check is essential for correctness, the privacy domain check will guarantee privacy. The type check rules of SecreC (an adapted version of the monomorphic variant of rules taken from~\cite{UT:Randmets17}) is given in Figure~\ref{fig:secrec:typecheck}. The data type check is quite standard. The main idea behind privacy domain check is that the result of applying any operation on inputs with privacy domains $d_1,\ldots,d_n$ will be $d_1 \vee \cdots \vee d_n$, i.e. \tm{private} iff at least one of the input domains $d_i$ is \tm{private}, so that a private value will never be assigned to a public variable. The only way to do it explicitly is the function {\declassify}.

\begin{figure*}

\begin{minipage}{0.39\linewidth}

\infrule[Var]{P;\Gamma\vdash x\ :\ d\ t}{(x\ :\ d\ t)\in\Gamma}
\vspace{0.2cm}

\infrule[Skip]{P;\Gamma\vdash\tm{skip}}{}
\vspace{0.2cm}

\infrule[Cons]{P;\Gamma\vdash s_1\ \tm{;}\ s_2}{P;\Gamma\vdash s_1\quad P;\Gamma\vdash s_2}
\vspace{0.2cm}

\infrule[Decl]{P;\Gamma\vdash \set{x\ :\ d\ t\ \tm{;}\ s}}{\Gamma' = (\Gamma, x\ : d\ t)\quad P;\Gamma'\vdash s}
\vspace{0.2cm}

\infrule[UnOp]{P;\Gamma\vdash \oplus e\ :\ d\ t}{P;\Gamma\vdash e\ :\ d\ t}
\vspace{0.2cm}
\end{minipage}
\begin{minipage}{0.59\linewidth}

\infrule[Lit]{P;\Gamma\vdash c_t\ :\ \tm{public}\ t}{}
\vspace{0.2cm}

\infrule[Asgn]{P;\Gamma\vdash x\ \tm{=}\ e}{(x\ : d\ t) \in \Gamma\quad P;\Gamma\vdash e\ :\ d\ t}
\vspace{0.2cm}

\infrule[Return]{P;\Gamma\vdash \screturn\ e}{(\screturn :\ d\ t) \in \Gamma\quad P;\Gamma\vdash e\ :\ d\ t}
\vspace{0.2cm}

\infrule[BinOp]{P;\Gamma\vdash e_1 \oplus e_2\ :\ (d_1 \vee d_2)\ t}{P;\Gamma\vdash e_1\ :\ d_1\ t\quad P;\Gamma\vdash e_2\ :\ d_2\ t}
\vspace{0.2cm}

\infrule[Declassify]{P;\Gamma\vdash \declassify\ e\ :\ \tm{public}\ t}{P;\Gamma\vdash e\ :\ d\ t}
\vspace{0.2cm}
\end{minipage}

\infrule[Cmp(\odot)]{P;\Gamma\vdash e_1 \odot e_2\ :\ (d_1 \vee d_2)\ \tbool}{P;\Gamma\vdash e_1\ :\ d_1\ t \quad P;\Gamma\vdash\ e_2\ :\ d_2\ t}
\vspace{0.2cm}

\infrule[Call]{P;\Gamma\vdash f(e_1,\ldots,e_n)\ :\ d\ t}{P\vdash f : (d_1\ t_1,\ldots,d_n\ t_n) \to\ d\ t\quad P;\Gamma\vdash e_1\ :\ d_1\ t_1 \quad P;\Gamma\vdash\ e_n\ :\ d_n\ t_n}
\vspace{0.2cm}

\infrule[Function]{P\vdash f(x_1\ :\ d_1\ t_1,\ldots,x_n\ :\ d_n\ t_n)\ d\ t\ s}{P;(\tm{return}\ :\ d\ t,x_1\ :\ d_1\ t_1,\ldots,x_n\ :\ d_n\ t_n)\vdash s}
\vspace{0.2cm}

\infrule[Program]{P}{P,\emptyset\vdash \tm{body}(P)\quad \forall i \in [n]:\ P\vdash F_i}
\vspace{0.2cm}
\caption{Type check rules for a SecreC program $F_1\ \ldots\ F_n\ P$.}\label{fig:secrec:typecheck}
\end{figure*}

\begin{theorem}[type validity]\label{thm:secrec:type}
Let $\mathbb{P}$ be a $\mu$-PrivaLog program that successfully passed type inference of $\mu$-PrivaLog (Figure~\ref{fig:typeinfrules}). 
Then, the program $\stransp{\mathbb{P}}$ passes type check of SecreC (Figure~\ref{fig:secrec:typecheck}). 
\end{theorem}
\begin{proof}
First of all, we apply Lemma~\ref{lm:secrec:syntax} to claim that $\stransp{\mathbb{P}}$ does correspond to the syntax of SecreC, which allows to make the type check rules applicable.



Our goal is to show that the type check rule $\sce{Program}$ is satisfied. For this, we need to show that each function declaration $F_j = \stransf{j,C_j}$ satisfies the rule $\sce{Function}$, and the body of the \tm{main} function $\tm{body}(P)$ passes the type check as a statement.

The rule $\sce{Function}$ initializes a context $\Gamma_j$ with the types of inputs and the return statement for the function $\stransf{j,C_j}$. These types are taken from the SecreC function signature, and for each input $x$ the type is $\dom(x)\ \type(x)$, where $\dom$ and $\type$ are the domain and type derived by $\mu$-PrivaLog inference rules. To satisfy $\sce{Function}$, we need to show that the function body passes the type check. First of all, let us show that all its expressions satisfy SecreC expression type checking rules.

\begin{itemize}

\item The type check rules $\sce{Lit}$, $\sce{Var}$ $\sce{UnOp}$, $\sce{BinOp}$, $\sce{Cmp}$ need to be satisfied in the expressions generated by $\stranse{t}$. This is ensured by the type inference rules $\dle{Lit}$, $\dle{Var}$ $\dle{UnOp}$, $\dle{BinOp}$, $\dle{Cmp}$.

The rule $\dle{Var}$ is only defined for bounded variables, so we need to assume that $t$ is a ground term. Note that $\stranse{t}$ is called either for an assignment $x\ \tm{=}\ t$, or a formula $t$ that is neither an assignment nor a predicate. In $\mu$-PrivaLog, $t$ is ground in both such cases. If $x = \tm{T}_k[i]$, then we are dealing with a fresh equality $e_i\ \tm{=}\ \tm{T}_k[i]$ that is not a part of $\mathbb{P}$ and hence has not passed an explicit type inference. However, such equalities may only be generated for $p(e_1,\ldots,e_n)$, and the rules $\dle{Pred}$ and $\dle{DB}$ together ensure that $e_i$ has the same type as $\tm{T}_k[i]$.

In addition, $\sce{UnOp(\tm{!})}$ and $\sce{BinOp(\tm{\&})}$ need to be satisfied several times for boolean expressions over variables $\tm{b}_k$ which are not variables of $\mathbb{P}$. In all such cases, the arguments of \tm{!} and \tm{\&} are in turn SecreC variables $\tm{b}_k$, and for each such $k$, a statement $\stransf{k,q}$ has been called at some point. We see that $\stransf{k,q}$ always contains a declaration of $\tm{b}_k$, so the rule $\sce{Decl}$ has extended the context $\Gamma$ of these statements with $\tm{b}_k$.

\item The rule $\sce{Declassify}$ is not used inside $\stransf{j,C_j}$.
\end{itemize}

Now let us show that SecreC statement type checking rules are satisfied in all functions $\stransf{j,C_j}$.

\begin{itemize}
\item The rule $\sce{Skip}$ is trivially satisfied since there are no \tm{skip} statements.
\item The rule $\sce{Asgn}$ requires that the variable $x$ must have exactly the same type as the expression $e$ in an assignment $x \tm{=} e$.
     \begin{itemize}
     \item If $x$ is a PrivaLog variable, the rule $\dle{Asgn}$ ensures that $x$ has exactly the same type as $e$ for all expressions $e$ except $\tm{T}_k[i]$. If $e = \tm{T}_k[i]$, the type of $x$ is defined by the rule $\dle{Pred}$. The variables $\tm{T}_k[i]$ are declared in the first line of $\stransf{j,C_j}$, and their declared types are exactly the same.
     \item If $x = \tm{T}_k[i]$, we are dealing with the first line of $\stransf{j,C_j}$. If $t_k = \tm{DUAL}$, then the assignment is just omitted. Otherwise, $t_k = p$ for an EDB predicate $p$. These types have in turn been generated by the rule $\dle{DB}$, where they are taken directly from the schema declared by the user. The functions {\join} and {\getTable} return the cross product of EDB tables, and $e$ corresponds exactly to the columns of the table $t_k = p$, which have the same type as chosen by $\dle{DB}$
     \item If $x = \tm{b}_k$ for some $k$, it is assigned the domain $\dom(q)$ and the type $\type(q)$ which is $\tbool$ for a formula $q$. The rules $\dle{UnOp(\lpneg)}$, $\dle{BinOp(\lpand)}$, $\dle{BinOp(\lpor)}$ ensure type check for $\tm{b}_k$ computed as a combination of other $\tm{b}_{k'}$. If $q$ is a $\mu$-PrivaLog assignment, it always has \tm{public boolean} type by the rule $\dle{Asgn}$, and the constant $\tm{true}$ also has type \tm{public boolean}.
     \end{itemize}
\item The rule $\sce{Return}$ is satisfied because the declarations of output types in the function header are taken directly from $\Gamma$. We only need to verify that these types have indeed been declared in $\Gamma$. By assumption (which comes from standard Datalog), any variable $y_i$ that occurs in the rule header must also occur in the body. Since $y_i$ has label \tm{f} in the header, it will also  have label $\tm{f}$ in its first occurrence, which should be either an assignment or an EDB predicate, which results in adding a declaration of $y_i$ to $\Gamma$.
\item The rule $\sce{Decl}$ requires that all statements following a declaration must be satisfied assuming that we add the declared variables to the context $\Gamma$. We have proven in the previous points that all statements indeed pass the type check under these assumptions.
\item The rule $\sce{Cons}$ requires that each statement needs to pass the type check. We have proven in the previous points that all possible encountered statements do pass the type check one by one, so $\sce{Cons}$ is satisfied as well.
\end{itemize}

We have shown that all function declarations $F_i$ of a SecreC program $P$ satisfy the rule $\sce{Function}$. It remains to show that the body of $\stransg{G}$ passes the type check as well.

$\stransg{G}$ is a sequence of SecreC statements which needs to satisfy the rule $\sce{Cons}$. Let us show how its lines are satisfied one by one.
    \begin{itemize}
    \item The rule $\dle{Goal}$ ensures that the types of inputs $x_1,\ldots,x_k$ are the same as specified by the user in the PrivaLog program. The function $\tm{argument}$ will read variables of exactly these types.
    \item The types of the assignments on the second line look the same as the types of functions constructed by $\stransc{j,C_j}$. However, for each $j$, the type inference of $\mu$-PrivaLog belongs to a different context $\Gamma_j$ which may differ. The rule $\dle{Inputs}$ ensures that the input types of arguments $x_1,\ldots,x_n$ are exactly $d(x_1)\ t(x_1),\ldots,d(x_n)\ t(x_n)$ in context $\Gamma_j$ for all $j \in [n]$. To satisfy the  rule $\sce{Call}$, we also need that the return type of each $p_j$ matches the return type of a corresponding SecreC function. While $G$ does not set immediate type constraints on $p_j(x_1,\ldots,x_n)$, it does set a constraint on the output of {\concat}, {\unique}, {\shuffle} applied to these values. The function {\concat} concatenates together vectors of potentially different types, which requires that the result type should be $(d_1 \vee \ldots \vee d_n)\ (t_1 \vee \ldots \vee t_n)$. The function {\unique} returns a private boolean vector. The function {\shuffle} converts all public values to private, since otherwise positions of public values might reveal the shuffling permutation. We have $t(y_i) = t(y_{1i}) \vee \cdots \vee t(y_{ni})$ thanks to rule $\dle{Outputs}$.
    \item The type of \tm{pb} is \tm{public bool []}, and this line satisfies the rules $\sce{Declassify}$ and $\sce{Asgn}$.
    \item The last line publishes output to the client, and has no output. The input of {\publish} can be of any type. The first argument of {\filter} should have type \tm{public bool []}, and the second argument may be of any type. The rule $\sce{Call}$ is satisfied for this line. 
    \end{itemize}
All preconditions of the rule $\sce{Program}$ are thus satisfied for $\stransp{\mathbb{P}}$, so it passes the type check.
\end{proof}

\subsubsection{Equivalence of PrivaLog and SecreC semantics.}

In Figure~\ref{fig:semantics}, we have defined semantics of an arithmetic expression only for a single relation tuple (since the definitions have been borrowed from relational algebra). We will need to generalize the definition to relations. That is, for an expression $e$ and a set of relation tuples $\R$, we define expression semantics as
\[\sempar{e}\ \R = \set{\sempar{e}\ r\ |\ r \in \R}\enspace.\]
We provide a shorthand notation for a function that selects from a vector only those elements for which the satisfiability bit is $1$.
\begin{definition}[choice function for a vector]\label{def:choice}
Let $\vec{v} \in \vecset{m}$. Let $\vec{b}$ a boolean vector of length $m$. We define choice function as
\[\choice{}(\vec{v}, \vec{b}) = \set{v_i\ | i \in [m],\ b_i = true}\enspace.\]
\end{definition}
Generalization of Definition~\ref{def:choice} from $\vec{v} \in \vecset{m}$ to $[\vec{v}_1,\ldots,\vec{v}_n] \in \vecset{m}^n$ is straightforward. It can also be generalized to a program state as follows.
\begin{definition}[choice function for a state]\label{def:choice}
Let $s$ be a state over $\vecset{m}$. Let $\vec{b}$ a boolean vector of length $m$. Let $\set{x_1,\ldots,x_n} = Dom(s)$. We define choice function as
\[\choice{}(s, \vec{b}) = \set{s[x_1]_i,\ldots,s[x_n]_i\ |\ i \in[m],\ b_i = true}\enspace.\]
\end{definition}
In order to prove that SecreC and PrivaLog programs perform the same computation, we need to define an equivalence relation between SecreC program states and PrivaLog relations. First of all, if we consider a single variable, the only difference is that the elements in a vector are ordered, but in a relation they comprise an unordered set.
\begin{definition}[relation and vector equivalence w.r.t. attribute]\label{def:vecrelequiv}
A relation $\R$ and a vector $\vec{v} \in \vecset{m}$ are called \emph{equivalent} w.r.t. attribute $x$ (denoted $\vec{v} \eqvs{x} \R$) if
\[\set{v_i\ |\ i \in [m]} = \R[x]\enspace.\]
\end{definition}
Generalization of Definition~\ref{def:vecrelequiv} from $\vec{v} \in \vecset{m}$ to $[\vec{v}_1,\ldots,\vec{v}_n] \in \vecset{m}^n$ is straightforward. It can also be generalized to a program state as follows.
\begin{definition}[relation and program state equivalence w.r.t. schema]\label{def:staterelequiv}
A relation $\R$ and a program state $s$ over $\vecset{m}$ are called \emph{equivalent w.r.t. schema} $\Gamma = \set{x_1,\ldots,x_n}$ (denoted $s \eqvs{\Gamma} \R$)  if
\[\set{s[x_1]_i,\ldots,s[x_n]_i\ |\ i \in [m]} = \set{r[x_1],\ldots,r[x_n]\ | r \in \R}\enspace.\]
\end{definition}

Note that $s \eqvs{\Gamma} \R$ implies $s[x] \eqvs{x} \R$ for all $x \in \Gamma$. However, it does not hold the other way around, as the relations between variables are important for $s \eqvs{\Gamma} \R$.

We show that choice function preserves the equivalence relation.
\begin{proposition}\label{prop:choicesplit}
Let $s \in \vecset{m}$ be a state and $\R$, $\R'$ relations s.t. $\choice{}(s, \vec{b}) \eqvs{\Gamma} \R$ and  $\choice{}(s, \vec{b}') \eqvs{\Gamma} \R'$ for some schema $\Gamma$ and boolean vectors $\vec{b}$ and $\vec{b}'$ of length $m$. We have
\[
\choice{}(s, \vec{b} \wedge \vec{b}') \eqvs{\Gamma} \R \cap \R'\enspace.
\]
\end{proposition}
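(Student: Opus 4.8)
The plan is to prove the asserted equivalence by unfolding Definition~\ref{def:choice} and Definition~\ref{def:staterelequiv} and then establishing two set inclusions. Write $\Gamma = \set{x_1,\ldots,x_n}$ and, for each index $i \in [m]$, abbreviate the $i$-th row as $t_i := (s[x_1]_i,\ldots,s[x_n]_i)$. With this notation $\choice{}(s, \vec{b}) = \set{t_i \mid i \in [m],\ b_i = true}$, and likewise for $\vec{b}'$ and for $\vec{b} \wedge \vec{b}'$. The two hypotheses become the tuple-set identities $\set{t_i \mid b_i = true} \eqvs{\Gamma} \R$ and $\set{t_i \mid b'_i = true} \eqvs{\Gamma} \R'$, and the goal is $\set{t_i \mid (\vec{b} \wedge \vec{b}')_i = true} \eqvs{\Gamma} \R \cap \R'$. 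The single structural fact I would record first is that $(\vec{b} \wedge \vec{b}')_i = true$ holds exactly when $b_i = true$ and $b'_i = true$, so the index set selected by the conjunction is the intersection of the two index sets selected individually.

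For the inclusion from left to right, I would take an arbitrary $t \in \choice{}(s, \vec{b} \wedge \vec{b}')$, so $t = t_i$ for some $i$ with $b_i = true$ and $b'_i = true$. Because $b_i = true$, the first hypothesis places $t_i$ in $\R$; because $b'_i = true$, the second places $t_i$ in $\R'$; hence $t \in \R \cap \R'$. This direction is immediate and uses only that a conjunction entails each conjunct, so it needs no care.

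The reverse inclusion is where I expect the real difficulty. Given $t \in \R \cap \R'$, the first hypothesis supplies an index $i$ with $b_i = true$ and $t = t_i$, and the second an index $j$ with $b'_j = true$ and $t = t_j$; but to conclude $t \in \choice{}(s, \vec{b} \wedge \vec{b}')$ I need a \emph{single} index $k$ satisfying $b_k = true$, $b'_k = true$, and $t = t_k$. The hard part is precisely bridging the possible gap between $i$ and $j$: if the same tuple $t$ occurs in several rows of $s$ carrying inconsistent bits, the naive argument fails. The fact that closes this gap is that both $\vec{b}$ and $\vec{b}'$ are obtained by evaluating boolean formulas on the shared rows (as dictated by the construction in Figure~\ref{fig:strans}), so the bits depend only on the row values; consequently $t_i = t_j$ forces $b_i = b_j$ and $b'_i = b'_j$, and from $b_i = true$ together with $b'_j = b'_i = true$ we obtain the common witness $k = i$. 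I would therefore isolate this row-coherence property as the one nontrivial ingredient, invoke it exactly at this point, and then finish by rewriting the resulting index-set intersection back through Definition~\ref{def:choice} and Definition~\ref{def:staterelequiv}.
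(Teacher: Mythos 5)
Your decomposition matches the paper's proof in substance: the paper also reduces the claim to the set identity
\[
\set{t_i \mid i\in[m],\ b_i = true,\ b'_i = true} \;=\; \set{t_i \mid i\in[m],\ b_i = true} \cap \set{t_i \mid i\in[m],\ b'_i = true}
\]
(in your notation $t_i = (s[x_1]_i,\ldots,s[x_n]_i)$), together with the corresponding identity for the tuples of $\R\cap\R'$, and then invokes the two hypotheses. The difference is that the paper asserts this identity silently inside a chain of equivalences, whereas you correctly isolate it as the only nontrivial step — and your worry is justified. For arbitrary bit vectors the reverse inclusion fails when the same tuple occupies two rows carrying complementary bits: take $m=2$, $\Gamma=\set{x}$, $s[x]=(a,a)$, $\vec{b}=(true,false)$, $\vec{b}'=(false,true)$, $\R=\R'=\set{(a)}$; both hypotheses hold, yet $\choice{}(s,\vec{b}\wedge\vec{b}')=\emptyset$ while $\R\cap\R'$ is nonempty. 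So the proposition as literally stated is false, and the paper's one\mbox{-}line proof glosses over exactly the gap you located. Your repair — that $\vec{b}$ and $\vec{b}'$ are \emph{row-coherent}, i.e.\ determined by the row values because they arise as $\sempar{b}\ s$ for boolean expressions $b$ over $\Gamma$ — is sound and is in fact satisfied at every site where the proposition is used (Proposition~\ref{prop:choicefilter} and Lemma~\ref{lm:formulasem} only ever apply it to such vectors). But be aware that this is an additional hypothesis you are importing from the surrounding construction rather than from the statement itself: strictly speaking you are proving a corrected version of the proposition, and if you write this up you should state the row-coherence assumption explicitly, since without it only the left-to-right inclusion survives.
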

\begin{proof}
Let $\Gamma = \set{x_1,\ldots,x_n}$.
\begin{eqnarray*}
\choice{}(s, \vec{b} \wedge \vec{b}') \eqvs{\Gamma} \R \cap \R'
& \iff & \set{s[x_1]_i,\ldots,s[x_n]_i\ |\ i \in[m],\ b_{i} = true, b'_{i} = true}\\
& & = \set{r[x_1],\ldots,r[x_n]\ | r \in \R \cap \R')}\\
& \iff & \set{s[x_1]_i,\ldots,s[x_n]_i\ |\ i \in[m],\ b_{i} = true}\\
& & \cap \set{s[x_1]_i,\ldots,s[x_n]_i\ |\ i \in[m],\ b'_{i} = true}\\
& & = \set{r[x_1],\ldots,r[x_n]\ | r \in \R} \cap \set{r[x_1],\ldots,r[x_n]\ | r \in \R'}\\
& \Longleftarrow & \choice{}(s, \vec{b}) \eqvs{\Gamma} \R \wedge \choice{}(s, \vec{b}') \eqvs{\Gamma} \R'\enspace,
\end{eqnarray*}
where the latter statement holds by assumption.
\end{proof}



\begin{proposition}\label{prop:choicefilter}
Let $s \in \vecset{m}$ be a state s.t. $\choice{}(s, \vec{b}') \eqvs{\Gamma} \R$ for some schema $\Gamma$ and boolean vector $\vec{b}'$ of length $m$. Let $b$ a boolean expression defined over variables of $\Gamma$, and let $\vec{b} = \sempar{b}\ s$. We have
\[
\choice{}(s, \vec{b} \wedge \vec{b}') \eqvs{\Gamma} \sigma_{b} \R_{\vec{b}'}\enspace.
\]
\end{proposition}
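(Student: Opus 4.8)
The plan is to prove the claimed equivalence by unfolding the definitions of $\choice{}$ and $\eqvs{\Gamma}$ directly, in the style of the proof of Proposition~\ref{prop:choicesplit}. Writing $\Gamma = \set{x_1,\ldots,x_n}$, the goal $\choice{}(s, \vec{b} \wedge \vec{b}') \eqvs{\Gamma} \sigma_b \R$ unfolds to
\[
\set{(s[x_1]_i,\ldots,s[x_n]_i)\ |\ i \in [m],\ b_i = true,\ b'_i = true} = \set{(r[x_1],\ldots,r[x_n])\ |\ r \in \sigma_b \R}\enspace,
\]
and by the definition of $\sigma_b$ the right-hand side equals $\set{(r[x_1],\ldots,r[x_n])\ |\ r \in \R,\ \sempar{b}\ r = true}$, where $\R$ is the relation furnished by the hypothesis $\choice{}(s, \vec{b}') \eqvs{\Gamma} \R$.

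The key fact I would exploit is that $\vec{b} = \sempar{b}\ s$ is computed row by row, so $b_i = \sempar{b}\ s_i$, where $s_i$ denotes the $i$-th row of $s$. Since $b$ is by hypothesis a boolean expression over the variables of $\Gamma$ only, its value $\sempar{b}\ s_i$ depends solely on the $\Gamma$-projection $(s[x_1]_i,\ldots,s[x_n]_i)$ of that row; equivalently, $\sempar{b}$ descends to a well-defined predicate on $\Gamma$-tuples, so that for any tuple $r$ whose schema contains $\Gamma$ the value $\sempar{b}\ r$ is determined by $(r[x_1],\ldots,r[x_n])$.

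With this in hand I would intersect both sides of the set equality given by the hypothesis, namely $\set{(r[x_1],\ldots,r[x_n])\ |\ r \in \R} = \set{(s[x_1]_i,\ldots,s[x_n]_i)\ |\ b'_i = true}$, with the set of $\Gamma$-tuples on which $\sempar{b}$ evaluates to $true$. This preserves equality and yields
\[
\set{(r[x_1],\ldots,r[x_n])\ |\ r \in \R,\ \sempar{b}\ r = true} = \set{(s[x_1]_i,\ldots,s[x_n]_i)\ |\ b'_i = true,\ \sempar{b}\ s_i = true}\enspace,
\]
and since $\sempar{b}\ s_i = b_i$ the right-hand side is exactly the unfolding of $\choice{}(s, \vec{b} \wedge \vec{b}')$, which closes the argument.

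I expect the only genuine subtlety to be the observation in the second paragraph: that filtering a relation by $\sigma_b$ matches conjunction with the bit vector $\vec{b}$ requires $\sempar{b}$ to be a function of the $\Gamma$-projection alone. This is precisely what the hypothesis that $b$ is defined over variables of $\Gamma$ provides, and it simultaneously disposes of the set-versus-multiset concern: several rows of $s$ may share the same $\Gamma$-projection, but they all receive the same value of $b$, so passing to sets does not disturb the equality. An alternative would be to combine $\sigma_b \R = \sigma_b(\univ_{\Gamma}) \cap \R$ from Proposition~\ref{prop:relalgebra} with Proposition~\ref{prop:choicesplit}, but this still requires exhibiting the relation equivalent to $\choice{}(s,\vec{b})$ and offers no real saving over the direct computation above.
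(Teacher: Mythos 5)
Your proposal is correct and follows essentially the same route as the paper's own proof: unfold $\choice{}$ and $\eqvs{\Gamma}$, observe that $\sempar{b}$ factors through the $\Gamma$-projection of each row (the paper states this as $\sempar{b}\ s = \sempar{b}\ (s[x_1],\ldots,s[x_n])$), and then derive the claim by intersecting both sides of the hypothesised set equality with the tuples satisfying $b$. The subtlety you single out is exactly the one the paper relies on, so there is nothing to add.
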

\begin{proof}
Let $\Gamma = \set{x_1,\ldots,x_n}$. Since $\sempar{b}$ is only defined over $\Gamma$, we have $\sempar{b}\ s = \sempar{b}\ (s[x_1],\ldots,s[x_n])$.

\begin{eqnarray*}
\choice{}(s, \vec{b} \wedge \vec{b}') \eqvs{\Gamma} \sigma_{b} \R & \iff & \set{s[x_1]_i,\ldots,s[x_n]_i\ |\ i \in[m],\ b_i = true, b'_i = true}\\
& & = \set{r[x_1],\ldots,r[x_n]\ | r \in \sigma_{b}(\R)}\\
&\iff & \set{s[x_1]_i,\ldots,s[x_n]_i\ |\ i \in[m],\ (\sempar{b}\ s)_i = true, b'_i = true}\\
& & = \set{r[x_1],\ldots,r[x_n]\ | r \in \R, \sempar{b}\ r = true}\\
&\iff & \set{s[x_1]_i,\ldots,s[x_n]_i\ |\ i \in[m],\ \sempar{b}\ (s[x_1]_i,\ldots,s[x_n]_i) = true, b'_i = true}\\
& & = \set{r[x_1],\ldots,r[x_n]\ | r \in \R, \sempar{b}\ (r[x_1],\ldots,r[x_n]) = true}\\
&\Longleftarrow& \set{s[x_1],\ldots,s[x_n]\ |\ i \in [m], b'_i = true} = \set{r[x_1],\ldots,r[x_n]\ |\ r \in \R}\\
&\iff & \choice{}(s, \vec{b}')  \eqvs{\Gamma} \R\enspace,
\end{eqnarray*}
where the latter statement holds by assumption.
\end{proof}

Putting $\vec{b}' = \vec{true}$, we get the following corollary.
\begin{corollary}\label{cor:choicefilter}
Let $s$ be a state s.t. $s \eqvs{\Gamma} \R$ for some schema $\Gamma$. Let $b$ a boolean expression defined over variables of $\Gamma$, and let $\vec{b} = \sempar{b}\ s$. We have
\[
\choice{}(s, \vec{b}) \eqvs{\Gamma} \sigma_{b} \R\enspace.
\]
\end{corollary}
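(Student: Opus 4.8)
The plan is to obtain Corollary~\ref{cor:choicefilter} as the special case of Proposition~\ref{prop:choicefilter} in which the auxiliary boolean vector $\vec{b}'$ is taken to be the constant vector $\vec{true}$ of length $m$, exactly as announced in the sentence preceding the corollary. Concretely, I would instantiate the proposition with this choice of $\vec{b}'$ and then verify that, under this substitution, both its hypothesis and its conclusion degenerate into precisely the hypothesis and conclusion of the corollary. No new induction or case analysis is needed.

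First I would reduce the hypothesis. By the state version of Definition~\ref{def:choice}, $\choice{}(s,\vec{true})$ retains every index $i \in [m]$, since all of its entries are $true$; hence $\choice{}(s,\vec{true}) = \set{s[x_1]_i,\ldots,s[x_n]_i\ |\ i\in[m]}$ where $\Gamma = \set{x_1,\ldots,x_n}$. Comparing this with Definition~\ref{def:staterelequiv}, the assertion $\choice{}(s,\vec{true}) \eqvs{\Gamma} \R$ unfolds to the equality $\set{s[x_1]_i,\ldots,s[x_n]_i\ |\ i\in[m]} = \set{r[x_1],\ldots,r[x_n]\ |\ r\in\R}$, which is exactly the statement $s \eqvs{\Gamma} \R$. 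Thus the premise of the corollary coincides with the premise of the proposition under this instantiation.

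Next I would reduce the conclusion. Componentwise $\vec{b} \wedge \vec{true} = \vec{b}$, so $\choice{}(s,\vec{b}\wedge\vec{true}) = \choice{}(s,\vec{b})$. On the right-hand side, the relation attached to the selecting vector becomes $\R_{\vec{true}} = \R$, since the relation equivalent to $\choice{}(s,\vec{true})$ is, by the paragraph above, just the relation $\R$ with $s \eqvs{\Gamma} \R$. Hence the proposition's conclusion $\choice{}(s,\vec{b}\wedge\vec{true}) \eqvs{\Gamma} \sigma_b \R_{\vec{true}}$ is literally $\choice{}(s,\vec{b}) \eqvs{\Gamma} \sigma_b \R$, which is the assertion of the corollary. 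Since the hypotheses and conclusion match after substitution, the corollary follows at once from Proposition~\ref{prop:choicefilter}.

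The only point that requires any care—and the closest thing to an obstacle in an otherwise mechanical argument—is the bookkeeping around the subscripted notation $\R_{\vec{b}'}$ appearing in the proposition: one must confirm that filtering through the all-true vector leaves the underlying relation unchanged, so that $\sigma_b \R_{\vec{true}}$ really is $\sigma_b \R$. Once this is checked, everything reduces to a direct substitution of $\vec{b}' = \vec{true}$, and no further work is needed.
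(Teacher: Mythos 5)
Your proposal is correct and is exactly the paper's own argument: the corollary is obtained by instantiating Proposition~\ref{prop:choicefilter} with $\vec{b}' = \vec{true}$, under which the hypothesis collapses to $s \eqvs{\Gamma} \R$ and the conclusion to $\choice{}(s,\vec{b}) \eqvs{\Gamma} \sigma_b \R$. Your extra remark about the subscripted notation $\R_{\vec{b}'}$ is a reasonable piece of bookkeeping (that subscript in the proposition is informal and disappears under the all-true instantiation), but it does not change the substance.
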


In the following, to avoid ambiguity, we denote semantics of Datalog as $\sempardl{\cdot}$, and semantics of SecreC as $\semparsc{\cdot}$. We will heavily use the properties of relational algebra listed in Proposition~\ref{prop:relalgebra}. 

\begin{lemma}\label{lm:filter}
Let $q$ be a ground PrivaLog formula that does not contain IDB or EDB predicates. Let $\Gamma$ be a schema satisfying $\Gamma \supseteq \vars(q)$ Then for any relation set $\R$, we have $\sempardl{q}_{\Gamma}(\R) = \sigma_{\sempardl{q}}(\univ_{\Gamma})$.
\end{lemma}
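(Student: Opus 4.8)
The plan is to prove the equality by structural induction on the formula $q$. The key observation is that, because $q$ contains neither IDB/EDB predicate atoms nor existential quantifiers, it is built solely from the constructors $\tm{true}$, $\tm{false}$, comparisons $t_1 \odot t_2$, and the logical connectives $\lpand$, $\lpor$, $\lpneg$. For each of these shapes the semantics of Figure~\ref{fig:semantics} never consults the relation argument $\R$, so the induction carries through uniformly for \emph{every} $\R$; the hypothesis $\Gamma \supseteq \vars(q)$ guarantees that every selection $\sigma_{\cdot}(\univ_{\Gamma})$ and every relational operation below is well-defined, since $\schema(\univ_{\Gamma}) = \Gamma$.

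For the base cases I would argue directly from the definitions. For $q = \tm{true}$ we have $\sempardl{\tm{true}}_{\Gamma}(\R) = \univ_{\Gamma}$, and since the row-level semantics of Figure~\ref{fig:boolsemantics} gives $\sempardl{\tm{true}}(r) = true$ for every $r$, the filter $\sigma_{\sempardl{\tm{true}}}(\univ_{\Gamma})$ retains all of $\univ_{\Gamma}$; the case $q = \tm{false}$ is symmetric, both sides being $\emptyset_{\Gamma}$. For a comparison $q = t_1 \odot t_2$ the required identity is exactly the defining clause $\sempardl{t_1 \odot t_2}_{\Gamma}(\R) = \sigma_{\sempardl{t_1 \odot t_2}}(\univ_{\Gamma})$, so there is nothing to prove.

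The inductive cases reduce to the algebraic laws of Proposition~\ref{prop:relalgebra} combined with the row-level boolean semantics. For $q = q_1 \lpand q_2$ I would rewrite $\sempardl{q}_{\Gamma}(\R) = \sempardl{q_1}_{\Gamma}(\R) \cap \sempardl{q_2}_{\Gamma}(\R)$, apply the induction hypothesis to obtain $\sigma_{\sempardl{q_1}}(\univ_{\Gamma}) \cap \sigma_{\sempardl{q_2}}(\univ_{\Gamma})$, and then use the law $\sigma_{q_1 \wedge q_2}(\R) = \sigma_{q_1}(\R) \cap \sigma_{q_2}(\R)$ together with $\sempardl{q_1 \lpand q_2}(r) = \sempardl{q_1}(r) \wedge \sempardl{q_2}(r)$ to fold this back into $\sigma_{\sempardl{q}}(\univ_{\Gamma})$. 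The case $q = q_1 \lpor q_2$ is identical with $\cup$ and the disjunction law, and the case $q = \lpneg q'$ uses $\sempardl{\lpneg q'}_{\Gamma}(\R) = \univ_{\Gamma} \setminus \sempardl{q'}_{\Gamma}(\R)$ followed by the law $\sigma_{\neg q}(\R) = \R \setminus \sigma_{q}(\R)$ applied with $\R = \univ_{\Gamma}$.

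I expect no genuine obstacle here: the argument is essentially bookkeeping. The only point requiring care is keeping the two layers of semantics straight — the formula-level map $\sempardl{\cdot}_{\Gamma}$ that returns a relation versus the row-level map $\sempardl{\cdot}$ used as the predicate inside a selection — and checking that the boolean connectives line up with the relational-algebra laws so that the selection predicate reassembles correctly at each inductive step, particularly in the negation case where one must instantiate $\sigma_{\neg q}(\R) = \R \setminus \sigma_{q}(\R)$ at $\R = \univ_{\Gamma}$ rather than at an arbitrary $\R$.
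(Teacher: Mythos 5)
Your proof is correct and matches the paper's own argument essentially step for step: the same structural induction, the same base cases, and the same appeal to the selection laws of Proposition~\ref{prop:relalgebra} (including instantiating $\sigma_{\neg q}(\R) = \R \setminus \sigma_{q}(\R)$ at $\R = \univ_{\Gamma}$ for negation). The only detail the paper adds that you leave implicit is the closing remark that groundness rules out the unification/assignment form $q_1 \tm{=} q_2$, so the case analysis is exhaustive.
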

\begin{proof} We prove the statement by induction on size of $q$ and case distinction.
\begin{itemize}
\item Let $q = \tm{true}$.
      \begin{eqnarray*}
      \sempardl{\tm{true}}_{\Gamma}(\R) & = & \univ_{\Gamma}\\
      & = & \sigma_{true}(\univ_{\Gamma})\\
      & = & \sigma_{\sempardl{true}}(\univ_{\Gamma})
      \end{eqnarray*}
\item Let $q = \tm{false}$.
      \begin{eqnarray*}
      \sempardl{\tm{false}}_{\Gamma}(\R) & = & \emptyset\\
      & = & \sigma_{false}(\univ_{\Gamma})\\
      & = & \sigma_{\sempardl{false}}(\univ_{\Gamma})
      \end{eqnarray*}
\item Let $q = \lpneg q'$.
      \begin{eqnarray*}
      \sempardl{\lpneg q'}_{\Gamma}(\R) & = & \univ_{\Gamma} \setminus \sempardl{q'}_{\Gamma}(\R)\\
      & = & \univ_{\Gamma} \setminus \sigma_{\sempardl{q'}}(\univ_{\Gamma})\\
      & = & \sigma_{\neg\sempardl{q'}}(\univ_{\Gamma})\\
      & = & \sigma_{\sempardl{\lpneg q'}}(\univ_{\Gamma})
      \end{eqnarray*}
\item Let $q = q_1\ \lpand\ q_2$.
      \begin{eqnarray*}
      \sempardl{q_1\ \lpand\ q_2}_{\Gamma}(\R) & = & \sempardl{q_1}_{\Gamma}(\R) \cap \sempardl{q_2}_{\Gamma}(\R)\\
      & = & \sigma_{\sempardl{q_1}}(\univ_{\Gamma}) \cap \sigma_{\sempardl{q_2}}(\univ_{\Gamma})\\
      & = & \sigma_{\sempardl{q_1}\ \wedge \sempardl{q_2}}(\univ_{\Gamma})\\
      & = & \sigma_{\sempardl{q_1 \lpand q_2}}(\univ_{\Gamma})
      \end{eqnarray*}
\item Let $q = q_1\ \lpor\ q_2$.
      \begin{eqnarray*}
      \sempardl{q_1\ \lpor\ q_2}_{\Gamma}(\R) & = & \sempardl{q_1}_{\Gamma}(\R) \cup \sempardl{q_2}_{\Gamma}(\R)\\
      & = & \sigma_{\sempardl{q_1}}(\univ_{\Gamma}) \cup \sigma_{\sempardl{q_2}}(\univ_{\Gamma})\\
      & = & \sigma_{\sempardl{q_1} \vee \sempardl{q_2}}(\univ_{\Gamma})\\
      & = & \sigma_{\sempardl{q_1 \lpor q_2}}(\univ_{\Gamma})
      \end{eqnarray*}
\item $q = (e_1 \odot e_2)$ for $\oplus\in\set{\tm{<}, \tm{=<}, \tm{=:=}, \tm{=/=}, \tm{>=}, \tm{>}, \tm{=}}$
      \begin{eqnarray*}
      \sempardl{e_1\ \odot\ e_2}_{\Gamma}(\R) & = & \sigma_{\sempardl{e_1 \odot e_2}}(\univ_{\Gamma})
      \end{eqnarray*}
\end{itemize}
Since $q$ is ground, by assumption, it does not cover variable assignment. Hence, we have looked through all possible cases.
\end{proof}

We will prove equivalence of SecreC and PrivaLog programs on different granularity levels (expression, formula, clause, program). We start from expressions, showing how to increase the schema $\Gamma$ w.r.t. a state and a relation.

\begin{lemma}[equivalence of expression semantics]\label{lm:exprsem}
Let $e$ be an expression defined according to the syntax of Figure~\ref{fig:syntax}. Let $s \eqvs{\Gamma} \R$, $z \notin \Gamma$, and $\vars(e) \subseteq \Gamma$. We have
\[s[z \mapsto \semparsc{\stranse{e}}\ s]\ \eqvs{\Gamma \cup \set{z}}\ \sigma_{\sempardl{z =:= e}}\ (\R \times \univ_{z})\enspace\]
for $z \notin \schema(\R)$.
\end{lemma}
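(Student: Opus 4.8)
The plan is to reduce the claimed relation-level equivalence to a purely componentwise statement about the value of $e$ on individual tuples, and then assemble it using the hypothesis $s \eqvs{\Gamma} \R$ together with an explicit unfolding of the cross product and the filter on the right-hand side. Writing $\Gamma = \set{x_1,\ldots,x_n}$ and viewing the state $s$ as the indexed family of tuples $r_i := (s[x_1]_i,\ldots,s[x_n]_i)$ for $i \in [m]$, Definition~\ref{def:staterelequiv} says exactly that $\set{r_i \mid i \in [m]} = \set{(r[x_1],\ldots,r[x_n]) \mid r \in \R}$; this is the only place the equivalence hypothesis enters.

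First I would establish the key sublemma that the SecreC evaluation of the translated expression agrees, component by component, with the Datalog evaluation of $e$ on the corresponding tuple: $(\semparsc{\stranse{e}}\ s)_i = \sempardl{e}\ r_i$ for every $i \in [m]$. This is proved by structural induction on $e$, using the interpretation isomorphism $\psi \eqvs{} \phi$ (Definition~\ref{def:scdlequiv}). In the base cases, a variable $x \in \vars(e) \subseteq \Gamma$ gives $(\semparsc{x}\ s)_i = s[x]_i = r_i[x] = \sempardl{x}\ r_i$, while a constant $c$ gives $\psi(c) = \phi(c)$ after accounting for the SIMD broadcast of $c$ across the $m$ positions. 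In the inductive step for $e_1 \oplus e_2$ (and likewise for $\lpneg$), the operator $\stransop{\oplus}$ acts componentwise, so $(\semparsc{\stranse{e_1 \oplus e_2}}\ s)_i = \psi(\stransop{\oplus})\big((\semparsc{\stranse{e_1}}\ s)_i, (\semparsc{\stranse{e_2}}\ s)_i\big)$, which by the induction hypothesis and clause~3 of Definition~\ref{def:scdlequiv} equals $\phi(\oplus)(\sempardl{e_1}\ r_i, \sempardl{e_2}\ r_i) = \sempardl{e_1 \oplus e_2}\ r_i$.

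With the sublemma in hand, I would unfold both sides into explicit tuple sets over $\Gamma \cup \set{z}$. On the left, the updated state $s[z \mapsto \semparsc{\stranse{e}}\ s]$ has tuple set $\set{(r_i, (\semparsc{\stranse{e}}\ s)_i) \mid i \in [m]} = \set{(r_i, \sempardl{e}\ r_i) \mid i \in [m]}$ by the sublemma. On the right, since $z \notin \schema(\R)$, the product $\R \times \univ_z$ ranges $z$ over all values with the $\Gamma$-attributes inherited from $\R$, and applying $\sigma_{\sempardl{z =:= e}}$ retains exactly those tuples whose $z$-component equals the value of $e$ on the $\Gamma$-part, giving tuple set $\set{(r[x_1],\ldots,r[x_n], \sempardl{e}\ r) \mid r \in \R}$. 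Both sides are thus obtained from a base tuple set over $\Gamma$ by adjoining the single functional column $\vec t \mapsto \sempardl{e}\ \vec t$, which is well-defined precisely because $\vars(e) \subseteq \Gamma$; since $s \eqvs{\Gamma} \R$ equates those two base sets, the two $z$-augmented sets coincide, and this is exactly the desired $\eqvs{\Gamma \cup \set{z}}$.

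I expect the main obstacle to be the careful bookkeeping in the sublemma around the vectorized semantics of SecreC: one must verify that every operator and constant appearing in $\stranse{e}$ acts componentwise (with constants broadcast across the $m$ positions) so that reading off the $i$-th component commutes with evaluation, and that the filter condition $\sempardl{z =:= e}$ on an augmented tuple really selects the $z$-value equal to $\sempardl{e}$ restricted to $\Gamma$. Once these componentwise facts are pinned down, the closing set-equality follows directly from the equivalence hypothesis and needs no further relational-algebra identities beyond the definitions of $\times$ and $\sigma$.
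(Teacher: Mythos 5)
Your proposal is correct and follows essentially the same route as the paper's proof: the paper likewise isolates the componentwise claim that $\semparsc{\stranse{e}}$ and $\sempardl{e}$ agree on every value tuple (its equation~(\ref{eq:exprsem})), proves it by structural induction on $e$ using $\psi \eqvs{} \phi$, and then unfolds both sides into tuple sets over $\Gamma \cup \set{z}$ --- using the filter $\sigma_{\sempardl{z =:= e}}$ to pin the $z$-column to $\sempardl{e}\ r$ --- before closing with the hypothesis $s \eqvs{\Gamma} \R$. Your extra care about the SIMD broadcast of constants and the componentwise action of operators is a small explicitness improvement over the paper, not a different argument.
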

\begin{proof}
Let $s \in \vecset{m}$. Let $s' := s[z \mapsto \semparsc{\stranse{e}}\ s]$, and let $\R' := \sigma_{\sempardl{z =:= e}}\ (\R \times \univ_{z})$. Let $\Gamma = \set{x_1,\ldots,x_n}$. Suppose that
\begin{equation}\label{eq:exprsem}
\forall (v_1,\ldots,v_n) \in \vals^n:\ \semparsc{\stranse{e}}\ (v_1,\ldots,v_n) = \sempardl{e}\ (v_1,\ldots,v_n)\enspace.
\end{equation}
We have
\begin{eqnarray*}
\set{s'[x_1]_i,\ldots,s'[x_n]_i, s'[z]\ |\ i \in [m]} & = & \set{s[x_1]_i,\ldots,s[x_n]_i, (\semparsc{\stranse{e}}\ s)_i\ | i \in [m]}\\
& = & \set{s[x_1]_i,\ldots,s[x_n]_i, \semparsc{\stranse{e}}\ (s[x_1]_i,\ldots,s[x_n]_i)\ | i \in [m]}\\
\set{r[x_1],\ldots,r[x_n], r[z]\ |\ r \in \R'} & = & \set{r[x_1],\ldots,r[x_n], r[z]\ |\ \sempardl{z =:= e}\ r = true, r \in \R \times \univ_{z}}\\
 & = & \set{r[x_1],\ldots,r[x_n], r[z]\ |\ \sempardl{e}\ r = r[z], r \in \R \times \univ_{z}}\\
 & = & \set{r[x_1],\ldots,r[x_n], \sempardl{e}\ (r[x_1],\ldots,r[x_n])\ |\ r \in \R \times \univ_{z}}\\
 & = & \set{r[x_1],\ldots,r[x_n], \sempardl{e}\ (r[x_1],\ldots,r[x_n])\ |\ r \in \R}\enspace,
\end{eqnarray*}
and since $s \eqvs{\Gamma} \R$ these quantities are equal assuming that (\ref{eq:exprsem}) holds. It remains to prove (\ref{eq:exprsem}) for any expression $e$. We implicitly assume correspondence between PrivaLog and SecreC arithmetic and boolean operators (although formally hey are using different syntax). We prove the statement by induction on size of expression $e$.
\begin{eqnarray*}
\semparsc{\stranse{\lpneg e}}\ v & = & \semparsc{!\stranse{e}}\ v = \neg\ (\semparsc{\stranse{e}}\ v)\\
& = & \neg (\sempardl{e}\ v) = \sempardl{\lpneg e}\ v\\
\semparsc{\stranse{e_1 \oplus e_2}}\ v & = & \semparsc{\stranse{e_1}\ \stransop{\oplus}\ \stranse{e_2}}\ v\\
& = & \stransop{\oplus}(\semparsc{\stranse{e_1}}\ v, \semparsc{\stranse{e_2}}\ v)\\
& = & \oplus(\sempardl{e_1}\ v, \sempardl{e_2}\ v)= \sempardl{e_1\ \oplus\ e_2}\ v\\
& & \text{ for }\oplus\in\set{\tm{+},\tm{-},\tm{*},\tm{/},\lppow,\lpand,\lpor}\\
\semparsc{\stranse{e_1 \odot e_2}}\ v & = & \semparsc{\stranse{e_1}\ \stransop{\odot}\ \stranse{e_2}}\ v\\
& = & \stransop{\odot}(\semparsc{\stranse{e_1}}\ v, \semparsc{\stranse{e_2}}\ v)\\
& = & \odot(\sempardl{e_1}\ v, \sempardl{e_2}\ v)= \sempardl{e_1\ \odot\ e_2}\ v\\
& & \text{ for }\odot\in\set{\tm{<},\tm{=<},\tm{=},\tm{>=},\tm{>}}\\
\semparsc{\stranse{c_t}}\ v & = & c_t = \sempardl{c_t}\ v \text{ for a constant }c_t\\
\semparsc{\stranse{x_i}}\ (v_1,\ldots,v_n) & = & v_i = \sempardl{x}_i\ (v_1,\ldots,v_n)\text{ for a variable }x_i
\end{eqnarray*}
\end{proof}

For a formula, we want that the set of satisfiable relations of a PrivaLog rule is the same as the set of relations computed by the corresponding SecreC function. An important detail here is that each relation computed by the SecreC is accompanied by a satisfiability bit $b$, while in PrivaLog semantics such solutions are discarded. Hence, we prove that the sets of relations will be the same for those solutions $y_i$ where $b_i = 1$, and we do not care of solutions for which $b_i = 0$. Lemma~\ref{lm:formulasem} says that this property is maintained for each subsequence $q_1,\ldots,q_k$ of the PrivaLog rule body $q_1 \lpand \ldots \lpand q_m$, where $k \leq m$.

\begin{lemma}[equivalence of formula semantics w.r.t. input]\label{lm:formulasem}
\noindent Let $q = q_1\ \lpand \ldots \lpand q_m$ be a $\mu$-PrivaLog formula.

\noindent Let $\Gamma = \vars(q)$ be the variables of $q$.

\noindent Let $X = \Gamma \setminus \fvars(q)$, where $\fvars(q) = \set{x\ |\ x\ :\ \tm{f}\text{ occurs in }q}$.

\noindent Let $\R_X$ be a relation such that $\schema(\R_X) = X$.

\noindent Let $\R$ and $\R'$ be sets of relations such that, for all $k \in [m]$:
    \begin{itemize}
    \item $\R'_k[i] = \tm{T}_k[i]$;
    \item $\R'_1 \times \cdots \R'_m = \R_{j_1} \times \cdots \times \R_{j_m}$ where $q_k = p_{j_k}(\ldots)$ for all $k \in [m]$.
    \end{itemize}
\noindent Let $s_0$ be a state such that $s_0 \eqvs{X\cup \bigcup_{i=1}^m \set{\tm{T}_{i}[1],\ldots,\tm{T}_{i}[\arity(\R'_i)]}} (\R_X \times \prod_{i=1}^m \R'_i)$.

\noindent Let $\R^0 = \R_{X} \times \univ_{\Gamma \setminus X}$.

\noindent Let $\R^k = \sempardl{q_k}_{\Gamma}(\R)$, and let $s = \semparsc{\stransf{1, q_1}, \ldots, \stransf{m, q_m}}\ s_0$.

\noindent We have:
\[\choice{}(s, s[\tm{b}_1]\wedge \cdots \wedge s[\tm{b}_m]) \eqvs{\Gamma} (\R^0 \cap \R^1 \cap \cdots \cap \R^m)\enspace.\]
\end{lemma}
\begin{proof}
By induction on $k$, we prove a stronger statement. Let $s_k = \semparsc{\stransf{1, q_1}, \ldots, \stransf{k, q_k}}\ s_0$.  Let $\Gamma_0 = X$, and let $\Gamma_k := X \cup \vars(q_1\ \lpand \ldots\ \lpand q_k)$. For all $k \leq m$:
\[
\choice{}(s_k, s_k[\tm{b}_1] \wedge \cdots \wedge s_k[\tm{b}_k]) \eqvs{\Gamma_k} (\R^0 \cap \R^1 \cap \cdots \cap \R^k)\enspace.
\]
In the following, we will use notation $\mathcal{C}_k := \choice{}(s_k, s_k[\tm{b}_1] \wedge \cdots \wedge s_k[\tm{b}_k])$.

\noindent\textbf{Base:} We have $k = 0$, and $s = s_0$. By assumption, $s_0 \eqvs{\Gamma_0} \R_0$. By Corollary~\ref{cor:choicefilter},
    \begin{eqnarray*}
    \mathcal{C}_0 & = & \choice{}(s, true)\\
    & = & \choice{}(s_0, true)\\
    & \eqvs{\Gamma_0} & \sigma_{true}(\R^0) = \R^0\enspace.
    \end{eqnarray*}

\noindent\textbf{Step:} Let $k > 0$. Since each variable $\tm{b}_i$ is only assigned a value once in a function, we have $s_k[\tm{b}_i] = s_{k-1}[\tm{b}_i]$ for $i < k$. Hence,
\[\mathcal{C}_k = \choice{}(s_k, s_{k-1}[\tm{b}_1] \wedge \cdots \wedge s_{k-1}[\tm{b}_{k-1}] \wedge s_k[\tm{b}_k])\enspace.\]
Further simplification depends on the form of $q_k$. We need to apply case distinction.

\begin{itemize}
\item Let $q_k = \lpneg q$.
      \begin{eqnarray*}
      \R^k & = & \sempardl{\lpneg q}_{\Gamma}(\R) = \univ_{\Gamma} \setminus  \sempardl{q}_{\Gamma}(\R)\\
      s_k  & = & \semparsc{\stransf{k, \lpneg q}}\ s_{k-1}\\
           & = & \semparsc{\stransf{k', q}\ \tm{;}\ \tm{b}_k\tm{ : }d\tm{ = !b}_{k'}\ \tm{;}}\ s_{k-1}
      \end{eqnarray*}
      By induction hypothesis, $\mathcal{C}_{k'} \eqvs{\Gamma_{k'}} (\R^0 \cap \R^1 \cap \cdots \cap \R^{k-1} \cap \R^{k'})$. If $q$ is a ground term, then $s_k[x] = s_{k'}[x]$ for all variables $x$ except $b_{k'}$, but the variable $b_{k'}$ will not be used anywhere else and is not a part of $\Gamma$. We have $b_k = !b_{k'}$, hence,
      \begin{eqnarray*}
      C_k & = & \choice{}(s_k, s_k[\tm{b}_1] \wedge \cdots \wedge s_k[\tm{b}_{k-1}] \wedge s_k[\tm{b}_{k}])\\
      & = & \choice{}(s_k, s_{k'}[\tm{b}_1] \wedge \cdots \wedge s_{k'}[\tm{b}_{k-1}] \wedge !s_{k'}[\tm{b}_{k'}])\\
      & \eqvs{\Gamma} & (\R^0 \cap \R^1 \cap \cdots \cap \R^{k-1} \cap (\univ_{\Gamma} \setminus \R^{k'}))\\
      & \eqvs{\Gamma} & (\R^0 \cap \R^1 \cap \cdots \cap \R^{k-1} \cap \R^k)
      \end{eqnarray*}

      Since $q$ does contain free variables, it does not introduce any new variable declarations.

\item Let $q_k$ be any other ground formula without IDB/EDB predicates.

      \begin{eqnarray*}
      \R^k & = & \sempardl{q_k}_{\Gamma}(\R)\\
      s_k  & = & \semparsc{\stransf{k, q_k}}\ s_{k-1}\\
           & = & \semparsc{\tm{b}_k\tm{ : }d\tm{ bool [] = }\stranse{q_k}\tm{;}}\ s_{k-1}
      \end{eqnarray*}

       No new variables are introduced by $\semparsc{q_k}$ except $\tm{b}_k$, which are not a part of $\Gamma_k$. Therefore:

      \begin{itemize}
      \item $\Gamma_k = \Gamma_{k-1}$;
      \item $s_k[x] = s_{k-1}[x]$ for all $x \in \Gamma_k$;
      \item $\sempardl{q_k}_{\Gamma}(\R) = \sigma_{\sempardl{q_k}}(\univ_{\Gamma})$ by Lemma~\ref{lm:filter}.
      \end{itemize}

Applying induction hypothesis and Proposition~\ref{prop:choicefilter},
      \begin{eqnarray*}
      \mathcal{C}_k & = & \choice{}(s_k, s_{k-1}[\tm{b}_1] \wedge \cdots \wedge s_{k-1}[\tm{b}_{k-1}] \wedge s_k[\tm{b}_k])\\
      & = & \choice{}(s_{k-1}, s_{k-1}[\tm{b}_1] \wedge \cdots \wedge s_{k-1}[\tm{b}_{k-1}] \wedge \semparsc{\stranse{q_k}}\ s_{k-1})\\
      & \eqvs{\Gamma_{k-1}=\Gamma_k} & \sigma_{\sempardl{q_k}} (\R^0 \cap \cdots \cap \R^{k-1})\\
      & = & \sigma_{\sempardl{q_k}} \univ_{\Gamma} \cap (\R^0 \cap \cdots \cap \R^{k-1})\\
      & = & \R^k \cap (\R^0 \cap \cdots \cap \R^{k-1})\\
      & = & \R^0 \cap \cdots \cap \R^k\enspace.
      \end{eqnarray*}

\item Let $q_k = ((z : \tm{f})\ \tm{=}\ e)$.
      \begin{eqnarray*}
      \R^k & = & \sempardl{z = e}_{\Gamma}(\R)\\
          & = & \sigma_{\sempardl{z =:= e}} \univ_{\Gamma}\\
      s_k & = & \semparsc{\stransf{k,z\ \tm{=}\ e}}\ s_{k-1}\\
          & = & \semparsc{z\ \tm{:}\ d\ t\tm{=}\stranse{e}\tm{; b}_k\tm{:public bool [] = true;}}\ s_{k-1}
      \end{eqnarray*}

      In $\mu$-PrivaLog, the RHS of an assignment does not contain variables with label \tm{f}. Hence, each variable $x \in \vars(e)$ either has occurred before in $q_1,\ldots,q_{k-1}$, or $x \in X$, so $\vars(e) \subseteq \Gamma_{k-1}$. On the other hand, the label $\tm{f}$ of $z$ means that $z$ occurs first time in $q$, and by assumption $z \notin X$, so initialization of $z$ is a valid operation in SecreC. By induction hypothesis, $\choice{}(s_{k-1}, s_{k-1}[\tm{b}_1] \wedge \cdots \wedge s_{k-1}[\tm{b}_{k-1}]) \eqvs{\Gamma_{k-1}} (\R^0 \cap \cdots \cap \R^{k-1})$, so we can apply Lemma~\ref{lm:exprsem}. 
      \begin{eqnarray*}
      \mathcal{C}_k & = & \choice{}(s_k, s_{k-1}[\tm{b}_1] \wedge \cdots \wedge s_{k-1}[\tm{b}_{k-1}] \wedge s_k[\tm{b}_k])\\
      & = & \choice{}(s_k, s_{k-1}[\tm{b}_1] \wedge \cdots \wedge s_{k-1}[\tm{b}_{k-1}])\\
      & = & \choice{}(s_{k-1}[z \mapsto \semparsc{e}\ s_{k-1}], s_{k-1}[\tm{b}_1] \wedge \cdots \wedge s_{k-1}[\tm{b}_{k-1}])\\
      & = & \choice{}(s_{k-1}, s_{k-1}[\tm{b}_1] \wedge \cdots \wedge s_{k-1}[\tm{b}_{k-1}])[z \mapsto \semparsc{e}\ s_{k-1}]\\
      & \eqvs{\Gamma_{k-1} \cup \set{z} = \Gamma_k} & \sigma_{\sempardl{z=:=e}} (\R^0 \cap \cdots \cap \R^{k-1})\\
      & = & \sigma_{\sempardl{z=:=e}}\ \univ_{\Gamma} \cap (\R^0 \cap \cdots \cap \R^{k-1})\\
      & = & \R^k \cap (\R^0 \cap \cdots \cap \R^{k-1}) = \R^0 \cap \cdots \cap \R^k\enspace.
      \end{eqnarray*}

\item Let $q_k = p_j(z_1,\ldots,z_n)$.
      \begin{eqnarray*}
      \R^k & = & \sempardl{p_j(z_1,\ldots,z_n)}_{\Gamma}(\R) = \pi_{\Gamma}(\sigma_{\sempardl{z_1 =:= \R_j[1] \lpand \ldots \lpand z_n =:= \R_j[n]}}( \R_j \times \univ_\Gamma))\\
      s_k  & = & \semparsc{\stransf{k, p_j(z_1,\ldots,z_n)}}\ s_{k-1}\\
           & = & \semparsc{\stransf{k_1, z_1\tm{=}\ \tm{T}_k[1]\ :\ \tm{b}}\ \tm{;} \ldots \tm{;}\ \stransf{k_n, z_n\tm{=}\ \tm{T}_k[n]\ :\ \tm{b}}\ \tm{;}\ \tm{b}_k\tm{ : }d\tm{ = b}_{k_1}\tm{ \& }\cdots\tm{ \& b}_{k_n}\tm{;}}\ s_{k-1}
      \end{eqnarray*}
      Let $q_{k_i} := (z_i\tm{=}\ \tm{T}_k[i]\ :\ \tm{b})$ for $i \in [n]$. Consider the formula $q' := q_1\ \lpand \ldots\ \lpand q_{k-1}\ \lpand q_{k_1} \lpand \ldots \lpand q_{k_n}$. The variables $X' := \set{\tm{T}_k[i]\ | i \in [n]}$ occur for the first time and are \emph{not} a part of $\Gamma$. We have $\vars(q') = \Gamma \cup X'$, and $\Gamma\setminus\fvars(q') = X \cup X'$. The idea is to add $X'$ to $X$ and treat them as inputs.

Let $s' = \semparsc{\stransf{q'}}\ s_0$. By assumption, $s_0 \eqvs{X\cup \bigcup_{i=1}^m \set{\tm{T}_{i}[1],\ldots,\tm{T}_{i}[\arity(\R'_i)]}} (\R_X \times \prod_{i=1}^m \R'_i)$. By reordering variables, we can rewrite this as $s_0 \eqvs{(X\cup X') \cup \bigcup_{k\neq i=1}^m \set{\tm{T}_{i}[1],\ldots,\tm{T}_{i}[\arity(\R'_i)]}} (\R_X \times \R'_k) \times \prod_{k\neq i=1}^m \R'_i$, and can apply induction hypothesis to $q'$ with $\R_{X \cup X'} := \R_X \times \R'_k$. While the formula $q'$ is longer than $q$, we can still apply induction since the newly introduced expressions have a simpler structure for which we already have the proof. We get
      \begin{eqnarray*}
      \mathcal{C}'_k & := & \choice{}(s', s_{k-1}[\tm{b}_1] \wedge \cdots \wedge s_{k-1}[\tm{b}_{k-1}] \wedge s'[\tm{b}_{k_1}] \wedge \cdots \wedge s'[\tm{b}_{k_n}])\\
      & \eqvs{\Gamma_k} & \sigma_{\sempardl{z_1 = \tm{T}_j[1] \wedge \cdots \wedge z_n = \tm{T}_j[n]}} ((\R_X \times \R'_k \times \univ_{\Gamma\setminus X}) \cap (\R^1 \times \univ_{X'}) \cap \cdots \cap (\R^{k-1}\times \univ_{X'}))\\
      & = & \sigma_{\sempardl{z_1 = \tm{T}_k[1] \wedge \cdots \wedge z_n = \tm{T}_k[n]}} ((\R^0 \times \R'_k) \cap (\R^1 \times \univ_{X'}) \cap \cdots \cap (\R^{k-1}\times \univ_{X'}))\\
      & = & \sigma_{\sempardl{z_1 = \tm{T}_k[1] \wedge \cdots \wedge z_n = \tm{T}_k[n]}} ((\R^0 \cap \R^1 \cap \cdots \cap \R^{k-1}) \times \R'_k)\enspace.
      \end{eqnarray*}
      On the other hand,
      \begin{eqnarray*}
      \R^k & = & \pi_{\Gamma}(\sigma_{\sempardl{z_1 =:= \R_j[1] \lpand \ldots \lpand z_n =:= \R_j[n]}}( \R_j \times \univ_\Gamma))\enspace,
      \end{eqnarray*}
      hence
      \begin{eqnarray*}
      \mathcal{C}'_k & \eqvs{\Gamma_k} & \sigma_{\sempardl{z_1 = \tm{T}_k[1] \wedge \cdots \wedge z_n = \tm{T}_k[n]}} ((\R^0 \cap \R^1 \cap \cdots \cap \R^{k-1}) \times \R'_k)\\
      & = & \sigma_{\sempardl{z_1 = \R_j[1] \wedge \cdots \wedge z_n = \R_j[n]}} ((\R^0 \cap \R^1 \cap \cdots \cap \R^{k-1}) \times \R_j)\\
      & = & \pi_{\Gamma}(\sigma_{\sempardl{z_1 = \R_j[1] \wedge \cdots \wedge z_n = \R_j[n]}} ((\R^0 \cap \R^1 \cap \cdots \cap \R^{k-1}) \times \R_j))\\
      & = & \pi_{\Gamma}(\sigma_{\sempardl{z_1 = \R_j[1] \wedge \cdots \wedge z_n = \R_j[n]}} (\univ_{\Gamma} \times \R_j))\cap (\R^0 \cap \R^1 \cap \cdots \cap \R^{k-1})\\
      & = & \R^k \cap (\R^0 \cap \R^1 \cap \cdots \cap \R^{k-1})
      \end{eqnarray*}

      The only difference between states $s'$ and $s$ is that $\tm{b}_k$ is not defined in $s'$. Other than this, we have $s'[x] = s_k[x]$ for all $x \in \Gamma_k$. We have
      \begin{eqnarray*}
      \mathcal{C}_k & = & \choice{}(s_k, s_{k-1}[\tm{b}_1] \wedge \cdots \wedge s_{k-1}[\tm{b}_{k-1}] \wedge s_k[\tm{b}_{k}])\\
      & = & \choice{}(s_k, s_{k-1}[\tm{b}_1] \wedge \cdots \wedge s_{k-1}[\tm{b}_{k-1}] \wedge s_k[\tm{b}_{k1}] \wedge \cdots \wedge s_k[\tm{b}_{kn}])\\
      & = & \choice{}(s', s_{k-1}[\tm{b}_1] \wedge \cdots \wedge s_{k-1}[\tm{b}_{k-1}] \wedge s'[\tm{b}_{k1}] \wedge \cdots \wedge s'[\tm{b}_{kn}])\\
      & = & \mathcal{C}'_k = \R^0 \cap \cdots \cap \R^k\enspace.
      \end{eqnarray*}

\end{itemize}
\end{proof}

Lemma~\ref{lm:clausesem} says that, whenever a SecreC function $\goal_j$ produces a set of solutions $(\vec{w}_1,\ldots,\vec{w}_\ell)$ on inputs $v_1,\ldots,v_k$, the set of solutions for which $b_i = 1$ is exactly the same as the set of relations satisfying the clause $C_j(x_1,\ldots,x_n) \imp q$ in $\mu$-PrivaLog, where $(x_1 = v_1) \wedge \ldots \wedge (x_k = v_k)$.

\begin{lemma}\label{lm:clausesem}
Let $C_j := p_j(x_1\ :\ \tm{b},\ldots,x_k\ :\ \tm{b},\ \ y_1\ :\ \tm{f},\ldots,y_\ell\ :\ \tm{f})\ \imp\ q_1,\ldots,q_m$ be a $\mu$-PrivaLog clause defined according to the syntax of Figure~\ref{fig:muplsyntax}.


\noindent Let $\Gamma = \vars(q_1,\ldots,q_m)$.

\noindent Let $\R = (\R_1,\ldots,\R_n)$. 

\noindent Let $\tables : \vnames \to \vals^{n_1 \times n_2}$ be a function such that $\R_k = \set{p_k(\tables(p_k)[i,1],\ldots,\tables(p_k)[i,n_2])\ |\ i \in n_1}$.

\noindent Let $(\vec{b}, \vec{w}_1,\ldots,\vec{w}_\ell) = \semparsc{\stransc{j,C_j}}\ \funset{\tables}\ [v_1,\ldots,v_k]$, where $\funset{\tables}$ is the set of basic SecreC functions (Table~\ref{tbl:bb}).

\noindent We have
\begin{equation*}
\set{(w_{1 i},\ldots,w_{\ell i})\ |\ b_i = 1} = \pi_{y_1,\ldots,y_{\ell}}(\sigma_{x_1 = v_1,\ldots,x_k = v_k}\univ_{\Gamma} \cap \sempardl{q_1}_{\Gamma}(\R) \cap \cdots \cap \sempardl{q_m}_{\Gamma}(\R))\enspace.
\end{equation*}
\end{lemma}

\begin{proof}
Let $v_1,\ldots,v_k$ be the valuations of $x_1,\ldots,x_k$. Semantics of a SecreC function is defined as
\[\semparsc{\stransc{j,C}}\ \funset{\tables}\ [v_1,\ldots,v_k] := (\semparsc{S}\ \funset{\tables}\ \set{x_1 \mapsto v_1,\ldots,x_k \mapsto v_k})\ [\tm{b},y_1,\ldots,y_\ell]\enspace,\]
where $S$ is a sequence of SecreC statements defined as in Figure~\ref{fig:strans}. Let $s' = \set{x_1 \mapsto v_1,\ldots,x_k \mapsto v_k}$ be the initial program state, and let $X = \set{x_1,\ldots,x_k}$. We have $s' \eqvs{X} \sigma_{x_1 = v_1,\ldots,x_k = v_k}\univ_{\Gamma}$. The first line of $S$ uses blackbox functions {\join} and {\getTable} to compute the cross product $\R'_1 \times \cdots \R'_m = \R_{j_1} \times \cdots \times \R_{j_m}$, where $q_k = p_{j_k}(\ldots)$ for all $k \in [m]$. Technically, the inputs $s'[x_i]$ are scaled to the length of cross product table, so that operations on vectors would be correct. This results in a new state $s_0$ satisfying $s_0 \eqvs{X\cup \bigcup_{i=1}^m \set{\tm{T}_{i}[1],\ldots,\tm{T}_{i}[\arity(\R'_i)]}} (x_1,\ldots,x_k) \times \prod_{i=1}^m \R'_i$. Let $\R_X := \set{(x_1,\ldots,x_k)}$. The quantities $q$, $\Gamma$, $X$, $\R$, $\R_X$ and $s_0$ satisfy the preconditions of Lemma~\ref{lm:formulasem}, which says that
\[\choice{}(s, s[\tm{b}_1] \wedge \cdots s[\tm{b}_m]) \eqvs{\Gamma} (\R^0 \cap \cdots \cap \R^m)\enspace,\]
where $\R^k = \sempardl{q_k}_{\Gamma}(\R)$ for $k > 0$, and $\R^0 = \sigma_{x_1 = v_1,\ldots,x_k = v_k}\univ_{\Gamma}$. Since $\set{y_1,\ldots,y_{\ell}} \subseteq \Gamma$, we also have
\[\choice{}(s, s[\tm{b}_1] \wedge \cdots s[\tm{b}_m]) \eqvs{y_1,\ldots,y_{\ell}} (\R^0 \cap \cdots \cap \R^m)\enspace.\]

The last line of $S$ defines $s[\tm{b}] = s[\tm{b}_1] \wedge \cdots \wedge s[\tm{b}_m]$. We get
\[\choice{}((\vec{w}_1,\ldots,\vec{w}_{\ell}), \vec{b}) = \choice{}(s, s[\tm{b}])[y_1,\ldots,y_\ell] = \choice{}(s, s[\tm{b}_1] \wedge \cdots s[\tm{b}_m])[y_1,\ldots,y_\ell]\enspace,\]
which implies
\[\choice{}((\vec{w}_1,\ldots,\vec{w}_{\ell}), \vec{b}) \eqvs{y_1,\ldots,y_{\ell}} (\sigma_{x_1 = v_1,\ldots,x_k = v_k}\univ_{\Gamma} \cap \sempardl{q_1}_{\Gamma}(\R) \cap \cdots \cap \sempardl{q_m}_{\Gamma}(\R))\enspace,\]
which is equivalent to
\[\set{(w_{1 i},\ldots,w_{\ell i})\ |\ b_i = 1} = \pi_{y_1,\ldots,y_{\ell}}(\sigma_{x_1 = v_1,\ldots,x_k = v_k}\univ_{\Gamma} \cap \sempardl{q_1}_{\Gamma}(\R) \cap \cdots \cap \sempardl{q_m}_{\Gamma}(\R))\enspace.\]
\end{proof}

We are now ready to prove the main correctness theorem which says that a PrivaLog program $\mathbb{P}$ outputs the same set of relations as the corresponding SecreC program $\stransp{\mathbb{P}}$ when executed on the same inputs and the same database instance.
\begin{theorem}\label{thm:correctness}
Let $\mathbb{P}$ be a $\mu$-PrivaLog program. We have
\[\semparsc{\stransp{\mathbb{P}}} = \sempardl{\mathbb{P}}\enspace.\]
\end{theorem}
\begin{proof}
Let $\mathbb{P} = D_1\,\ldots\,D_m\ C_1\,\ldots\,C_n\ G$. Let $G = \tm{?-}p(a_1,\ldots,a_k,y_1,\ldots,y_\ell)$. Let $v_1,\ldots,v_k$ be some valuations of inputs for $\mathbb{P}$ and $\stransp{\mathbb{P}}$, and let $\tables$ be the database. Let $\R = (\bot,\R_1,\ldots,\R_m)$, where
\[\R_k = \set{p_k(\tables(p_k)[i,1],\ldots,\tables(p_k)[i,n_2])\ |\ i \in n_1}\]
for all $k \in \set{1,\ldots,m}$. Semantics of a $\mu$-PrivaLog program $\mathbb{P}$ is defined as
\begin{equation}
\sempardl{\mathbb{P}} = \bigcup_{j \in \set{1,\ldots,n}} \sempardl{C_j\set{x_{j1} \mapsto v_1,\ldots,x_{jk} \mapsto v_k}}(\R)[0]\enspace,
\end{equation}
where $x_{j1},\ldots,x_{jk}$ are the first $k$ arguments of $C_j$.

On the other hand, a transformed program is defined as
\begin{eqnarray*}
\stransp{C_1\,\ldots\,C_n\ G} & := & \stransc{1,C_1}\ldots\stransc{n,C_n}\ \stransg{G}\enspace.
\end{eqnarray*}

The proof follows the definition of transformation $\stransfunp$ given in Figure~\ref{fig:strans}. Execution of the main function $\stransg{G}$ of SecreC program starts from an empty program state $s_0 = \emptystate$. On the first line, the function \textsf{argument} reads values $v_i$ provided by the client for inputs $x_i$. This results in program state $s_1 = \set{x_1 \mapsto v_1,\ldots,x_k \mapsto v_k}$.

On the second line, SecreC functions $\goal_j = \semparsc{\stransc{j,C_j}}$ are called for $j \in \set{1,\ldots,n}$. Each function gets arguments $s_1[x_1] = v_1,\ldots,s_1[x_k] = v_k$, and returns some outputs $(\vec{b}_j, \vec{w}_{j1},\ldots,\vec{w}_{j\ell})$. Applying Lemma~\ref{lm:clausesem}, for each SecreC function constructed from a clause $p_j(z_1,\ldots,z_n) \imp q_{j1}\ \lpand \ldots\ \lpand q_{jm}$ with $\vars(q_{j1},\ldots,q_{jm}) = \Gamma$, we get
\begin{equation*}
\set{(w_{j1i},\ldots,w_{j\ell i})\ |\ b_i=1} = \pi_{y_1,\ldots,y_\ell}(\sigma_{(x_i = v_i)_{i \in [k]}}\univ_{\Gamma} \cap \sempardl{q_{j1}}_{\Gamma}(\R) \cap \cdots \cap \sempardl{q_{jm}}_{\Gamma}(\R))\enspace.
\end{equation*}

For each $i \in \set{1,\ldots,\ell}$, the function {\concat} constructs a state $s_2$ such that $s_2[y_i]$ is concatenation of the vectors $\vec{w}_{ji}$ for $j \in \set{1,\ldots,n}$, and $s_2[\tm{b}]$ is concatenation of $\vec{b}_j$ for $j \in \set{1,\ldots,n}$. The function {\unique} marks down duplicate solutions. The function {\shuffle} changes the order in which the solutions come, but keeps relations between $\vec{w}_{ji}$ and $\vec{b}_j$. This results in a state $s_2$ satisfying
\begin{equation*}
\set{(s_2[y_1]_i,\ldots, s_2[y_\ell]_i)\ |\ s_2[\tm{b}]_i} = \\ \bigcup_{j \in \set{1,\ldots,n}}\sigma_{(x_i = v_i)_{i \in [k]}}\univ_{\Gamma} \cap \sempardl{q_{j1}}_{\Gamma}(\R) \cap \cdots \cap \sempardl{q_{jm}}_{\Gamma}(\R)\enspace.
\end{equation*}

The next line declassifies private values $s_2[\tm{b}]$ by writing them to public $s_3[\tm{pb}]$. We have $s_3[\tm{pb}] = s_2[\tm{b}]$ and $s_3[y_i] = s_2[y_i]$ for all $i \in [\ell]$. The last line publishes to the client solutions $s_3[y_i]$ for which the declassified bit is $1$. That is, for each rule $j \in \set{1,\ldots,n}$, the client receives vectors that correspond to relation
\begin{eqnarray*}
R & = & \set{(s_3[y_1]_k,\ldots, s_3[y_\ell]_k)\ |\ s_3[\tm{pb}]_k}\\
& = & \bigcup_{j \in \set{1,\ldots,n}} \pi_{y_1,\ldots,y_\ell}(\sigma_{(x_i = v_i)_{i \in [k]}}\univ_{\Gamma} \cap \sempardl{q_{j1}}_{\Gamma}(\R) \cap \cdots \cap \sempardl{q_{jm}}_{\Gamma}(\R))\\
& = & \bigcup_{j \in \set{1,\ldots,n}} \pi_{y_1,\ldots,y_\ell}\sigma_{(x_i = v_i)_{i \in [k]}} \sempardl{C_j}(\R)\\
& = & \bigcup_{j \in \set{1,\ldots,n}} \sempardl{C_j\set{x_i \mapsto v_i}}(\R)[0]\enspace.
\end{eqnarray*}
The set of published answers is the same as for the $\mu$-PrivaLog program $\mathbb{P}$.
\end{proof}

\subsection{Security}
Let us now consider security of the resulting SecreC program. We need to estimate how much is leaked to the client who makes the query, as well as to Sharemind computing parties. Thanks to properties of SecreC, we know that the computing parties do not learn anything except what is deliberately declassified (which is the shuffled satisfiability bit $b$), and the client does not learn anything except what is deliberately published (which is the set of satisfying solutions). Informally, we are going to prove the following:
\begin{enumerate}
\item Computing parties learn only the size of the database, and the number of satisfying solutions for $\mathbb{P}$.
\item The client learns only the satisfying solutions for $\mathbb{P}$.
\end{enumerate}

Security of the resulting SecreC program is stated in Theorem~\ref{thm:security}. We base the security definition on the non-interference properties of two programs sharing common pre-agreed side-information~\cite{noninterf}. Let $x\ \sim\ x'$ denote that the variables $x$ and $x'$ have the same probability distribution.

\begin{lemma}\label{lm:security}
Let $\mathbb{P}$ be a PrivaLog program with input variables $[x_1,\ldots,x_k]$ and output variables $[y_1,\ldots,y_{\ell}]$. Let $s$ be the final state of the execution $\sempar{\stransp{\mathbb{P}}}\ [v_1,\ldots,v_k]\ \tables$, and $s'$ the final state of the execution $\sempar{\stransp{\mathbb{P}}}\ [v'_1,\ldots,v'_k]\ \tables'$, where $\tables$ and $\tables'$ have the same size. We have:
\begin{enumerate}
\item $s[pb] \sim s'[pb]$ for $|\sempar{\mathbb{P}}\ [v_1,\ldots,v_k]\ \tables| = |\sempar{\mathbb{P}}\ [v'_1,\ldots,v'_n]\ \tables'|$.
\item $s[y_1] \cdot s[pb],\ldots,s[y_\ell] \cdot s[pb] \sim s'[y_1] \cdot s'[pb],\ldots,s'[y_\ell] \cdot s'[pb]]$ for $\sempar{\mathbb{P}}\ [v_1,\ldots,v_k]\ \tables = \sempar{\mathbb{P}}\ [v'_1,\ldots,v'_n]\ \tables'$.
\end{enumerate}
\end{lemma}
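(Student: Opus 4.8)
The plan is to reduce both claims to the single observation that the only data crossing the private/public boundary are the declassified bit vector $s[pb]$ (seen by the computing parties) and the published vectors $\prot{filter}(s[pb], s[y_i])$ (seen by the client). By the non-interference guarantee of SecreC, nothing else leaks, so it suffices to pin down the probability distributions of $s[pb]$ and of the masked outputs. The whole argument rests on the ideal behaviour of $\shuffle$, namely that it applies a uniformly random permutation to the tuples $(b, y_1, \ldots, y_\ell)$ that is independent of their values, so that after deduplication and shuffling the final arrangement is uniform among all orderings consistent with the fixed length and the fixed number of satisfying bits.

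First I would fix two aggregate quantities. The vectors concatenated by $\concat$ are the outputs of the functions $p_j$ for $j \in \mathcal{J}$, each of length equal to the cross product of the EDB tables named in rule $j$; since these lengths are determined solely by the table cardinalities, and $\tables$ and $\tables'$ have the same size, the total length $N$ is the same in both executions, $N = N'$. Next, by Theorem~\ref{thm:correctness} the positions carrying $b = 1$ after $\unique$ are exactly the distinct satisfying solutions, so the number of set bits equals $K := |\sempardl{\mathbb{P}}\ [v_1,\ldots,v_k]\ \tables|$; deduplication forces each solution to be counted once, so $K$ is precisely the output-set size.

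For part~(1), deduplication leaves a length-$N$ bit vector with exactly $K$ ones, and the oblivious shuffle then rearranges it by a uniform permutation, so $s[pb]$ is uniform over the $\binom{N}{K}$ binary vectors with $K$ ones. Its distribution thus depends only on the pair $(N,K)$: since $N = N'$ by the size assumption and $K = K'$ by the hypothesis on solution counts, we conclude $s[pb] \sim s'[pb]$. For part~(2), the shuffle permutes the tuples $(b, y_1, \ldots, y_\ell)$ jointly, and at positions with $pb = 0$ the product $s[y_i] \cdot s[pb]$ vanishes irrespective of the underlying $y_i$; hence the joint observable $(s[y_1] \cdot s[pb], \ldots, s[y_\ell] \cdot s[pb])$ is a uniformly random arrangement of the $K$ solution tuples --- which, by Theorem~\ref{thm:correctness}, are exactly the elements of $\sempardl{\mathbb{P}}$, each occurring once --- interleaved with $N - K$ all-zero tuples. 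As a distribution this is determined by $N$ together with the solution set, so equality of the output sets and of $N$ yields the claimed indistinguishability.

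The main obstacle is making the shuffle step fully rigorous: one must show that a uniformly random permutation renders the joint distribution of the declassified bits, and of the masked outputs, a function solely of the multiset of shuffled entries, and in particular independent of which original database rows satisfied the rule and of the order in which candidate solutions were produced. This is where the ideal functionality of $\shuffle$ is invoked, together with the check that no private value other than $b$ is ever declassified, so that the positions of satisfying solutions are information-theoretically hidden beyond their count. A secondary point to verify carefully is that $\unique$ alters only the satisfiability bits and not the length $N$, and that the deduplication is performed before the shuffle, so that the count of ones coincides with $|\sempardl{\mathbb{P}}|$ rather than with the number of raw candidate matches; this is exactly what keeps the distribution of $s[pb]$ uniform over $\binom{N}{K}$ arrangements.
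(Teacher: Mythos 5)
Your proposal is correct and follows essentially the same route as the paper: both arguments observe that the only values crossing the private/public boundary are the declassified bit vector $pb$ and the filtered outputs, and both conclude from the uniformity of the oblivious $\shuffle$ that the distribution of $pb$ is determined by the total candidate count (fixed by the database size) and the number of satisfying solutions, while the masked outputs are determined additionally by the solution set itself, with Theorem~\ref{thm:correctness} supplying the link to $\sempardl{\mathbb{P}}$. The one detail worth noting is that the generated code applies $\unique$ \emph{after} $\shuffle$ (the call is $\unique(\shuffle(\concat(\cdots)))$), not before as your ``secondary point'' assumes; the paper's own proof quietly makes the same idealization, deferring the independence of the deduplication step's public behaviour from private inputs to the cited deduplication protocol.
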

\begin{proof}
We prove the statements one by one.
\begin{enumerate}
\item Let $|\sempar{\mathbb{P}}\ [v_1,\ldots,v_k]\ \tables| = |\sempar{\mathbb{P}}\ [v'_1,\ldots,v'_n]\ \tables'|$. By Theorem~\ref{thm:correctness}, we also have $|\sempar{\stransp{\mathbb{P}}}\ [v_1,\ldots,v_k]\ \tables| = |\sempar{\stransp{\mathbb{P}}}\ [v'_1,\ldots,v'_n]\ \tables'|$. The final output is obtained after leaving only such solutions $s[y_i]_k$ that $s[b]_k = 1$. The number of such solutions is the same $m$ for $s$ and $s'$.

On the other hand, consider the value $s[pb]$. This vector is obtained after shuffling $s[b]$. The function $\shuffle$ of SecreC performs a uniform shuffle, so we have $s[pb] \sim \shuffle([\overbrace{1,\ldots,1}^{m},0,\ldots,0])$. The total number of elements in $s[pb]$ depends only on the database size, as in $\stransp{\mathbb{P}}$ we have not performed any branching based on private data. The database size is the same for $s$ and $s'$, hence, we have $s[pb] \sim \shuffle([\overbrace{1,\ldots,1}^{m},0,\ldots,0]) \sim s'[pb]$.

\item Let $\sempar{\mathbb{P}}\ [v_1,\ldots,v_k]\ \tables = \sempar{\mathbb{P}}\ [v'_1,\ldots,v'_n]\ \tables'$. By Theorem~\ref{thm:correctness}, we also have $\sempar{\stransp{\mathbb{P}}}\ [v_1,\ldots,v_k]\ \tables = \sempar{\stransp{\mathbb{P}}}\ [v'_1,\ldots,v'_n]\ \tables'$. The final output is obtained after leaving only such solutions $s[y_i]_k$ that $s[b]_k = 1$. The set of such solutions is the same $(y_{1k},\ldots,y_{\ell k})_{k \in [m]}$ for $s$ and $s'$.

On the other hand, consider the value $s[y_1] \cdot s[pb],\ldots,s[y_\ell] \cdot s[pb]$. Similarly to the previous point, since function {\shuffle} of SecreC performs a uniform shuffle, we have $s[y_i] \cdot s[pb] \sim \shuffle([(y_{ik})_{k \in [m]},0,\ldots,0]) \sim s'[y_i] \cdot s'[pb]$ for all $i \in [\ell]$. \qedhere
\end{enumerate}
\end{proof}

It suffices to show that non-interference on $pb$ is sufficient for privacy against computing parties, and non-interference on $y_1 \cdot pb_1,\ldots,y_n \cdot pb_n$ is sufficient for privacy against computing parties. Let $s\ \downarrow\ P$ denote the set of values of all variables that are visible to the party $P$.
\begin{theorem}[Security]\label{thm:security}
Let $\mathbb{P}$ be a PrivaLog program with input variables $[x_1,\ldots,x_k]$ and output variables $[y_1,\ldots,y_{\ell}]$. Let $s$ be the final state of the execution $\sempar{\stransp{\mathbb{P}}}\ [v_1,\ldots,v_k]\ \tables$, and $s'$ the final state of the execution $\sempar{\stransp{\mathbb{P}}}\ [v'_1,\ldots,v'_k]\ \tables'$. After executing $\stransp{\mathbb{P}}$ on Sharemind with computing parties $S_1, S_2, S_3$ and a client party $C$:
\begin{enumerate}
\item $s \downarrow\ S_i \sim s' \downarrow\ S_i$ for all $i \in \set{1,2,3}$, if
    \begin{itemize}
    \item $|\sempar{\mathbb{P}}\ [v_1,\ldots,v_n]\ \tables| = |\sempar{\mathbb{P}}\ [v'_1,\ldots,v'_n]\ \tables'|$;
    \item the values of public variables are the same in $v_i$ and $v'_i$, $\tables$ and $\tables'$.
    \end{itemize}
\item $s \downarrow\ C \sim s' \downarrow\ C$ for $\sempar{\mathbb{P}}\ [v_1,\ldots,v_n]\ \tables = \sempar{\mathbb{P}}\ [v'_1,\ldots,v'_n]\ \tables'$.
\end{enumerate}
\end{theorem}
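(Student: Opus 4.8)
The plan is to reduce each of the two statements to the corresponding part of Lemma~\ref{lm:security}, using the non-interference guarantee that SecreC provides on top of the underlying Sharemind secret-sharing scheme. The key structural fact about $\stransp{\mathbb{P}}$, already noted in Sec.~\ref{sec:secrec}, is that it performs no branching on private data: the sequence of ABB operations, the lengths of all vectors, and the pattern of network communication depend only on the database size and the public inputs. Consequently the only way private information can reach a party's view is through the two explicit disclosure points of the program, namely the single $\declassify$ producing $s[pb]$ and the final $\publish(\prot{filter}(pb, y_i))$ sent to the client. I would first make this precise by recalling that, by privacy of the three-party scheme against a single corrupted computing party, the shares held by any $S_i$ are distributed independently of the secrets they encode, so that $s \downarrow S_i$ is a randomized function of the public data together with the declassified value $s[pb]$ alone.

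For the first statement I would then argue as follows. By hypothesis the public inputs agree in $v_i$ and $v'_i$, and $\tables$, $\tables'$ agree on their public parts and have the same size; since control flow is data-independent, the public portion of $s$ and of $s'$, together with the joint distribution of the exchanged shares, is therefore identical across the two executions. The one remaining component of $S_i$'s view is $s[pb]$, and here I invoke Lemma~\ref{lm:security}(1): because $|\sempar{\mathbb{P}}\ [v_1,\ldots,v_k]\ \tables| = |\sempar{\mathbb{P}}\ [v'_1,\ldots,v'_k]\ \tables'|$, we obtain $s[pb] \sim s'[pb]$. Combining the identical public/share component with the matched distribution of $pb$ yields $s \downarrow S_i \sim s' \downarrow S_i$ for each $i \in \set{1,2,3}$.

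For the second statement, the client's view $s \downarrow C$ consists solely of the published vectors $\prot{filter}(pb, y_1),\ldots,\prot{filter}(pb, y_\ell)$, which select exactly those tuples $(y_1,\ldots,y_\ell)$ whose satisfiability bit equals $1$, presented in the uniformly random order fixed by the preceding $\shuffle$. This data is a fixed function of the masked outputs $s[y_1]\cdot s[pb],\ldots,s[y_\ell]\cdot s[pb]$: the masked vectors retain precisely the selected tuples and zero out the rest, so that filtering recovers the same multiset of published solutions. Hence it suffices to invoke Lemma~\ref{lm:security}(2), which under the hypothesis $\sempar{\mathbb{P}}\ [v_1,\ldots,v_k]\ \tables = \sempar{\mathbb{P}}\ [v'_1,\ldots,v'_k]\ \tables'$ gives $s[y_1]\cdot s[pb],\ldots,s[y_\ell]\cdot s[pb] \sim s'[y_1]\cdot s'[pb],\ldots,s'[y_\ell]\cdot s'[pb]$; post-composing with the filtering map then yields $s \downarrow C \sim s' \downarrow C$.

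The main obstacle I expect is the first, reductive step rather than either application of the lemma: one must argue carefully that a computing party's view genuinely decomposes into (identical) public data, (secret-independent) shares, and the lone declassified bit $pb$, with no residual leakage channel. This rests on two properties taken as given --- the privacy of the Sharemind secret-sharing scheme against one party, and the SecreC type-system guarantee that private values are never written to public memory and that control flow never depends on private data. The delicate point is to confirm that $\stransp{\mathbb{P}}$ contains exactly one $\declassify$ and that every vector length is determined by public information, so that simulability of the share transcript can be asserted uniformly for both executions. A minor additional check is the identification, in the second statement, of the client's filtered output with the masked products of the lemma, which holds because filtering and zero-masking carry the same information once the random shuffle has been applied.
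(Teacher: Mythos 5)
Your overall strategy matches the paper's: both parts are reduced to Lemma~\ref{lm:security}, using the SecreC non-interference guarantees to argue that a computing party's view reduces to public data plus the declassified $pb$, and that the client's view reduces to the filtered (equivalently, masked) outputs. The paper's first part is organized slightly differently --- it enumerates the sources of public variables (constants, public-typed assignments, declassifications) rather than invoking simulatability of the share transcript --- but the substance is the same, and your identification of the client's filtered output with the masked products $s[y_i]\cdot s[pb]$ is exactly how the paper closes part~2.

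There is, however, one concrete gap. You rest the first part on confirming that $\stransp{\mathbb{P}}$ contains \emph{exactly one} $\declassify$, and you flag this as the delicate point to check. That check would in fact fail: while the generated code itself has a single $\declassify$ (producing $pb$), the SecreC library routines it calls --- in particular the deduplication used in $\unique$ and the sorting/shuffling machinery --- internally contain further $\declassify$ instances, so additional public values do appear in the computing parties' views. The paper closes this by appealing to the published security proofs of those specific library functions (deduplication and count sort), which establish that the distribution of their internally declassified values is independent of the private inputs, so that $s[z] \sim s'[z]$ for every such variable $z$. Without that extra step your reduction of $s \downarrow S_i$ to ``public data plus $pb$'' is not justified; with it, your argument goes through and coincides with the paper's.
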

\begin{proof}
We prove the statements one by one.
\begin{enumerate}
\item The type system of SecreC ensures that only the variables that are labeled \tm{public} are visible to $S_i$. In $\stransp{\mathbb{P}}$, there are many more public variables than just $pb$. Let us list all of them, and show that the values of the other variables do not depend on private inputs.
\begin{itemize}
\item Public constants are a part of the program, and they are always the same for $s$ and $s'$.
\item There are more public variables that get their values from an assignments. Type system of SecreC ensures that only public data can be written into public variables. In $\stransp{\mathbb{P}}$, the public inputs have by assumption exactly the same values in $s$ and $s'$, and the private inputs are assigned private data type. The only way to convert a private variable to public is to use a special function {\declassify}, and the only public variable of $\stransp{\mathbb{P}}$ that gets its value in this way is $pb$. For $pb$, the claim follows directly from Lemma~\ref{lm:security}.
\item While there are no other {\declassify} calls in $\stransp{\mathbb{P}}$, the libraries of SecreC actually do contain some instances of {\declassify}. For the SecreC functions with {\declassify} that have been used in this paper (in particular, count sort~\cite{DBLP:conf/nordsec/BogdanovLT14}), non-interference has already been proven, and we have $s[z] \sim s'[z]$ for all such variables $z$.
\end{itemize}
\item In SecreC, the only function that delivers data to the client party $C$ is the function {\publish}. In $\stransp{\mathbb{P}}$, this function is not used in any of the imported libraries, and the only place where it occurs is the line $\tm{\publish(\filter(pb, }y_i\tm{))})_{i \in [\ell]}$. This opens to the client exactly the values $s[y_i]_k$ for which $s[b]_k = 1$, and there are no more values opened. These values comprise a subset of $s[y_1] \cdot s[pb],\ldots,s[y_\ell] \cdot s[pb]$. We have $s \downarrow C \subseteq s[y_1] \cdot s[pb],\ldots,s[y_\ell] \cdot s[pb]$, and the claim then follows directly from Lemma~\ref{lm:security}. \qedhere
\end{enumerate}
\end{proof}

\section{Implementation and Evaluation}\label{sec:eval}

We have implemented full PrivaLog$\to$SecreC transformation in Haskell language. The implementation is available in GitHub repository \url{https://github.com/fs191/secure-logic-programming}. The results of this paper correspond to branch \texttt{interactive}.

Haskell tool translates a PrivaLog program to a SecreC program, and an instance of Sharemind platform is needed to run the produced program. The platform is available at \url{https://sharemind.cyber.ee/sharemind-mpc/}. The Haskell tool is compatible with both the emulator and the licensed version. The benchmarks of this paper have been run on March 2019 release of Sharemind.

\subsection{Extensions and Optimizations}

In this section, we describe some extensions and optimizations that are not covered by the basic transformation.


\paragraph{Putting disjunctions back.}
During preprocessing, we get rid of all disjunctions, splitting a single rule $p \imp q_1 \lpor q_2$ into $p \imp q_1$ and $p \imp q_2$. This may lead to repeating code in different rules. Hence, after the type inference is complete and we have obtained a $\mu$-PrivaLog program, we start merging rules back. The rules of the form $p \imp q \lpand q_1$ and $p \imp q \lpand q_2$ can be merged into $p \imp q \lpand (q_1 \lpor q_2)$ only if $q_1$ and $q_2$ are both ground, so we do not introduce non-determinism into the rules, and $q_1 \lpor q_2$ can be treated similarly to a negation.

\paragraph{Operations on string data.} Since the only operation on strings that we are using so far is comparison, we compute (in a privacy-preserving way) CRC32 hash of each string, and operate on integers. We note that, while CRC32 hash is quite good at avoiding collisions, it still does not get rid of them completely. In critical applications, we may still need to compare strings elementwise.

\paragraph{Aggregations.}
So far, we have not mentioned aggregations in the syntax and semantics of PrivaLog, and we assumed that there are no high order predicates. Adding an aggregation is easy if it is applied in the end to the set of answers that satisfy the goal, since it requires just a single additional step in the resulting SecreC program. We could treat each intermediate aggregation as an output of a separate SecreC program, and use the result of aggregation as an EDB table in the embedding program. The implementation currently supports only a single aggregation in the goal.

\paragraph{User Interaction.}
In expert systems, the answer is often generated based on user's responses to certain yes/no questions. Formally, we assume that there is a special built-in predicate \texttt{query(X)} that shows to the user the question \texttt{X} and expects a yes/no response.  Since Sharemind does not support live interaction with the user, we split the program into two parts. First of all, the user locally gives answers to all possible questions that the system may ask. The answers comprise a vector of boolean values. When a Sharemind application is executed, these answers are uploaded by the user in a secret-shared manner, and treated as private inputs.

\paragraph{Making use of primary keys.}\label{sec:pk}
If an EDB predicate occurs in the same rule several times, then we need to include the corresponding table several times in the cross product. In some cases, this leads to unnecessary overhead, both in memory and in computation. For example, if we know that the ship name is unique, then a conjunction of literals \texttt{ship(Name,X1,Y1,\_,\_)} and \texttt{ship(Name,\_,\_,Type,Amount)} can be substituted with \texttt{ship(Name,X1,X2,Type,Amount)}. To discover such cases, we allow to mark one column of  an EDB predicate as primary key. Whenever there are two EDB predicates \texttt{p(PK,X1,\ldots,Xn)} and \texttt{p(PK,Y1,\ldots,Yn)} with the same primary key \texttt{PK} in a rule body, only the first occurrence is left behind, and the second one is substituted with \texttt{(Y1=X1) \lpand \ldots \lpand (Yn=Xn)}.

\paragraph{Discovering inconsistencies.}\label{sec:z3}
The unfolding process of Section~\ref{sec:unfold} generates a lot of new rules. These rules in general cannot be immediately evaluated, since they depend on private data that is not known in advance. However, in some cases, we may still discover inconsistencies even without having all the data. For example, if \texttt{X} is a private input, we do not know its exact value, but if the rule body contains conditions \texttt{X > 2} and \texttt{X < 1}, then we can immediately understand that it never evaluates to true. In our implementation, we collect all arithmetic expressions of a rule body and transform these to an input for Z3 solver~\cite{Z3solver}. Some expressions are not supported by the solver, and EDB calls have no meaning for Z3 as well, but as far as inconsistency is found in a subset of rule premises, the entire rule can be treated as false.

\paragraph{Public filters before private filters.}
While the resulting SecreC program formally does not do any filtering and just keeps track of a filter vector \tm{b} which is applied once in the end, in practice we nevertheless apply immediate filtering if the condition is public, which may greatly improve efficiency. Moreover, the formulae $q_i$ of PrivaLog rule body $q_1\ \lpand \ldots \lpand q_n$ that do not contain free variables may be reordered, and public comparisons pushed in front of private comparisons, so that the amount of potential solutions would be reduced as fast as possible.

\subsection{Evaluation}

Our solution is based on 3-party protocol set of Sharemind, which means that we needed to set up 3 servers that would run MPC protocols by exchanging messages in the network.
We benchmarked our implementation on three $2 \times$~Intel Xeon E5-2640 v3 $2.6$ GHz/$8$GT/$20$M servers, with $125$GB RAM running on a LAN throttled down to $100$Mbps, with $40$ms latency. In Table~\ref{tbl:bench}, we present results for simpler programs to give an impression of how large datasets our solution is capable to process so far. The \emph{compilation time} is the running time of the Haskell preprocessor that generates a SecreC program for Sharemind. The \emph{computation time} is the time of executing that program on Sharemind using particular data.

\paragraph{Ship arrival time (\texttt{ship\_mintime.pl})} This is the running example of Section~\ref{sec:example}, extended by a \tm{min} aggregation over time in the goal statement.
\begin{small}
\begin{verbatim}
?-min(arrival(_,portname : private str, cargotype : private int, Time),
      Time, MinTime).
\end{verbatim}
\end{small}

The size of table \texttt{port} is fixed to 5, and the ``input size'' denotes the size of the \texttt{ship} table. The compilation time does not depend on the private data, so it is the same for any number of inputs. The computation times grow linearly in the number of rows in a single table. Computing the minimum is a relatively expensive operation, which takes more time.

\paragraph{Ship delivered cargo (\texttt{ship\_sumcargo.pl})} Let us show how a similar example would work with a sum aggregation, which is much simpler in terms of secure MPC, as it does not require any network communication. Let us state the query as ``how much cargo of certain type arrives at a certain port within certain time?''. We need to slightly modify the rule for \texttt{arrival} and the goal.

\begin{small}
\begin{verbatim}
arrival(Ship,Port,CargoType,Amount,TimeLimit) :-
    ship(Ship,_,_,_,CargoType,Amount),
    suitable_port(Ship,Port),
    reachability_time(Ship,Port,Time),
    Time < TimeLimit.

?-sum(arrival(_,portname : priv str, cargotype : priv int,
              Amount, timelimit : priv int
             ), Amount, SumCargo).
\end{verbatim}
\end{small}

The time of computing the aggregation is now negligible. 

The most time is currently spent on uploading data to Sharemind platform. The main reason is that the current solution computes CRC32 hash of each string one by one, so for $10000$ rows there will be $10000$ rounds of network communication. There are no theoretical issues, and the only problem is that Sharemind does not have a ready API for computing CRC32 of multiple strings as SIMD operation.

\paragraph{Fibonacci numbers (\texttt{fib.pl})} To demonstrate the tool from another point of view, we will try to compile a simple computation of Fibonacci numbers. While computing such numbers in privacy-preserving way is unlikely useful in practice, it demonstrates well some capabilities of our translator.

\begin{small}
\begin{verbatim}
fib(0, 1).    fib(1, 1).

fib(N, F) :-
        N > 1,
        N1 is N-1,
        N2 is N-2,
        fib(N1, F1),
        fib(N2, F2),
        F is F1+F2.

?-fib(X,Y).
\end{verbatim}
\end{small}

This example does not use any EDB predicates at all. Instead, it evaluates Fibonacci number on a private index. The problem here is that we need to know an upper bound on the index in order to know how many Fibonacci numbers to compute. The ``input size'' parameter in this experiment denotes this upper bound.

While we know that computation of Fibonacci numbers is linear, it is not obvious from the rules of \texttt{fib.pl} since we even do not have the assumption that it is defined for natural numbers. Instead, the translator starts constructing facts using bottom-up strategy. In particular, it tries to fit all possible existing facts \texttt{fib} n place of \texttt{fib(N1, F1)} and \texttt{fib(N2, F2)}, and then uses to discovery of inconsistencies (described in Section~\ref{sec:z3}) to get rid of the false facts generated in this way. In this particular example, the size of the new facts does not grow due to simplification, and all generated facts are of the form \texttt{fib(N, F) :- N = A, F = B} for certain constants \texttt{A} and \texttt{B}. Since generation of each new fact is quadratic in the number of ground facts, and $n$ steps are needed to generate $n$ facts, the overall complexity of transformation is cubic, which can also be observed from the translation times of Table~\ref{tbl:bench}. Finally, since all generated facts have exactly the same structure, they are merged into one. For example, the rule that is able to compute the first $n=5$ Fibonacci numbers, as generated by the translator, looks like
\begin{small}
\begin{verbatim}
fib_bf(X_0, X_1) :-
   X_1 = 1 , (X_0 = 0 ; X_0 = 1) ;
   X_1 = 2 , X_0 = 2 ;
   X_1 = 3 , X_0 = 3 ;
   X_1 = 5 , X_0 = 4 .
\end{verbatim}
\end{small}
In order to get a SecreC program that is able to compute any Fibonacci number from $0$ to $n$, we will need to run the translator with $n$ iterations. Technically, the translation pre-computes all $n$ numbers, and the resulting SecreC program chooses one of them obliviously according to the private index. The execution time of SecreC code is thus linear w.r.t the number of choices.

\paragraph{Expert Systems (\texttt{medical.pl}, \texttt{os.pl})}
We have analyzed some simple expert systems written in logic programming languages that we discovered among public GitHub projects. In particular, we analyzed a program for determining operation system fault \texttt{os.pl}\footnote{\url{https://github.com/Anniepoo/prolog-examples/blob/master/expertsystem.pl}} and a simple medical system \footnote{\url{https://github.com/sjbushra/Medical-Diagnosis-system-using-Prolog/blob/master/medical-diagnosis.pl}}. These interactive systems make a decision based on user's responses to certain questions. A potential use case of such privacy-preserving applications would be when a client wants to get advice from an expert system located in an external server, and the client wants to keep their answers private.

We have adapted the syntax of these programs to PrivaLog. This means that we have replaced explicit interaction with the user (by means of predicates \texttt{read} and \texttt{write}) with the special predicate \texttt{query} which makes SecreC program expect the answers to these queries as a private input. Both programs follow a similar pattern, where a hypothesis (a particular disease or a system failure) follows from positive answers to certain questions. The PrivaLog programs look like follows:

\begin{small}
\begin{verbatim}
%medical.pl
symptom(fever) :-
    query('Does patient have a fever (y/n) ?').

symptom(rash) :-
    query('Does patient have a rash (y/n) ?').
.....
hypothesis(mumps) :-
    symptom(fever),
    symptom(swollen_glands).
.....
?-hypothesis(Disease).
\end{verbatim}
\end{small}

\begin{small}
\begin{verbatim}
%os.pl
problem(disc_format) :-
    query('Does the computer show error cannot format').

problem(boot_failure) :-
    query('Does the computer show boot failure').
.....
fault(motherboard) :-
    problem(long_beep),
    problem(short_beep).
.....
?-fault(Fault).
\end{verbatim}
\end{small}
The program \texttt{os.pl} has 18 queries, which are called 26 times. The program \texttt{medical.pl} has 11 queries, which are called 28 queries. From table~\ref{tbl:bench}, we see that the translation time is larger for \texttt{os.pl}. This is because the translator is trying to match each query with each call, so the time is proportional to $18\cdot{26}$ for \texttt{os.pl}, which is around $1.5$ times larger than $11\cdot{28}$ for \texttt{medical.pl}.

\begin{table*}[t]
\centering
\caption{Running times of different examples of logic programs}\label{tbl:bench}
\begin{footnotesize}
\begin{tabular}{| l | r | r | r | r | r | r | r |}
\hline
program & input & unfolding & translation & \multicolumn{4}{c|}{computation time (s)} \\
\cline{5-8}
 & size & strategy & time (s) & readDB & find      & remove     & aggregate \\
 &      &          &          &        & solutions & duplicates &           \\
\hline
upload ship and port data & 100 K & -- & 2510 & -- & -- & -- & -- \\
\hline
\texttt{ship\_mintime.pl} & 10    & any & $< 1$ & 2.8  & 32.3 & 31.2 & 10.2 \\
\texttt{ship\_mintime.pl} & 100   & any & $< 1$ & 2.9  & 33.7 & 52.7 & 12.4 \\
\texttt{ship\_mintime.pl} & 1000  & any & $< 1$ & 4.1  & 37.6 & 71.5 & 15.6 \\
\texttt{ship\_mintime.pl} & 10000 & any & $< 1$ & 51.1 & 92.8 & 118  & 26.9 \\
\hline
\texttt{ship\_sumcargo.pl} & 10    & any & $< 1$ & 2.8  & 34.2 & 29.6 & 0.28 \\
\texttt{ship\_sumcargo.pl} & 100   & any & $< 1$ & 2.9  & 34.9 & 41.0 & 0.28 \\
\texttt{ship\_sumcargo.pl} & 1000  & any & $< 1$ & 4.1  & 38.8 & 61.0 & 0.29 \\
\texttt{ship\_sumcargo.pl} & 10000 & any & $< 1$ & 50.5 & 95.5 & 93.3 & 0.30 \\
\hline
\texttt{fib.pl} & 1 & gr & 0.17 & -- & 3.7 & -- & -- \\
\texttt{fib.pl} & 2 & gr & 0.18 & -- & 3.7  & -- & -- \\
\texttt{fib.pl} & 3 & gr & 0.42 & -- & 8.4 & -- & -- \\
\texttt{fib.pl} & 4 & gr & 0.79  & -- & 9.5 & -- & -- \\
\texttt{fib.pl} & 5 & gr & 1.30  & -- & 10.3 & -- & -- \\
\texttt{fib.pl} & 6 & gr & 1.99  & -- & 11.2 & -- & -- \\
\texttt{fib.pl} & 7 & gr & 2.87 & -- & 12.5 & -- & -- \\
\texttt{fib.pl} & 8 & gr & 3.99 & -- & 12.6 & -- & -- \\
\texttt{fib.pl} & 9 & gr & 5.58 & -- & 13.2 & -- & -- \\
\texttt{fib.pl} & 10 & gr & 7.21  & -- & 13.9 & -- & -- \\
\texttt{fib.pl} & 15 & gr & 22.0  & -- & 16.3 & -- & -- \\
\texttt{fib.pl} & 20 & gr & 46.0  & -- & 20.6 & -- & -- \\
\texttt{fib.pl} & 100 & gr & 6000  & -- & 50.5 & -- & -- \\
\hline
\texttt{medical.pl} & 1  & bfs & 10.5 & -- & 16.2 & -- & -- \\
\hline
\texttt{os.pl} & 1  & bfs & 15.0 & -- & 18.9 & -- & -- \\
\hline
\end{tabular}
\end{footnotesize}
\end{table*}

\section{Conclusion}

In this paper, we have presented PrivaLog, which is a logic programming language with data privacy types. The language is accompanied with a translator to SecreC language, which allows to execute the program on Sharemind system using secure multi-party-computation. We have proven that the transformation preserves program semantics, and that the data labeled as ``private'' will remain confidential during the computation. We have provided a transformation tool that makes use of several optimizations, and evaluated it on some examples.

\bibliographystyle{plain}
\bibliography{references}

\appendix

\end{document}